\titleformat{\section}{\large\bfseries}{\thesection.}{0.5em}{}
\newcommand{\one}{\mbox{$1$}\hspace{-0.25em}{\rm l}}
\gdef\th@plain{\normalfont\itshape
  \def\@begintheorem##1##2{%
        \item[\hskip\labelsep \theorem@headerfont ##1\ ##2.]}%
  \def\@opargbegintheorem##1##2##3{%
        \item[\hskip\labelsep \theorem@headerfont ##1\ ##2\ (##3).]}}
\newtheorem{thm}{Theorem}[section]
\newtheorem{lem}[thm]{Lemma}
\newtheorem{rem}[thm]{Remark}
\newtheorem{defn}[thm]{Definition}
\newtheorem{prob}[thm]{Problem}
\newcommand{\qed}{\hspace*{\fill}$\Box$}
\newenvironment{proof}{\noindent{\it Proof.~}}{\qed}
\def\eqnarray{\stepcounter{equation}\let\@currentlabel=\theequation
\global\@eqnswtrue
\global\@eqcnt\z@\tabskip\@centering\let\\=\@eqncr
$$\halign to \displaywidth\bgroup\@eqnsel\hskip\@centering
  $\displaystyle\tabskip\z@{##}$&\global\@eqcnt\@ne 
  \hfil$\;{##}\;$\hfil
  &\global\@eqcnt\tw@ $\displaystyle\tabskip\z@{##}$\hfil 
   \tabskip\@centering&\llap{##}\tabskip\z@\cr}
\newif\if@borderstar
\def\bordermatrix{\@ifnextchar*{%
 \@borderstartrue\@bordermatrix@i}{\@borderstarfalse\@bordermatrix@i*}%
}
\def\@bordermatrix@i*{\@ifnextchar[{\@bordermatrix@ii}{\@bordermatrix@ii[()]}}
\def\@bordermatrix@ii[#1]#2{%
\begingroup
 \m@th\@tempdima8.75\p@\setbox\z@\vbox{%
 \def\cr{\crcr\noalign{\kern 2\p@\global\let\cr\endline }}%
 \ialign {$##$\hfil\kern 2\p@\kern\@tempdima & \thinspace %
  \hfil $##$\hfil && \quad\hfil $##$\hfil\crcr\omit\strut %
  \hfil\crcr\noalign{\kern -\baselineskip}#2\crcr\omit %
  \strut\cr}}%
 \setbox\tw@\vbox{\unvcopy\z@\global\setbox\@ne\lastbox}%
 \setbox\tw@\hbox{\unhbox\@ne\unskip\global\setbox\@ne\lastbox}%
 \setbox\tw@\hbox{%
  $\kern\wd\@ne\kern -\@tempdima\left\@firstoftwo#1%
  \if@borderstar\kern 2pt\else\kern -\wd\@ne\fi%
 \global\setbox\@ne\vbox{\box\@ne\if@borderstar\else\kern 2\p@\fi}%
 \vcenter{\if@borderstar\else\kern -\ht\@ne\fi%
  \unvbox\z@\kern -\if@borderstar2\fi\baselineskip}%
 \if@borderstar\kern-2\@tempdima\kern2\p@\else\,\fi\right\@secondoftwo#1 $%
 }\null \;\vbox{\kern\ht\@ne\box\tw@}%
\endgroup
}
\DeclareRobustCommand\widecheck[1]{{\mathpalette\@widecheck{#1}}}
\def\@widecheck#1#2{%
    \setbox\z@\hbox{\m@th$#1#2$}%
    \setbox\tw@\hbox{\m@th$#1%
       \widehat{%
          \vrule\@width\z@\@height\ht\z@
          \vrule\@height\z@\@width\wd\z@}$}%
    \dp\tw@-\ht\z@
    \@tempdima\ht\z@ \advance\@tempdima2\ht\tw@ \divide\@tempdima\thr@@
    \setbox\tw@\hbox{%
       \raise\@tempdima\hbox{\scalebox{1}[-1]{\lower\@tempdima\box
\tw@}}}%
    {\ooalign{\box\tw@ \cr \box\z@}}}
\newcommand{\ol}{\overline}
\newcommand{\vc}{\bm}
\newcommand{\sep}{\unskip, }
\newcommand{\dm}{\displaystyle}
\newcommand{\EE}{\mathsf{E}}
\newcommand{\PP}{\mathsf{P}}
\newcommand{\calC}{\mathcal{C}}
\newcommand{\bbK}{\mathbb{K}}
\newcommand{\bbN}{\mathbb{N}}
\newcommand{\bbZ}{\mathbb{Z}}
\renewcommand{\labelenumi}{(\roman{enumi})}
\long\def\@makefntext#1{%
  \parindent=0pt
  \noindent
  #1%
}
\renewcommand{\thefootnote}{\fnsymbol{footnote}}
\begin{document}
\thispagestyle{empty}

\hfill

\vspace{-10mm}

\begin{center}
{\large\bfseries
PureRank: A parameter-free recursive importance measure for network nodes
\par}

\vspace{3mm}

Hiroyuki Masuyama\begingroup
\renewcommand{\thefootnote}{}%
\footnote{%
Email address: masuyama@tmu.ac.jp.\\
Published version: \textit{Information Sciences} 751 (2026) 123507, available online 15 April 2026.\\
DOI: \url{https://doi.org/10.1016/j.ins.2026.123507}\\
Article history: received 11 July 2025; received in revised form 11 April 2026; accepted 11 April 2026.}%
\addtocounter{footnote}{-1}%
\endgroup

\vspace{1mm}

{\small\itshape
Graduate School of Management, Tokyo Metropolitan University, 1-1 Minami-Osawa, Hachioji, 192--0397, Tokyo, Japan
\par}
\end{center}

\bigskip
\medskip

\begin{center}
{\small
\textbf{Abstract}

\medskip

\begin{tabular}{p{0.85\textwidth}}
This study develops PureRank, a parameter-free importance measure for network nodes based on the recursive definition of importance (RDI). For any directed network, PureRank uniquely determines an importance score vector without user-specified parameters. PureRank can thus provide a neutral reference for parameter-dependent importance measures. PureRank is constructed in three steps: (i) nodes are classified into {\it recurrent}, {\it transient}, and {\it dangling} classes via strongly connected component decomposition; (ii) for each class, the local importance vector is obtained by choosing the parameters of the Katz equation on the class-restricted subnetwork according to the RDI principle; and (iii) the local importance vectors are aggregated into the PureRank vector. This modular design supports parallel and incremental computation while retaining a unified random-surfer interpretation. Numerical experiments on three SNAP networks show that PageRank has a computational advantage over PureRank except when the damping factor $d$ is close to one, and that the similarity of PageRank to PureRank depends on $d$ and the node classification. In the fully recurrent network, similarity increases monotonically with $d$ and reaches Kendall's $\tau_b=0.966$ and Pearson correlation coefficient $=1.000$ at $d=0.999$, whereas in the two transient-dominated networks, similarity varies nonmonotonically with $d$. PureRank is extended to multi-attribute networks.
\end{tabular}
}
\end{center}

\begin{center}
\begin{tabular}{p{0.90\textwidth}}
{\small
{\bf Keywords:} %
PageRank \sep
Centrality \sep
Node ranking \sep
Strongly connected component (SCC) \sep
Multi-attribute network \sep
Markov chain
%
%

\medskip

%
}
\end{tabular}

\end{center}

\section{Introduction: background and contributions}\label{sec:introduction}

Section~\ref{sec:introduction} is organized into three parts. We first describe our motivation and network model for recursive importance (centrality) measures. We next summarize state-of-the-art techniques and their limitations, and finally we outline PureRank and the main contributions.

\subsection{Motivation and network model}
This paper develops a {\it parameter-free} recursive importance (centrality) measure for nodes in directed networks with nonnegative link weights. Recursive importance measures are based on the recursive definition of importance (RDI), which captures the idea that a node should be regarded as important when it is supported by important neighbors, and that this support propagates along network links. Network importance measures, including recursive ones, have become standard tools across a wide range of applications, such as information retrieval, recommendation systems, social network analysis, and bioinformatics (see, e.g., Newman~\cite{Newm10}; for a survey of centrality measures, see Saxena and Iyengar~\cite{Saxe20}).

We consider a weighted directed network with $N$ nodes labeled $1,2,\dots,N$. Let $V=\{1,2,\dots,N\}$, called the {\it node set}. Let $w_{i,j}\ge 0$ denote the weight of the directed link from node $i$ to node $j$, where $w_{i,j}=0$ means that there is no directed link from node $i$ to node $j$. The original network is represented as $G:=(V,\vc{W})$, where $\vc{W}=(w_{i,j})_{i,j\in V}$ is referred to as the {\it weight matrix} (also known as the weighted adjacency matrix). For later use, let $w_i^{out}:=\sum_{j\in V}w_{i,j}$ denote the (weighted) out-degree of node $i$, and let $D:=\{i\in V:w_i^{out}=0\}$ denote the set of {\it dangling nodes}. Furthermore, let $\vc{P}:=(p_{i,j})_{i,j\in V}$ denote the {\it normalized weight matrix} such that
\begin{align}
p_{i,j} =
\left\{
\begin{array}{ll}
\displaystyle{w_{i,j} \over w_i^{out}}, & \mbox{if $i \not\in D$, i.e., $w_i^{out} > 0$,}
\\[1ex]
0, & \mbox{if $i \in D$, i.e., $w_i^{out} = 0$.}
\end{array}
\right.
\label{defn:P}
\end{align}
Eq.~(\ref{defn:P}) implies that $\vc{P}$ is a substochastic matrix whose rows corresponding to nodes in $D$ are zero, whereas each remaining row sums to one.

We introduce the auxiliary normalized network $\ol{G}:=(V,\vc{P})$ for later analysis. The network $\ol{G}$ has the same node set $V$ as $G$ and uses the normalized weight matrix $\vc{P}$ as its weight matrix. For each $S\subseteq V$, let $\ol{G}_S:=(S,\vc{P}_S)$ denote the restriction of $\ol{G}$ to $S$, where $\vc{P}_S:=(p_{i,j})_{i,j\in S}$ is the submatrix of $\vc{P}$ indexed by $S$. The original network $G=(V,\vc{W})$ remains the primary object of this paper, and we use $\ol{G}$ and $\ol{G}_S$ only to streamline later arguments formulated in terms of $\vc{P}$.

Throughout the paper, we follow the convention that bold lowercase Greek letters denote row vectors and bold lowercase English letters denote column vectors. We write the $i$th entry of a vector $\vc{v}$ as $[\vc{v}]_i$, and we use the same notation for row vectors. We also write the $(i,j)$th entry of a matrix $\vc{A}$ as $[\vc{A}]_{i,j}$. For example, $[\vc{W}]_{i,j}$ denotes the $(i,j)$th entry of the weight matrix $\vc{W}$, that is, $[\vc{W}]_{i,j}=w_{i,j}$. Furthermore, for any square matrix $\vc{A}$, $\rho(\vc{A})$ denotes its spectral radius.
Finally, for any finite set $S$, $\vc{e}_S$ denotes the $|S|\times 1$ vector of ones, with the subscript omitted when the underlying set is clear from the context, and $|S|$ denotes the cardinality of $S$.

\subsection{State-of-the-art techniques and limitations}

Seeley~\cite{Seel49} pioneered the RDI principle, which states ``A's popularity is a function of the `popularity' of those who chose him; and their popularity is a function of those who chose them, and so ad infinitum.'' Based on the RDI principle, Seeley~\cite{Seel49} proposed {\it Seeley centrality}, whose score vector $\vc{\sigma}:=(\sigma_j)_{j\in V}$ satisfies
\begin{align}
\sigma_j = \sum_{i \in V}\sigma_i p_{i,j},
\qquad j\in V.
\label{eq:Seeley_formula}
\end{align}
Eq.~(\ref{eq:Seeley_formula}) is written in vector-matrix form as
\begin{align}
\vc{\sigma} = \vc{\sigma}\vc{P}.
\label{eq:Seeley_formula_vec_mat}
\end{align}
We call (\ref{eq:Seeley_formula_vec_mat}) the {\it Seeley equation} on the network $G=(V,\vc{W})$. This equation can also be interpreted as the Seeley equation on the auxiliary normalized network $\ol{G}=(V,\vc{P})$.

Bonacich~\cite{Bona72} also proposed an RDI-based centrality, known as eigenvector centrality. When $\rho(\vc{W})>0$, the corresponding score vector $\vc{\eta}:=(\eta_j)_{j\in V}$ satisfies
\begin{align}
\eta_j = {1 \over \rho(\vc{W})} \sum_{i \in V}\eta_i w_{i,j},
\qquad j\in V.
\label{eq:EV_formula}
\end{align}
Eq.~(\ref{eq:EV_formula}) is written in vector-matrix form as
\begin{align*}
\vc{\eta} = {1 \over \rho(\vc{W})}\vc{\eta}\vc{W}.
\end{align*}

Seeley centrality and eigenvector centrality can exhibit two limitations on general directed networks: the {\it lack of uniqueness} and the {\it lack of completeness}. Both limitations are governed by the decomposition of the network into strongly connected components (SCCs), where an SCC is called closed if it has no outgoing links to any other SCC, and open otherwise. The lack of uniqueness depends on how the network is decomposed into SCCs, whereas the lack of completeness arises from the possibility that nodes in open SCCs receive a score of zero.

The lack of uniqueness means that the score vector is not uniquely determined after normalization. For Seeley centrality, this failure occurs when the network contains multiple closed SCCs, because the eigenvalue $1=\rho(\vc{P})$ of $\vc{P}$ is not simple. For eigenvector centrality, by contrast, multiple closed SCCs do not by themselves preclude uniqueness; a unique normalized nonnegative score vector can still exist provided that $\rho(\vc{W})$ is a simple eigenvalue of $\vc{W}$.

The lack of completeness means that some nodes receive a score of zero in every admissible score vector. For Seeley centrality, all nodes in open SCCs always receive a score of zero. By contrast, for eigenvector centrality, nodes in open SCCs can receive positive scores, depending on the left eigenspace associated with the dominant eigenvalue $\rho(\vc{W})$ of $\vc{W}$. The lack of completeness is especially problematic for networks with large open SCCs, such as the Web~\cite{Brod00}.

Katz centrality~\cite{Katz53} and PageRank~\cite{Brin98} address these two limitations by introducing free parameters, which allow each node in an arbitrary network to receive a unique positive score. Katz~\cite{Katz53} proposed {\it Katz centrality}, which uses two free parameters, called the {\it Katz parameters}: (i) the {\it damping factor} $d>0$ satisfying $d\rho(\vc{W})<1$, and (ii) the uniform {\it baseline score} $\beta\ge0$ assigned to each node. By definition, the {\it Katz equation} $(V,\vc{W},d,\beta)$ for the unknown row vector $\vc{\kappa}:=(\kappa_j)_{j\in V}$ is defined by
\begin{align}
\vc{\kappa} = d\vc{\kappa}\vc{W} + \beta\vc{e}_V^{\top}.
\label{defn:Katz}
\end{align}
According to this notation, (\ref{eq:Seeley_formula_vec_mat}) is interpreted as the Katz equation $(V,\vc{P},1,0)$.

The Katz equation $(V,\vc{W},d,\beta)$ has a unique positive solution whenever $d>0$ satisfies $d\rho(\vc{W})<1$ and $\beta>0$. This solution, called the {\it Katz centrality vector}, is given by
\begin{align}
\vc{\kappa} = \beta\vc{e}_V^{\top}(\vc{I}-d\vc{W})^{-1},
\label{eq:Katz-01}
\end{align}
where $\vc{I}$ denotes the identity matrix. We normalize the Katz centrality vector $\vc{\kappa}$ as
\begin{align}
\vc{\kappa}
=
{
\vc{e}_V^{\top}(\vc{I}-d\vc{W})^{-1}
\over \vc{e}_V^{\top}(\vc{I}-d\vc{W})^{-1}\vc{e}_V
}
=
{
\vc{\mu}_V(\vc{I}-d\vc{W})^{-1}
\over \vc{\mu}_V(\vc{I}-d\vc{W})^{-1}\vc{e}_V
},
\label{eq:Katz-02}
\end{align}
where $\vc{\mu}_S:=\vc{e}_S^{\top}/|S|$ denotes the uniform distribution vector on a finite set $S$. This normalization corresponds to setting the baseline score $\beta$ in (\ref{eq:Katz-01}) to
\begin{align*}
\beta = {1 \over \vc{e}_V^{\top}(\vc{I}-d\vc{W})^{-1}\vc{e}_V}.
\end{align*}

PageRank is a Katz-like importance measure that combines movement along network links with uniform teleportation through the damping factor $d\in(0,1)$. For $d\in(0,1)$, let $\vc{\gamma}^{(d)}:=(\gamma_j^{(d)})_{j\in V}$ denote the PageRank vector with damping factor $d$, where $\gamma_j^{(d)}$ is the PageRank score of node $j\in V$. The PageRank vector $\vc{\gamma}^{(d)}$ satisfies
\begin{align}
\vc{\gamma}^{(d)}
&=d\vc{\gamma}^{(d)}\vc{Q}_V
+ (1-d)\vc{\mu}_V
= d\vc{\gamma}^{(d)}\vc{Q}_V
+ {1-d \over N}\vc{e}_V^{\top},
\label{defn:PageRank-01}
\end{align}
where $\vc{Q}_V$ denotes the stochastic matrix defined by
\begin{align}
\vc{Q}_V=\vc{P}+(\vc{e}_V-\vc{P}\vc{e}_V)\vc{\mu}_V.
\label{eq:olP}
\end{align}
Note that (\ref{defn:PageRank-01}) is interpreted as the Katz equation $(V,\vc{Q}_V,d,(1-d)/N)$. In addition, the term $(\vc{e}_V-\vc{P}\vc{e}_V)\vc{\mu}_V$ in (\ref{eq:olP}) corresponds to teleportation links from each dangling node to all nodes (see \cite[Section~4.5]{Lang06}), whereas the term $(1-d)\vc{\mu}_V$ in (\ref{defn:PageRank-01}) corresponds to a uniform baseline score assigned to all nodes. Moreover, (\ref{defn:PageRank-01}) yields
\begin{align}
\vc{\gamma}^{(d)}
=(1-d)\vc{\mu}_V(\vc{I}-d\vc{Q}_V)^{-1}
=\vc{\mu}_V\sum_{n=0}^{\infty}(1-d)d^n (\vc{Q}_V)^n,
\label{defn:PageRank-02}
\end{align}
which shows that the PageRank vector $\vc{\gamma}^{(d)}$ is uniquely determined. PageRank is also axiomatically characterized in \cite{Was23}.

The damping factor $d\in(0,1)$ controls the balance between the network structure and the uniform baseline in the PageRank vector $\vc{\gamma}^{(d)}$. As $d \downarrow 0$, $\vc{\gamma}^{(d)}$ approaches the uniform vector $\vc{\mu}_V$ on $V$. As $d\uparrow 1$, the baseline contribution $(1-d)/N$ vanishes, and $\vc{\gamma}^{(d)}$ increasingly reflects the original link structure of the network $G$, except for the dangling-node teleportation encoded in $\vc{Q}_V$. The corresponding classwise limit results are presented in Theorem~\ref{thm:classwise_PR_limit}; see also Remark~\ref{rem:PureRank_in_fully_recurrent_NW}.

The damping factor $d$ also affects the convergence rate of the power method for computing $\vc{\gamma}^{(d)}$ (see \cite[Section~4.6 and 5.1]{Lang06}). By \cite[Theorem~4.7.1]{Lang06}, the second largest eigenvalue modulus of the Google matrix $\vc{G}:=d\vc{Q}_V + (1-d)\vc{e}_V\vc{\mu}_V$ equals $d$ times that of $\vc{Q}_V$. Therefore, a smaller $d$ typically accelerates the convergence of the power method, whereas a larger $d$ slows it. More generally, computing PageRank on large-scale networks is itself a critical issue. This issue has motivated extensive algorithmic work on personalized PageRank, which includes standard PageRank as a special case; see, for example, the surveys \cite{Park19,Yang24}, the dynamic algorithms under edge insertions and deletions \cite{Jaya24}, and the massively parallel methods \cite{Hou21}.

The choice of the damping factor $d$ in PageRank remains delicate and debated, although PageRank is widely used in journal impact metrics \cite{Berg08}, web spam detection \cite{Gyon04}, bipartite-graph ranking for recommendation \cite{He17}, gene prioritization \cite{Morr05}, systemic risk assessment in financial networks \cite{Batt12}, and graph-based text ranking \cite{Miha04}; see \cite{Glei15} for a survey of applications beyond the Web. Brin and Page~\cite{Brin98} recommended $d=0.85$; however, smaller values such as $0.5$ or $0.6$ may be more appropriate in some applications, for example, ranking in sports leagues; see Govan~\cite{Gova08} and \cite[Chapter~6]{Lang12}. Avrachenkov et al.~\cite{Avra08} also examined the choice of $d$ from a fairness viewpoint and suggested a value around $0.5$, based on heuristics derived from a mathematical analysis of bow-tie web graphs. Notably, in worst-case networks constructed in~\cite{Bres10}, the top $k$ nodes can appear in all $k!$ possible orders even when $d$ varies within an arbitrarily small interval, such as $[0.84999,0.85001]$ around the recommended value. These observations indicate that PageRank resolves the lack of uniqueness and completeness at the cost of introducing an application-dependent parameter whose choice can substantially affect the resulting ranking.

The sensitivity of PageRank to the damping factor $d$ has motivated parameter-free variants that eliminate the damping factor. One such variant is LeaderRank~\cite{Lu11}, which follows a Seeley-type construction: it augments the original network with a ground node that has bidirectional links to all nodes, computes the Seeley centrality vector on the augmented network, and redistributes the ground-node score uniformly among the original nodes. Nevertheless, this ground-node augmentation modifies the original network itself, thereby raising questions about how the modification should be justified and interpreted.

\subsection{PureRank and contributions}

We propose {\it PureRank}, a parameter-free, RDI-based importance measure defined solely by the intrinsic structure of the original network $G$. PureRank is constructed in three steps. First, nodes are classified into {\it recurrent}, {\it transient}, and {\it dangling} classes via SCC decomposition together with the out-degree condition, and the resulting node classification is denoted by $\calC$, a family of mutually exclusive and collectively exhaustive classes; see Section~\ref{subsec:node_classification} for details. Second, for each Class $S\in\calC$, the local importance vector is obtained by solving the Katz equation on the class-restricted subnetwork $\ol{G}_S=(S,\vc{P}_S)$ with its Katz parameters chosen according to the RDI principle so that the equation is as close as possible to the Seeley equation. Third, these local importance vectors are aggregated into the global importance vector, namely, the {\it PureRank vector}, through the RDI-based local-to-global (RDI-L2G) construction; see Section~\ref{subsec:L2G} for details. The three-step construction requires no user-specified parameters, such as the PageRank damping factor, and does not modify the network $G$ by adding artificial nodes or links. This design embodies ``purity'' by avoiding empirical or heuristic tuning and by not modifying the structure of the target network, which motivates the name PureRank.

Table~\ref{tab:sota_RDI} compares PureRank with representative recursive importance measures in terms of general-network guarantee, parameter dependence, and network modification.
\begin{table}[t]
\centering
\caption{Comparison of representative recursive importance measures with respect to key design trade-offs. ``General-network guarantee'': Yes: the method guarantees a unique positive score vector for arbitrary directed networks; No: such a guarantee is not available in general. ``Parameter dependence'': Yes: the method requires user-specified parameter values; No: it requires no such parameters. ``Network modification'': Yes: the method explicitly augments the original network by introducing artificial nodes or links; No: it introduces no artificial nodes or links.}
\label{tab:sota_RDI}
\resizebox{\linewidth}{!}{%
\begin{tabular}{lcccccc}
\hline
 & Eigenvector & Seeley & Katz & PageRank & LeaderRank & PureRank
\\
\hline
General-network guarantee & No & No & Yes & Yes & Yes & Yes
\\
Parameter dependence & No & No & Yes & Yes & No & No
\\
Network modification & No & No & No & Yes & Yes & No
\\
\hline
\end{tabular}
}
\end{table}
Seeley centrality is parameter-free and, in this sense, ``pure'', but it lacks a general-network guarantee. PureRank preserves this purity while guaranteeing a unique positive score vector for any directed network. Additionally, PureRank avoids both PageRank-style teleportation and LeaderRank-style augmentation of the original network by artificial nodes or links, and instead assigns scores through an explicit decomposition into recurrent, transient, and dangling classes.

PureRank, as a parameter-free realization of the RDI principle, has clear strengths but also intrinsic limitations. A central advantage of PureRank is that it avoids the need for empirical or heuristic parameter tuning while still guaranteeing a unique score vector for any given network. PureRank can therefore serve as a neutral reference against which application-specific, parameter-dependent importance measures can be compared. At the same time, PureRank provides no tuning parameter to mitigate computational cost, and its scores may be sensitive to structural changes that alter the SCC decomposition. Section~\ref{sec:concluding_remarks} revisits these limitations, summarizes their implications, and outlines directions for future work.

The numerical evaluation in this paper uses three large real-world networks from the SNAP repository~\cite{snap_datasets}. We study a Twitter follower network (dominated by the transient class), a high-energy physics citation network (dominated by the transient class, with only seven recurrent nodes), and a co-authorship network (fully recurrent and dominated by a single large recurrent class). The three networks represent social, citation, and collaboration domains and exhibit markedly different node classifications. To compare PureRank with PageRank, we use three similarity measures: (i) Top-100 Overlap, (ii) Kendall's $\tau_b$, and (iii) the Pearson correlation coefficient (PCC). These measures reveal that the relationship between PureRank and PageRank depends strongly on the damping factor $d$ and on the node classification of the network. In the fully recurrent network, all three similarity measures increase monotonically with $d$ and indicate near coincidence at $d=0.999$. In the two transient-dominated networks, Kendall's $\tau_b$ increases with $d$, whereas Top-100 Overlap and PCC do not exhibit a monotonic dependence on $d$.

In addition to its deterministic formulation, PureRank admits a random-surfer model that is fully characterized by the submatrices of the normalized weight matrix $\vc{P}$. This model clarifies the scoring mechanism of PureRank, connects PureRank to Markov chain theory, and suggests potential customizations such as incorporating auxiliary information on nodes and links through the choice of initial distributions or transition rules. However, since the purpose of this paper is to introduce PureRank as a versatile parameter-free RDI-based measure, such extensions are beyond the scope of the present study and are left for future work.

We extend PureRank to networks with multiple link attributes, including biological networks whose links represent distinct actions such as activation or inhibition~\cite{Frah20}. The extension uses the {\it splitting-network construction}, which splits each node into copies indexed by link attributes and converts the original multi-attribute network into an enlarged single-attribute {\it splitting network}. PureRank is then computed on the resulting network. A related node-duplication mechanism was introduced in the Ising-PageRank model~\cite{Frah19} and later applied to biological networks in~\cite{Frah20}. However, the purposes differ: the splitting-network construction uses node splitting to encode multiple link attributes so that PureRank can be applied, whereas the Ising-PageRank model duplicates each node into two opinion-indexed copies to model opinion formation before applying PageRank.

The main contributions of this paper are summarized as follows.
\begin{enumerate}\renewcommand{\labelenumi}{\arabic{enumi}.}
\item We formulate PureRank as a parameter-free, RDI-based importance measure for arbitrary directed networks through node classification based on SCC decomposition and the out-degree condition, classwise computation of local importance vectors, and their aggregation into the PureRank vector by the RDI-L2G construction.
\item We develop a random-surfer interpretation of PureRank based on the submatrices of the normalized weight matrix $\vc{P}$, thereby clarifying the scoring mechanism of PureRank and linking PureRank to Markov chain theory.
\item We extend PureRank to multi-attribute networks via the splitting-network construction, which converts a multi-attribute network into a single-attribute splitting network and yields attribute-specific PureRank scores for each original node.
\item We numerically compare PureRank and PageRank on three large real-world networks from SNAP, focusing on computational cost and on how their ranking and scoring similarity vary with node classification and PageRank's damping factor.
\end{enumerate}
As described above, the primary contributions are theoretical and methodological. The empirical part is a quantitative evaluation based on numerical experiments on network datasets.

The remainder of this paper is organized as follows. Section~\ref{sec:PureRank} formulates PureRank and details its computational procedure. Section~\ref{sec:random_surfer} develops a random-surfer interpretation of PureRank from a Markov chain perspective. Section~\ref{sec:numerical_experiments} studies PureRank and PageRank through numerical experiments on three large real-world networks from SNAP~\cite{snap_datasets}, emphasizing computational cost and the effects of node classification and PageRank's damping factor on ranking and scoring similarity. Section~\ref{sec:multi_attribute_networks} extends PureRank to multi-attribute networks via the splitting-network construction. Finally, Section~\ref{sec:concluding_remarks} discusses implications, limitations, and future directions.

\section{Method: PureRank as a parameter-free RDI-based importance measure}\label{sec:PureRank}

This section formulates PureRank as a parameter-free RDI-based importance measure and presents its theoretical foundation and computational procedure. Section~\ref{subsec:node_classification} describes the node classification $\calC$ and presents the corresponding block-partitioned forms of $\vc{W}$ and $\vc{P}$. Section~\ref{subsec:local_centrality} characterizes, for each Class $S\in\calC$, the local importance vector through a bi-objective optimization problem for choosing the parameters of the Katz equation on $\ol{G}_S=(S,\vc{P}_S)$ according to the RDI principle. Section~\ref{subsec:L2G} presents the {\it RDI-based local-to-global (RDI-L2G) construction}, which aggregates the local importance vectors into the global importance vector, namely, the {\it PureRank vector}. Finally, Section~\ref{sec:computing} discusses the computation of PureRank.

\subsection{Classification of nodes}\label{subsec:node_classification}

This subsection describes the classification of nodes in the network $G=(V,\vc{W})$ and the associated notation. We define the recurrent, transient, and dangling classes for network nodes and introduce the notation $\calC$ for the resulting node classification. Based on this classification, we present the block-partitioned forms of $\vc{W}$ and $\vc{P}$. In particular, the submatrices in the block-partitioned form of $\vc{P}$ are used in Section~\ref{subsec:local_centrality} to define the local importance vector for each class.

We formally define the recurrent, transient, and dangling classes. Although the dangling class was already introduced in Section~\ref{sec:introduction}, we restate it here for completeness and in parallel with the definitions of the other two classes.
\begin{defn}[Dangling nodes]\label{defn:class_D}
Node $i \in V$ is said to be a {\it dangling node} if and only if
\begin{align*}
w_i^{out}=\sum_{j \in V} w_{i,j}=0.
\end{align*}
Equivalently, node $i$ has no outlinks to any node (including itself). The set of dangling nodes is denoted by $D$ and referred to as the dangling class, or Class $D$.
\end{defn}
\begin{defn}[Recurrent class]\label{defn:class_R}
A set $S \subseteq V \setminus D$ is said to be a {\it recurrent class} if and only if $S$ is a closed SCC of the network $G$; equivalently, $S$ satisfies the following conditions:
\begin{align*}
\sum_{n=1}^{\infty}[\vc{W}^n]_{i,j} &> 0,\quad \forall i,j \in S, \\
\sum_{n=1}^{\infty}[\vc{W}^n]_{i,j} &= 0,\quad \forall i \in S,\ j \in V \setminus S.
\end{align*}
Each node in a recurrent class is said to be recurrent. Let $K \in \bbZ_+:=\{0,1,2,\dots\}$ denote the number of recurrent classes, and let $R_k$ denote the $k$th recurrent class for $k \in \bbK:=\{1,2,\dots,K\}$. If $K=0$, then $\bbK=\varnothing$ and no recurrent classes exist. The recurrent classes $R_1,R_2,\dots,R_K$ are disjoint, and their union is denoted by $R:=\bigsqcup_{k=1}^K R_k$ (the symbol ``$\sqcup$'' denotes the disjoint union of sets). The set $R$ is referred to as Class $R$.
\end{defn}
\begin{defn}[Transient class]\label{defn:class_T}
A node $i \in V$ is said to be {\it transient} if and only if $i \not\in R \cup D$. The set of transient nodes is denoted by $T:=V \setminus (R \sqcup D)$ and referred to as the transient class, or Class $T$.
\end{defn}

\begin{rem}
Definitions~\ref{defn:class_D} and \ref{defn:class_R} distinguish between a single dangling node and a recurrent class consisting of a single node. The former has no outlinks and thus no self-loop, whereas the latter has a self-loop as its only outlink. Although one may regard these two types of nodes as identical, we do not adopt this convention in this paper.
\end{rem}

According to Definitions~\ref{defn:class_D}--\ref{defn:class_T}, the node set $V$ is partitioned into Classes $R_1$, $R_2$, $\dots$, $R_K$, $T$, and $D$. We denote this classification by
\begin{align}
\calC := \{R_1,R_2,\dots,R_K,T,D\}.
\label{eq:defn_C}
\end{align}
For notational convenience, we use the notation in (\ref{eq:defn_C}) throughout the paper. The formulation of PureRank and the analysis of its properties are most naturally stated in the general setting where multiple recurrent classes coexist with the transient and dangling classes, although some of these classes may in fact be absent in a given network. Sets introduced later may also be empty when they are determined by the empty classes in $\calC$. Unless otherwise stated, whenever such a set is empty, any associated quantities, including subvectors, submatrices, blocks, block rows, block columns, and terms, are understood to be absent. In particular, sums over empty index sets are interpreted as zero. This convention avoids repeated case distinctions regarding which classes in $\calC$ are empty.

The node classification $\calC$ of the network $G=(V,\vc{W})$ depends on the network connectivity. The classes $R_1,R_2,\dots,R_K,T,D$ are determined by the out-degree condition in Definition~\ref{defn:class_D} and the SCC decomposition of the subnetwork induced by $V\setminus D$ (see Algorithm~\ref{algo:node_classification}), and some of these classes may be empty. In fact, this classification depends only on the zero pattern of $\vc{W}$. Therefore, modifying link weights without changing the zero pattern leaves $\calC$ unchanged, whereas the addition or removal of nodes or links can change $\calC$ by altering the SCC decomposition or out-degrees.

We describe the connectivity among the recurrent, transient, and dangling classes by partitioning the weight matrix $\vc{W}$ with respect to the node classification $\calC = \{R_1,R_2,\dots,R_K,T,D\}$. Rearranging the rows and columns of $\vc{W}$ according to $\calC$, we write $\vc{W}$ in block-partitioned form:
\begin{align}
\vc{W}
=
\begin{blockarray}{c ccc ccc ccc}
       & R_1    & \cdots & R_K &   & T   &   &   & D   \\
\begin{block}{c (ccc|ccc|ccc)}
R_1          & \vc{W}_{R_1} &        & \mbox{\large $\vc{O}$}  &        &        &        &        &        \\
\vdots       &              & \ddots &                        &        & \mbox{\large $\vc{O}$}  &        &        & \mbox{\large $\vc{O}$}  \\
R_K          & \mbox{\large $\vc{O}$} &  & \vc{W}_{R_K} &        &        &        &        &        \\
\BAhhline{~---------}
             &              &        &              &        &        &        &        &        \\
T              & \vc{W}_{T,R_1} & \cdots & \vc{W}_{T,R_K} &        & \mbox{\large $\vc{W}_T$}  &        &        & \mbox{\large $\vc{W}_{T,D}$}  \\
             &              &        &              &        &        &        &        &        \\
\BAhhline{~---------}
             &              &        &              &        &        &        &        &        \\
D             &  \mbox{\large $\vc{O}$} & \cdots &  \mbox{\large $\vc{O}$}  &        & \mbox{\large $\vc{O}$}  &        &        & \mbox{\large $\vc{O}$}  \\
             &              &        &              &        &        &        &        &        \\
\end{block}
\end{blockarray}
~.
\label{partition_W}
\end{align}
Definition~\ref{defn:class_D} shows that all rows corresponding to Class $D$ are zero. Definition~\ref{defn:class_R} also states that, for each $k \in \bbK$, Class $R_k$ is a closed SCC and thus $\vc{W}_{R_k}$ is irreducible. Eq.~(\ref{partition_W}) illustrates that the connectivity structure of the network $G$ is as depicted in Fig.~\ref{fig:classification_of_V}.
\begin{figure}[tbh]
\centering
\includegraphics[bb=0 0 545 327, scale=0.4]{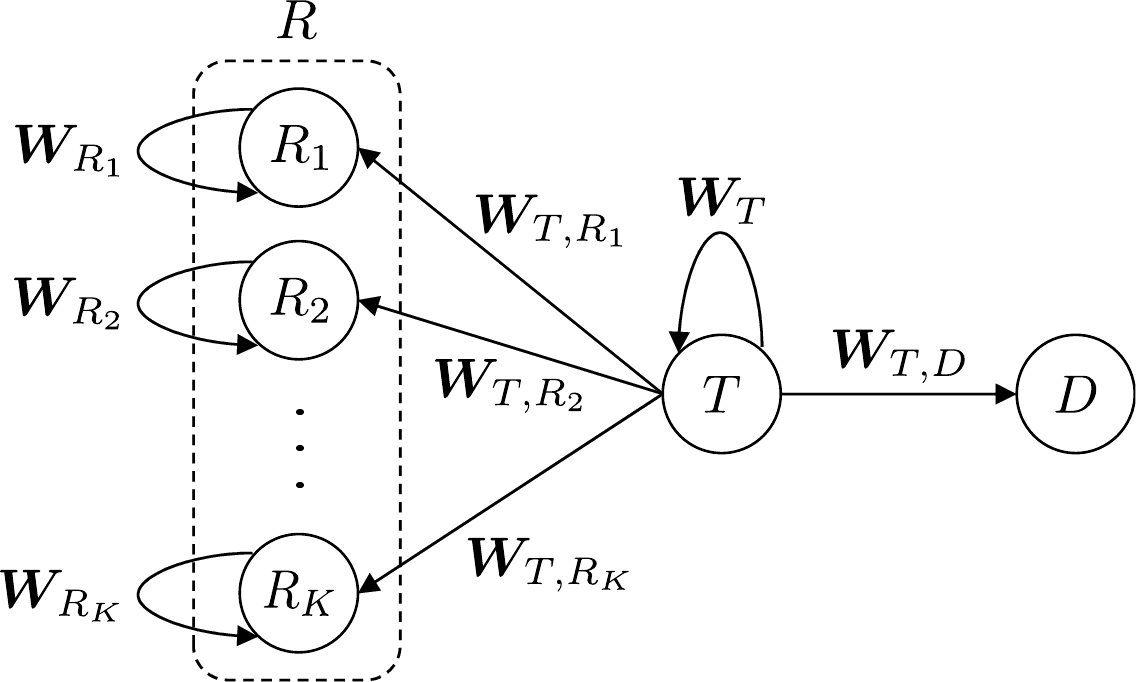}
\caption{Connectivity structure of $G=(V,\vc{W})$ under the node classification $\calC=\{R_1,R_2,\dots,R_K,T,D\}$.}
\label{fig:classification_of_V}
\end{figure}

As with $\vc{W}$, we partition the normalized weight matrix $\vc{P}$ in (\ref{defn:P}) using the same node classification $\calC$. The block-partitioned form of $\vc{P}$ is as follows:
\begin{align}
\vc{P}
=
\begin{blockarray}{c ccc ccc ccc}
       & R_1    & \cdots & R_K &   & T   &   &   & D   &   \\
\begin{block}{c(ccc|ccc|ccc)}
R_1          	&
\vc{P}_{R_1}	&
        		&
\mbox{\large $\vc{O}$}  &
        		&
				&
				&
				&
				&
\\
\vdots       	&
              	&
\ddots 			&
				&
				&
\mbox{\large $\vc{O}$}  &
				&
				&
\mbox{\large $\vc{O}$}  &
\\
R_K          	&
\mbox{\large $\vc{O}$} &
				&
\vc{P}_{R_K} 	&
				&
				&
				&
				&
				&
\\
\BAhhline{~---------}
             	&
				&
				&
				&
				&
				&
				&
				&
				&
\\
T              	&
\vc{P}_{T,R_1} 	&
\cdots 			&
\vc{P}_{T,R_K} 	&
				&
\mbox{\large $\vc{P}_T$} &
				&
				&
\mbox{\large $\vc{P}_{T,D}$} &
\\
             	&
				&
				&
				&
				&
				&
				&
				&
				&
\\
\BAhhline{~---------}
				&
				&
				&
				&
				&
				&
				&
				&
				&
\\
D             	&
\mbox{\large $\vc{O}$}  &
\cdots &
\mbox{\large $\vc{O}$}  &
				&
\mbox{\large $\vc{O}$}  &
				&
				&
\mbox{\large $\vc{P}_D$~($=\vc{O}$)}  &
\\
             	&
				&
				&
				&
				&
				&
				&
				&
				&
\\
\end{block}
\end{blockarray}
~,
\label{partition_P}
\end{align}
where each $\vc{P}_{R_k}$ is an irreducible stochastic matrix, as follows from (\ref{defn:P}) and the irreducibility of $\vc{W}_{R_k}$. For brevity, we also provide a more compact partition of $\vc{P}$:
\begin{align}
\vc{P}
&=
\begin{blockarray}{cccc}
  	& R        	   & T    	  & D \\
\begin{block}{c(ccc)}
  R & \vc{P}_R     & \vc{O}   & \vc{O} \\
  T & \vc{P}_{T,R} & \vc{P}_T & \vc{P}_{T,D} \\
  D & \vc{O}       & \vc{O}   & \vc{O} \\
\end{block}
\end{blockarray}
~,
\label{compact_partition_P}
\end{align}
where
\begin{align}
\vc{P}_R
&=
\begin{blockarray}{ccccc}
  & R_1 & R_2 & \cdots  & R_K \\
\begin{block}{c(cccc)}
R_1 & \vc{P}_{R_1} & \vc{O}       & \cdots       & \vc{O} \\
R_2 & \vc{O}       & \vc{P}_{R_2} & \ddots       & \vdots \\
\vdots & \vdots  & \ddots  & \ddots  & \vc{O} \\
R_K & \vc{O}  & \cdots  & \vc{O}  & \vc{P}_{R_K} \\
\end{block}
\end{blockarray}
~,
\label{defn:P_R}
\\
\vc{P}_{T,R}
&=
\begin{blockarray}{ccccc}
  & R_1 & R_2 & \cdots  & R_K \\
\begin{block}{c(cccc)}
T & \vc{P}_{T,R_1} & \vc{P}_{T,R_2} & \cdots       & \vc{P}_{T,R_K} \\
\end{block}
\end{blockarray}
.
\label{defn:P_{T,R}}
\end{align}
Eq.~(\ref{defn:P}) implies that $\vc{P}$ is substochastic. In addition, because each row of $\vc{P}$ indexed by $T$ sums to one, (\ref{partition_P}) and (\ref{defn:P_{T,R}}) yield
\begin{align}
\vc{e}_T
&=
\sum_{k=1}^K \vc{P}_{T,R_k}\vc{e}_{R_k}
+ \vc{P}_T\vc{e}_T
+ \vc{P}_{T,D}\vc{e}_D
\nonumber
\\
&= \vc{P}_{T,R}\vc{e}_R
+ \vc{P}_T\vc{e}_T
+ \vc{P}_{T,D}\vc{e}_D.
\label{sum_rows_in_T}
\end{align}

\begin{rem}\label{rem:P_T}
Definition~\ref{defn:class_T} implies that after rearranging its rows and columns, $\vc{P}_T$ has an upper block triangular form and each diagonal block corresponds to an open SCC contained in $T$. Let $U$ denote such an open SCC, and let $\vc{P}_U=(p_{i,j})_{i,j\in U}$. There exists some $m\in\bbN$ such that $(\vc{P}_U)^m\vc{e}_U<\vc{e}_U$, which implies that $\rho(\vc{P}_U)<1$. Therefore, the Neumann series $(\vc{I}-\vc{P}_T)^{-1}=\sum_{n=0}^{\infty}(\vc{P}_T)^n$ converges.
\end{rem}

\subsection{Local importance vectors}\label{subsec:local_centrality}

This subsection introduces the local importance vector for each class. For each nonempty $S\in\calC$, we consider the Katz equation on the class-restricted subnetwork $\ol{G}_S=(S,\vc{P}_S)$ and choose its Katz parameters through a bi-objective optimization problem. This parameter choice follows the RDI principle by making the equation as close as possible to the Seeley equation. The unique optimal solution of this problem yields the local importance vector $\vc{\lambda}_S$.

For each nonempty $S \in \calC$, we consider the following bi-objective optimization problem:
\begin{prob}\label{prob_local_importance}
\begin{subequations}\label{prob_1}
\begin{alignat}{2}
&\mbox{Minimize}    &  & \quad	(\delta_S,\beta_S)
\notag
\\
&\mbox{Subject to} 	&  & \quad  	\vc{x}_S^{\top} = (1-\delta_S) \vc{x}_S^{\top}\vc{P}_S + \beta_S\vc{e}_S^{\top},
\label{cond_01}
\\
&					&  & \quad  	\vc{x}_S^{\top}\vc{e}_S =1,
\label{cond_02}
\\					
&					&  & \quad  	\vc{x}_S \ge \vc{0},
\label{cond_03}
\\
&					&  & \quad  	0 \le \delta_S \le 1,\ \beta_S \ge 0.
\label{cond_04}	
\end{alignat}
\end{subequations}
\end{prob}

Problem~\ref{prob_local_importance} chooses the parameters $(\delta_S,\beta_S)$ in the Katz equation $(S,\vc{P}_S,1-\delta_S,\beta_S)$ on the class-restricted subnetwork $\ol{G}_S$ so that the resulting Katz equation is the smallest modification of the Seeley-centrality equation:
\begin{align}
\vc{x}_S^{\top}=\vc{x}_S^{\top}\vc{P}_S.
\label{eq:Seeley-centrality}
\end{align}
Accordingly, within Class $S$, Problem~\ref{prob_local_importance} seeks a feasible importance score vector that preserves, as far as possible, the RDI principle realized by Seeley centrality.

To characterize the optimal solution to Problem~\ref{prob_local_importance}, we introduce the class-specific quantities $\beta_S^*$ and $\vc{\lambda}_S$ for nonempty $S \in \calC$. Let $\beta_S^*$ be defined by
\begin{align}
\beta_S^* &= \left\{
\begin{array}{ll}
\dm{1 \over |D|}, & S=D,
\\
\rule[-2mm]{0mm}{7mm}
0, & S \in \{R_1,R_2,\dots,R_K\},
\\
\dm{1 \over \vc{e}_T^{\top}(\vc{I} - \vc{P}_T)^{-1}\vc{e}_T}, & S=T,
\end{array}
\right.
\label{defn:beta_S}
\end{align}
In addition, let $\vc{\lambda}_D$ and $\vc{\lambda}_T$ be defined by
\begin{align}
\vc{\lambda}_D 
&=\vc{\mu}_D={\vc{e}_D^{\top} \over |D|},
\label{defn:lambda_D}
\\
\vc{\lambda}_T
&=
{
\vc{\mu}_T(\vc{I} - \vc{P}_T)^{-1}
\over
\vc{\mu}_T
 (\vc{I} - \vc{P}_T)^{-1}\vc{e}_T
}
=
{
\vc{e}_T^{\top}(\vc{I} - \vc{P}_T)^{-1}
\over
\vc{e}_T^{\top}
 (\vc{I} - \vc{P}_T)^{-1}\vc{e}_T
}.
\label{defn:lambda_T}
\end{align}
For $k \in \bbK$, let $\vc{\lambda}_{R_k}$ denote the unique stationary distribution vector of $\vc{P}_{R_k}$, i.e., the unique vector such that
\begin{align}
\vc{\lambda}_{R_k} \vc{P}_{R_k} = \vc{\lambda}_{R_k},
\quad
\vc{\lambda}_{R_k} \vc{e}_{R_k} = 1,
\quad
\vc{\lambda}_{R_k} \ge \vc{0}^{\top}.
\label{defn:lambda_R_k}
\end{align} 

The optimal solution to Problem~\ref{prob_local_importance} is given by the following theorem.
\begin{thm}\label{thm:Katz-Centrality}
\hfill\\
For each nonempty $S\in\calC$, Problem~\ref{prob_local_importance} admits a unique optimal solution $(\vc{x}_S^{\top},\delta_S,\beta_S)=(\vc{\lambda}_S,0,\beta_S^*)$. Moreover, this optimal solution is independent of whether $\delta_S$ or $\beta_S$ is minimized first, because both orders select the same solution.
\end{thm}

\begin{proof}
See Appendix~\ref{proof:thm:Katz-Centrality}.
\end{proof}

\medskip

Theorem~\ref{thm:Katz-Centrality} motivates the following definition of the local importance vector of each class.
\begin{defn}[Local importance vectors]\label{defn:local_PureRank}
For $S \in \calC$, let $\lambda_{S}(j)$, $j\in S$, denote the {\it local importance} score of node $j$ within Class $S$. The vector $\vc{\lambda}_{S}:= (\lambda_{S}(j))_{j\in S}$ is referred to as the {\it local importance vector} of Class $S$.
\end{defn}

Theorem~\ref{thm:Katz-Centrality} shows that the unique optimal solution of Problem~\ref{prob_local_importance} satisfies $\vc{x}_S^{\top}=\vc{\lambda}_S$. Thus, the local importance vector is obtained directly from the optimal solution, without invoking generic optimization solvers. In particular, $\vc{\lambda}_D$ is the uniform distribution vector on $D$, each $\vc{\lambda}_{R_k}$ is given by (\ref{defn:lambda_R_k}), and $\vc{\lambda}_T$ is given by (\ref{defn:lambda_T}).

\begin{rem}
The uniqueness of $\vc{\lambda}_{R_k}$ follows from the irreducibility of $\vc{P}_{R_k}$ (see, e.g., \cite[Theorems~3.2.6 and 3.2.8]{Brem20}).
\end{rem}

\begin{rem}
Avrachenkov et al.~\cite{Avra10} proposed four centralities for the transient class (which they refer to as the extended SCC), one of which is equivalent to the local importance vector $\vc{\lambda}_T$ (see \cite[Definition~1]{Avra10}). However, their analysis focused on {\it local} centrality within the transient class and did not discuss a {\it global} centrality that accounts for connections between the transient class and other classes.
\end{rem}

\subsection{RDI-based local-to-global construction of PureRank}\label{subsec:L2G}

This subsection introduces PureRank, which measures the \textit{global} importance score of every node in $V$ by using its {\it local} importance score together with its class size and connectivity with Class $T$. We first present the definition of the PureRank vector $\vc{\pi}$, which contains the PureRank scores of nodes. We next describe the \textit{RDI-based local-to-global (RDI-L2G) construction}, which interprets the PureRank vector $\vc{\pi}$ as an aggregation of the local importance vectors $\{\vc{\lambda}_S\}_{S\in\calC}$ by incorporating class sizes and connectivity with Class $T$. We then discuss the relationship between $\{\vc{\lambda}_S\}_{S\in\calC}$ and the PageRank vector $\vc{\gamma}^{(d)}$ as $d\uparrow1$. Finally, we explain the rationale for the name PureRank.

We define PureRank as a parameter-free recursive importance measure for nodes in the network $G=(V,\vc{W})$. To this end, we introduce a key quantity $\theta_T$ defined by
\begin{align}
\theta_T
&= \vc{\lambda}_T\vc{P}_{T,R}\vc{e}_R + \vc{\lambda}_T\vc{P}_{T,D}\vc{e}_D
= 1 - \vc{\lambda}_T\vc{P}_T\vc{e}_T,
\label{eqn:theta_T}
\end{align}
which quantifies the total amount of local importance that leaks from Class $T$ when $\vc{\lambda}_T$ is propagated by $\vc{P}_T$ in one step. The second equality in (\ref{eqn:theta_T}) follows from (\ref{defn:P_{T,R}}) and (\ref{sum_rows_in_T}). Using this key quantity $\theta_T$, we present the definition of PureRank below.
\begin{defn}\label{defn:PureRank}
Let $\pi_j$, $j \in V$, denote the \textit{PureRank} score of node $j$, and let $\vc{\pi}:=(\pi_j)_{j \in V}$ be the PureRank vector. For each $S \in \calC$, the PureRank subvector $\vc{\pi}_S:=(\pi_j)_{j\in S}$ of Class $S$ is constructed from the local importance vectors $\{\vc{\lambda}_S\}_{S\in\calC}$ as follows:
\begin{subequations}\label{eqn:PureRank_vectors}
\begin{align}
\vc{\pi}_{R_k}
&=
{1 \over N}
\left( |R_k|  \vc{\lambda}_{R_k}
+ { |T| \over 1 + \theta_T} \vc{\lambda}_T \vc{P}_{T,R_k}
\right),
\quad k \in \bbK,
\label{formula_pi_R_k}
\\
\vc{\pi}_D
&=
{1 \over N}
\left( |D| \vc{\mu}_D
+ { |T|  \over 1 + \theta_T} \vc{\lambda}_T\vc{P}_{T,D}
\right),
\label{formula_pi_D}
\\
\vc{\pi}_T
&=
{1 \over N}
{|T|  \over 1 + \theta_T} \vc{\lambda}_T.
\label{formula_pi_T}
\end{align}
\end{subequations}
Furthermore, (\ref{formula_pi_R_k}) and (\ref{formula_pi_D}) can be rewritten as
\begin{subequations}
\begin{align*}
\vc{\pi}_{R_k}
&= {|R_k| \over N}\vc{\lambda}_{R_k} + \vc{\pi}_T \vc{P}_{T,R_k},
\quad k \in \bbK,
\\
\vc{\pi}_D
&=
{|D| \over N} \vc{\mu}_D + \vc{\pi}_T\vc{P}_{T,D}.
\end{align*}
\end{subequations}
\end{defn}

\begin{rem}\label{rem:strongly_connected}
Suppose that the network $G=(V,\vc{W})$ is strongly connected. The strong connectivity of $G$ implies that $V=R_1$, $K=1$, $T=D=\varnothing$, and $\vc{P}=\vc{P}_{R_1}$. Definition~\ref{defn:PureRank} together with (\ref{defn:lambda_R_k}) yields
\begin{align*}
\vc{\pi}
=
\vc{\pi}_{R_1}
= { |R_1| \over N }\vc{\lambda}_{R_1}
= \vc{\lambda}_{R_1},
\end{align*}
because $|R_1|=N$. Hence, the PureRank vector $\vc{\pi}$ coincides with the unique stationary distribution vector of $\vc{P}$, namely, the Seeley centrality vector of $G=(V,\vc{W})$.
\end{rem}

\begin{rem}\label{rem:theta_T}
By (\ref{eqn:theta_T}), (\ref{defn:lambda_T}), (\ref{defn:beta_S}), and $\vc{\mu}_T\vc{e}_T=1$, we obtain
\begin{align}
\theta_T
&= 1 - \vc{\lambda}_T\vc{P}_T\vc{e}_T
= 1 -
{ \vc{\mu}_T(\vc{I} - \vc{P}_T)^{-1}\vc{P}_T\vc{e}_T \over
\vc{\mu}_T(\vc{I} - \vc{P}_T)^{-1}\vc{e}_T}
\nonumber
\\
&=
{ \vc{\mu}_T(\vc{I} - \vc{P}_T)^{-1}(\vc{I} - \vc{P}_T)\vc{e}_T
\over
\vc{\mu}_T(\vc{I} - \vc{P}_T)^{-1}\vc{e}_T}
= {1 \over \vc{\mu}_T(\vc{I} - \vc{P}_T)^{-1}\vc{e}_T}
\nonumber
\\
&= {|T| \over \vc{e}_T^{\top}(\vc{I} - \vc{P}_T)^{-1}\vc{e}_T}
= |T|\beta_T^*.
\label{eqn:remark_theta_T}
\end{align}
Therefore, $\theta_T$ equals the total baseline score of Class $T$, namely, $|T|\beta_T^*$. Combining (\ref{defn:lambda_T}) with (\ref{eqn:remark_theta_T}) further yields
\begin{align*}
\vc{\lambda}_T(\vc{I} - \vc{P}_T)
= { \vc{\mu}_T \over \vc{\mu}_T(\vc{I} - \vc{P}_T)^{-1}\vc{e}_T}
= \theta_T\vc{\mu}_T.
\end{align*}
Equivalently,
\begin{align}
\vc{\lambda}_T
= \vc{\lambda}_T\vc{P}_T + \theta_T\vc{\mu}_T.
\label{eq:decomposition_theta_T}
\end{align}
Hence, the local importance vector $\vc{\lambda}_T$ of Class $T$ decomposes into two terms: (i) $\vc{\lambda}_T\vc{P}_T$, the importance redistributed within Class $T$, and (ii) $\theta_T\vc{\mu}_T$, the uniform compensation for the leakage from Class $T$.
\end{rem}

We now explain the RDI-L2G construction of the PureRank vector $\vc{\pi}$. The RDI-L2G construction interprets Definition~\ref{defn:PureRank} by decomposing the PureRank score into the direct contribution from classwise local importance and the leakage of importance from Class $T$ to Classes $R_1,R_2,\dots,R_K$ and $D$ (see Fig.~\ref{fig:RDI_L2G}).
\begin{enumerate}
\item \textbf{Transient Class ($T$):}
Class $T$ is not closed, and hence local importance leaks from Class $T$. Remark~\ref{rem:theta_T} shows that $\vc{\lambda}_T$ decomposes into the local importance redistributed within Class $T$ and a uniform compensation for the leakage, and the total amount of this compensation is equal to $\theta_T=|T|\beta_T^*>0$. We therefore define the PureRank subvector $\vc{\pi}_T$ by scaling $\vc{\lambda}_T$ by the class size $|T|$ and the factor $1/(1+\theta_T)$. The factor $1/(1+\theta_T)$ offsets the compensation incorporated into the local importance vector $\vc{\lambda}_T$. This adjustment places the contribution of Class $T$ on the same basis as those of the other classes, whose local importance vectors contain no such compensation. Accordingly,
\begin{align}
\vc{\pi}_T = {1 \over Z} {|T| \over 1 + \theta_T}\vc{\lambda}_T,
\label{defn:pi_T}
\end{align}
where $Z>0$ is a normalizing constant (equal to $N$, as shown in Theorem~\ref{thm:Z=N}).

\item \textbf{Recurrent Classes ($R_1,R_2,\dots,R_K$):}
Each Class $R_k$ ($k=1,2,\dots,K$) is a closed SCC, and hence local importance does not leak from Class $R_k$. Theorem~\ref{thm:Katz-Centrality} shows that $\beta_{R_k}^*=0$, and therefore the local importance vector $\vc{\lambda}_{R_k}$ contains no compensation term. Hence, the PureRank subvector $\vc{\pi}_{R_k}$ consists of the size-scaled local importance vector of Class $R_k$ (with no offsetting factor applied) and the leakage from Class $T$ into Class $R_k$:
\begin{align}
\vc{\pi}_{R_k}
&= {1 \over Z} \left(
|R_k| \vc{\lambda}_{R_k}
+ {|T| \over 1 + \theta_T} \vc{\lambda}_T\vc{P}_{T,R_k}
\right),
\qquad k=1,2,\dots,K.
\label{defn:pi_{R_k}}
\end{align}

\item \textbf{Dangling Class ($D$):}
Class $D$ consists of nodes with no outlinks, and thus $\vc{P}_D=\vc{O}$. Hence, local importance does not leak from Class $D$. Unlike the recurrent classes, Theorem~\ref{thm:Katz-Centrality} shows that $\beta_D^*=1/|D|>0$. This positive baseline score reflects the presence of the dangling nodes rather than compensation for leakage. Accordingly, the PureRank subvector $\vc{\pi}_D$ consists of the size-scaled local importance vector of Class $D$ and the leakage from Class $T$ into Class $D$:
\begin{align}
\vc{\pi}_D
&= {1 \over Z} \left(
|D| \vc{\lambda}_D
+ {|T| \over 1 + \theta_T} \vc{\lambda}_T \vc{P}_{T,D}
\right).
\label{defn:pi_D}
\end{align}

\item \textbf{Normalization:}
Finally, we normalize the PureRank subvectors so that
\begin{align}
\sum_{k=1}^K \vc{\pi}_{R_k}\vc{e}_{R_k} + \vc{\pi}_T\vc{e}_T + \vc{\pi}_D\vc{e}_D = 1.
\label{eqn:sum_lambda_S}
\end{align}
\end{enumerate}

\begin{figure}[tbh]
\centering
\includegraphics[bb=0 0 788 370, scale=0.4]{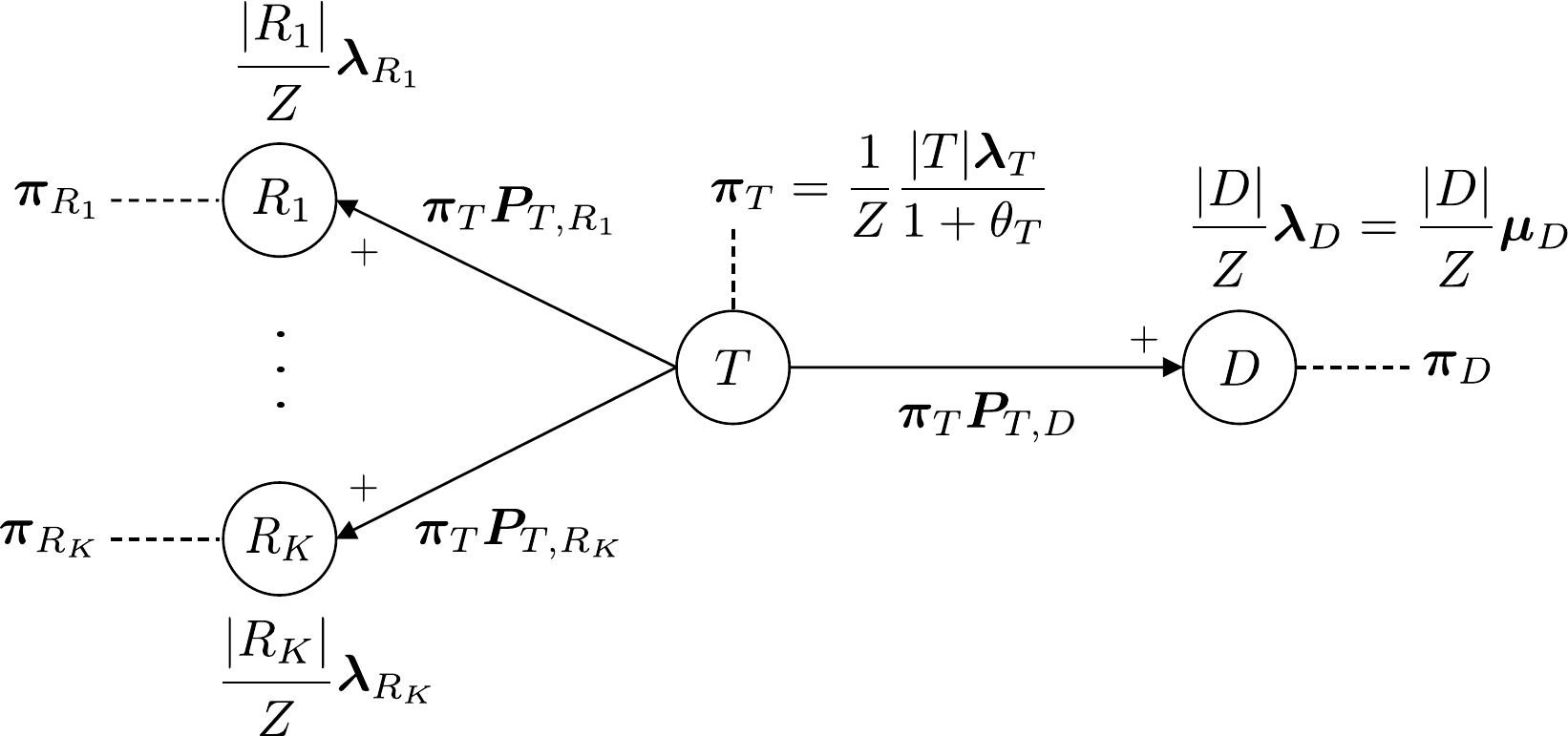}
\caption{RDI-based local-to-global (RDI-L2G) construction of the PureRank vector ($Z$ is the normalizing constant).}
\label{fig:RDI_L2G}
\end{figure}

Theorem~\ref{thm:Z=N} shows that $Z=N$, which confirms that the RDI-L2G construction realizes the PureRank vector in Definition~\ref{defn:PureRank}.
\begin{thm}\label{thm:Z=N}
The normalizing constant $Z$ is equal to the total number $N$ of nodes.
\end{thm}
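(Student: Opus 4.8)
The plan is to evaluate $Z$ directly from the normalization condition (\ref{eqn:sum_lambda_S}) and verify that all contributions collapse to the total node count $N$. Multiplying each subvector (\ref{defn:pi_T}), (\ref{defn:pi_{R_k}}), and (\ref{defn:pi_D}) on the right by $\vc{e}$ and summing, the normalizing constant satisfies
\begin{align*}
Z = \sum_{k=1}^K \left( |R_k| \vc{\lambda}_{R_k}\vc{e} + {|T| \over 1 + \theta_T} \vc{\lambda}_T\vc{P}_{T,R_k}\vc{e} \right) + |D| \vc{\lambda}_D\vc{e} + {|T| \over 1 + \theta_T} \vc{\lambda}_T\vc{P}_{T,D}\vc{e} + {|T| \over 1 + \theta_T} \vc{\lambda}_T\vc{e}.
\end{align*}
First I would split the right-hand side into the contributions internal to each class and the mass redistributed from Class $T$.

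For the internal contributions, each local importance vector is a probability vector: $\vc{\lambda}_{R_k}\vc{e} = 1$ by (\ref{defn:lambda_R_k}), $\vc{\lambda}_D\vc{e} = \vc{\mu}_D\vc{e} = 1$ by (\ref{defn:lambda_D}), and $\vc{\lambda}_T\vc{e} = 1$ by (\ref{defn:lambda_T}). Hence these terms contribute $\sum_{k=1}^K |R_k| + |D| = |R| + |D|$.

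The crux of the argument is the Class $T$ redistribution. After factoring out $|T|/(1 + \theta_T)$, the remaining bracket is $\sum_{k=1}^K \vc{\lambda}_T\vc{P}_{T,R_k}\vc{e} + \vc{\lambda}_T\vc{P}_{T,D}\vc{e} + \vc{\lambda}_T\vc{e}$. Using the substochastic row-sum identity (\ref{sum_rows_in_T}) to write $\sum_{k=1}^K \vc{P}_{T,R_k}\vc{e} + \vc{P}_{T,D}\vc{e} = \vc{e} - \vc{P}_T\vc{e}$ and applying $\vc{\lambda}_T$ on the left, the outgoing part becomes $\vc{\lambda}_T(\vc{e} - \vc{P}_T\vc{e}) = 1 - \vc{\lambda}_T\vc{P}_T\vc{e} = \theta_T$ by the definition (\ref{eqn:theta_T}) of the external compensation. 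The bracket therefore equals $\theta_T + 1$, so the full Class $T$ contribution reduces to ${|T| \over 1 + \theta_T}(1 + \theta_T) = |T|$.

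Combining the two parts gives $Z = |R| + |D| + |T| = N$, since $\{R_1,\dots,R_K,T,D\}$ partitions $V$. The only step that demands care is recognizing that the outflow $\theta_T$ from Class $T$ coincides exactly with the total mass redistributed from $T$ into Classes $R$ and $D$; this is what allows the reduction factor $1/(1 + \theta_T)$ to cancel the inflation $1 + \theta_T$ introduced by the external compensation. Once this identification is made, the remaining computation is routine.
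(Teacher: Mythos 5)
Your proof is correct and follows essentially the same route as the paper's: substitute the subvector definitions into the normalization condition, use $\vc{\lambda}_S\vc{e}=1$ for each class, and then apply the row-sum identity (\ref{sum_rows_in_T}) together with the definition (\ref{eqn:theta_T}) of $\theta_T$ so that the factor $1+\theta_T$ cancels, yielding $Z=|R|+|D|+|T|=N$. No gaps; the argument matches the paper's proof step for step.
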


\begin{proof}
See Appendix~\ref{proof:thm:GPR-01}.
\end{proof}

\medskip

The following theorem characterizes the limits as $d\uparrow1$ of the normalized classwise PageRank subvectors. Related numerical experiments and discussion are presented in Section~\ref{subsec:similarity_ranking_scoring}.
\begin{thm}\label{thm:classwise_PR_limit}
For each nonempty $S \in \mathcal{C}$, let $\widetilde{\vc{\gamma}}_S^{(d)}$ be defined by
\begin{align}
\widetilde{\vc{\gamma}}_S^{(d)} = {\vc{\gamma}_S^{(d)} \over \vc{\gamma}_S^{(d)}\vc{e}_S},\label{eq:widetilde{gamma}_S^{(d)}}
\end{align}
where $\vc{\gamma}_S^{(d)} := (\gamma_j^{(d)})_{j\in S}$ denotes the Class-$S$ subvector of the PageRank vector $\vc{\gamma}^{(d)}$. The following statements hold:
\begin{enumerate}
\item For each nonempty $R_k \in \calC$,
\begin{align}
\lim_{d \uparrow 1}\widetilde{\vc{\gamma}}_{R_k}^{(d)} = \vc{\lambda}_{R_k},
\label{eq:lim_tilde_gamma_R_k}
\end{align}
where $\vc{\lambda}_{R_k}$ is the stationary distribution vector of $\vc{P}_{R_k}$.
\item If $T \neq \varnothing$, then
\begin{align}
\lim_{d \uparrow 1}\widetilde{\vc{\gamma}}_T^{(d)} = \vc{\lambda}_{T},
\label{eq:lim_tilde_gamma_T}
\end{align}
where $\vc{\lambda}_T$ is defined in (\ref{defn:lambda_T}).
\item If $D \neq \varnothing$, then
\begin{align}
\lim_{d \uparrow 1}\widetilde{\vc{\gamma}}_D^{(d)}
= 
\left\{
\begin{array}{ll}
\dm{ 
|D|\vc{\mu}_D + \dm{|T| \vc{\lambda}_T \vc{P}_{T,D} \over \theta_T}
\over
|D| + \dm{|T| \vc{\lambda}_T \vc{P}_{T,D}\vc{e}_D \over \theta_T}
}, & \mbox{$T \neq \varnothing$},
\\
\vc{\mu}_D, & \mbox{$T = \varnothing$},
\end{array}
\right.
\label{eq:lim_tilde_gamma_D}
\end{align}
where $\theta_T$ is defined in (\ref{eqn:theta_T}). 
\end{enumerate}
\end{thm}

\begin{proof}
See Appendix~\ref{proof:thm:classwise_PR_limit}. 
\end{proof}

\medskip

To conclude this subsection, we explain the rationale for the name \textit{PureRank}. The term \textit{PureRank} reflects the absence of tunable parameters in its formulation. The PureRank vector is obtained in three steps: (i) classifying nodes into recurrent, transient, and dangling classes using SCC decomposition together with the out-degree condition, (ii) constructing the local importance vector for each class from the unique optimal solution of Problem~\ref{prob_local_importance}, and (iii) aggregating the local importance vectors into the PureRank vector $\vc{\pi}$ through the RDI-L2G construction. PureRank is free from empirical or heuristic parameter tuning, such as selecting a damping factor or inserting teleportation links, and thus preserves ``purity'' as a recursive importance measure, since it relies only on the network data encoded in $\vc{W}$ and $\vc{P}$. This purity also makes PureRank a neutral reference when comparing application-specific, parameter-dependent importance measures. The following remark further supports this purity by examining the specific case of fully recurrent networks.

\begin{rem}\label{rem:PureRank_in_fully_recurrent_NW}
Suppose that $G$ is fully recurrent, that is, $T=\varnothing$ and $D=\varnothing$. Under this assumption, $\vc{P}$ reduces to $\vc{P}_R$ in (\ref{defn:P_R}), and Definition~\ref{defn:PureRank} implies that
\begin{align}
\vc{\pi}
={1 \over N}
\left(
|R_1|\vc{\lambda}_{R_1},|R_2|\vc{\lambda}_{R_2},\dots,|R_K|\vc{\lambda}_{R_K}
\right),
\label{eq:PureRank_for_full_recurrent}
\end{align}
where each $\vc{\lambda}_{R_k}$ is the Seeley centrality vector for Class $R_k$. Thus, (\ref{eq:PureRank_for_full_recurrent}) shows that, under the RDI principle, PureRank first scores each recurrent class independently by Seeley centrality and then forms $\vc{\pi}$ by weighting the resulting vectors by class size and placing them in the corresponding block positions. This construction requires neither parameter tuning nor modification of the network structure and is therefore ``pure''. By contrast, PageRank connects all nodes through teleportation and therefore does not score the recurrent classes independently, even when $G$ is fully recurrent. Nevertheless, in the present special case, $\vc{\gamma}^{(d)}$ converges to the vector $\vc{\pi}$ in (\ref{eq:PureRank_for_full_recurrent}) as $d\uparrow1$. Indeed, since $\vc{Q}_V=\vc{P}=\vc{P}_R$, (\ref{defn:PageRank-02}) and (\ref{defn:P_R}) yield
\begin{align*}
\vc{\gamma}^{(d)}
&=
\left(
{|R_1| \over N}\vc{\mu}_{R_1}(1-d)(\vc{I}-d\vc{P}_{R_1})^{-1},
{|R_2| \over N}\vc{\mu}_{R_2}(1-d)(\vc{I}-d\vc{P}_{R_2})^{-1},
\right.\\
&\hspace{2em}\left.
\dots,
{|R_K| \over N}\vc{\mu}_{R_K}(1-d)(\vc{I}-d\vc{P}_{R_K})^{-1}
\right).
\end{align*}
Combining this with (\ref{eq:Abel_limit_R_k}), and using (\ref{eq:PureRank_for_full_recurrent}), we have
\begin{align*}
\lim_{d\uparrow1}
\vc{\gamma}^{(d)}
= {1 \over N}\bigl(|R_1|\vc{\lambda}_{R_1},|R_2|\vc{\lambda}_{R_2},\dots,|R_K|\vc{\lambda}_{R_K}\bigr)=\vc{\pi}.
\end{align*}
\end{rem}

\subsection{A procedure for computing PureRank}\label{sec:computing}

This subsection presents a modular procedure for computing the PureRank vector $\vc{\pi}$ of the target network $G=(V,\vc{W})$. The procedure uses the node classification $\calC$ introduced in Section~\ref{subsec:node_classification} to compute the local importance vectors $\{\vc{\lambda}_S\}_{S\in\calC}$ separately for each class and to aggregate them into $\vc{\pi}$ via (\ref{eqn:PureRank_vectors}). The modular structure of the procedure decomposes the major computations into classwise tasks. This decomposition clarifies which tasks can be performed in parallel. The decomposition also shows which quantities may need to be recomputed after a local update of the network, provided that the node classification remains unchanged.

We describe the procedure that generates the node classification $\calC=\{R_1,R_2,\dots,R_K,T,D\}$. The procedure identifies Class $D$ from the out-degrees and then applies SCC decomposition to the subnetwork induced by $V\setminus D$. Standard SCC algorithms, such as Kosaraju's algorithm \cite[Algorithm~4.6]{Sedg11} and Tarjan's algorithm \cite{Tarj72}, run in $O(|V|+|E|)$ time, where $E=\{(i,j)\in V^2:w_{i,j}>0\}$ is the set of links in the network $G$ and $O(\cdot)$ denotes Big-O notation. Each SCC $S_1,S_2,\dots,S_L$ in $V\setminus D$ is assigned to a class according to its connectivity: each closed SCC forms Class $R_k$ for some $k\in\bbK$, whereas each non-closed SCC belongs to Class $T$. Algorithm~\ref{algo:node_classification} summarizes this classification procedure.
\begin{algorithm}[htb]
\caption{Node classification procedure.}
\label{algo:node_classification}
\hfill\\
\textbf{Input}: Network $G = (V, \vc{W})$ \\
\textbf{Output}: Node classification $\calC=\{R_1, R_2, \dots, R_K, T, D\}$

\begin{algorithmic}[1]
\STATE \textbf{Step 1: Identify Dangling Nodes}
\FOR{each node $i \in V$}
    \IF{$w_i^{out} = 0$}
        \STATE Assign $i$ to Class $D$
    \ENDIF
\ENDFOR

\STATE \textbf{Step 2: SCC Decomposition (excluding $D$)}
\STATE Identify all SCCs $\{S_1, S_2, \dots, S_L\}$ in $V \setminus D$  ($L$: number of SCCs)

\STATE \textbf{Step 3: Assign Classes to Each SCC}
\STATE $k \leftarrow 1$
\FOR{each SCC $S_{\ell}$ ($\ell=1,2,\dots,L$)}
    \IF{$S_{\ell}$ has no outgoing links to any node outside $S_{\ell}$}
        \STATE Assign all nodes in $S_{\ell}$ to Class $R_k$
        \STATE $k \leftarrow k + 1$
    \ELSE
        \STATE Assign all nodes in $S_{\ell}$ to Class $T$
    \ENDIF
\ENDFOR
\end{algorithmic}
\end{algorithm}

We compute the local importance vectors $\{\vc{\lambda}_S\}_{S\in\calC}$ separately for the classes in $\calC$. This classwise construction decomposes the dominant iterative computations into independent subproblems. The computation of $\vc{\lambda}_D$ is immediate because $\vc{\lambda}_D=\vc{\mu}_D$. For each nonempty $S\in\calC\setminus\{D\}$, we compute $\vc{\lambda}_S$ from the corresponding principal submatrix $\vc{P}_S$. These computations are independent across distinct $S\in\calC\setminus\{D\}$ and can therefore be carried out in parallel.

For each Class $R_k$, the local importance vector $\vc{\lambda}_{R_k}$ is the stationary distribution vector of the irreducible stochastic matrix $\vc{P}_{R_k}$. We therefore compute $\vc{\lambda}_{R_k}$ by the power method. If $\vc{P}_{R_k}$ is periodic, the plain power iteration may fail to converge for some initial distributions. In that case, convergence is ensured by applying the power method to the aperiodic modification $(1-c)\vc{P}_{R_k}+c\vc{I}_{R_k}$ for some $c\in(0,1/2]$. This modification preserves the stationary distribution vector $\vc{\lambda}_{R_k}$ and serves only to guarantee convergence of the iteration. In the degenerate case $|R_k|=1$, $\vc{\lambda}_{R_k}$ reduces to the scalar $1$.

For Class $T$, the local importance vector $\vc{\lambda}_T$ can be computed by a convergent iterative procedure, which avoids the direct use of (\ref{defn:lambda_T}) and the inverse matrix $(\vc{I}-\vc{P}_T)^{-1}$.
\begin{thm}\label{thm:LPR-class_T}
\hfill
\begin{enumerate}
\item $\vc{\lambda}_T$ is the unique stationary probability vector of $\vc{Q}_T$, where $\vc{Q}_T$ denotes the stochastic matrix defined by
\begin{align}
\vc{Q}_T = \vc{P}_T + (\vc{e}_T - \vc{P}_T\vc{e}_T) \vc{\mu}_T.
\label{defn:Q_T}
\end{align}
Furthermore, $\vc{Q}_T$ is irreducible and aperiodic. Therefore, by \cite[Theorem 4.2.1]{Brem20},
\begin{align}
\lim_{n\to \infty} (\vc{Q}_T)^n = \vc{e}_T\vc{\lambda}_T.
\label{eqn:lim_(Q_T)^n}
\end{align}
\item Moreover, for any distribution vector $\vc{\xi}_T$ on Class $T$, $\vc{\lambda}_T$ is the limit of the sequence $\{\vc{\lambda}_T(n):n\in\bbZ_+\}$ generated by the recursion
\begin{subequations}\label{lim:lambda_T-02}
\begin{align}
\vc{\lambda}_T(0) &= \vc{\xi}_T,
\label{lim:lambda_T-02a}
\\
\vc{\lambda}_T(n+1) &= \vc{\lambda}_T(n)\vc{P}_T + [1 - \vc{\lambda}_T(n)\vc{P}_T\vc{e}_T] \vc{\mu}_T, \quad n=0,1,\dots.
\label{lim:lambda_T-02b}
\end{align}
\end{subequations}
\end{enumerate}
\end{thm}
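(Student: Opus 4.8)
The plan is to route both parts through the matrix $\vc{Q}_T$ of (\ref{defn:Q_T}) and the standard convergence theorem for finite, irreducible, aperiodic Markov chains. For Part (i) I would verify four properties of $\vc{Q}_T$ in order, after which (\ref{eqn:lim_(Q_T)^n}) is immediate from \cite[Theorem 4.2.1]{Brem20}. Stochasticity is a one-line check: since $\vc{\mu}_T\vc{e}=1$, we have $\vc{Q}_T\vc{e}=\vc{P}_T\vc{e}+(\vc{e}-\vc{P}_T\vc{e})(\vc{\mu}_T\vc{e})=\vc{e}$. For stationarity, I would first extract from the closed form (\ref{defn:lambda_T}) the identity $\vc{\lambda}_T(\vc{I}-\vc{P}_T)=|T|\beta_T^*\,\vc{\mu}_T=\theta_T\,\vc{\mu}_T$, the two equalities using (\ref{defn:beta_S}) and Remark~\ref{rem:theta_T}, respectively. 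Then, since $\vc{\lambda}_T\vc{e}=1$ gives $\vc{\lambda}_T(\vc{e}-\vc{P}_T\vc{e})=\theta_T$ by (\ref{eqn:theta_T}), a short computation yields $\vc{\lambda}_T\vc{Q}_T=\vc{\lambda}_T\vc{P}_T+\theta_T\vc{\mu}_T=\vc{\lambda}_T\vc{P}_T+\vc{\lambda}_T(\vc{I}-\vc{P}_T)=\vc{\lambda}_T$. Uniqueness of the stationary vector will then follow once irreducibility is established.

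The main obstacle is irreducibility, which is exactly where the transient nature of Class $T$ is used. Writing $\vc{u}:=\vc{e}-\vc{P}_T\vc{e}\ge\vc{0}$ for the exit vector and $A:=\{i\in T:[\vc{u}]_i>0\}$ for the set of nodes with positive direct exit probability, we have $\vc{Q}_T=\vc{P}_T+\vc{u}\vc{\mu}_T$. I would argue in three moves. First, $A\ne\varnothing$: otherwise $\vc{P}_T\vc{e}=\vc{e}$, so $T$ is closed under $\vc{P}$, contradicting transience. Second, every node of $T$ reaches $A$ along a $\vc{P}_T$-path; for if the reachability closure $B$ of some node missed $A$, then every $j\in B$ would have $[\vc{u}]_j=0$ with all its transition mass confined to $B$, making $B$ a nonempty closed subset of $T$ and hence forcing $T$ to contain recurrent states, a contradiction. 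Since $\vc{Q}_T\ge\vc{P}_T$ entrywise, these paths survive under $\vc{Q}_T$. Third, for any $i_0\in A$ the row $[\vc{u}]_{i_0}\vc{\mu}_T$ is strictly positive (as $\vc{\mu}_T>\vc{0}$), so $i_0$ reaches every node of $T$ in one $\vc{Q}_T$-step. Concatenating a path from an arbitrary node to some $i_0\in A$ with the one-step spread from $i_0$ gives irreducibility. Aperiodicity is then free, since each $i_0\in A$ satisfies $[\vc{Q}_T]_{i_0,i_0}>0$, giving a self-loop and hence period one for the whole irreducible chain.

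For Part (ii), I would identify the recursion (\ref{lim:lambda_T-02}) with the power iteration $\vc{\lambda}_T(n)=\vc{\xi}_T(\vc{Q}_T)^n$. An easy induction shows each $\vc{\lambda}_T(n)$ is a probability vector: assuming $\vc{\lambda}_T(n)\ge\vc{0}$ and $\vc{\lambda}_T(n)\vc{e}=1$, the update (\ref{lim:lambda_T-02b}) coincides with $\vc{\lambda}_T(n)\vc{Q}_T$ because $\vc{\lambda}_T(n)(\vc{e}-\vc{P}_T\vc{e})\vc{\mu}_T=[1-\vc{\lambda}_T(n)\vc{P}_T\vc{e}]\vc{\mu}_T$, and stochasticity of $\vc{Q}_T$ preserves nonnegativity and unit mass. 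Hence $\vc{\lambda}_T(n)=\vc{\xi}_T(\vc{Q}_T)^n$, and letting $n\to\infty$ with (\ref{eqn:lim_(Q_T)^n}) gives $\vc{\lambda}_T(n)\to\vc{\xi}_T\vc{e}_T\vc{\lambda}_T=(\vc{\xi}_T\vc{e}_T)\vc{\lambda}_T=\vc{\lambda}_T$, using $\vc{\xi}_T\vc{e}_T=1$. Aside from the transience-to-irreducibility step, every computation here is routine, so I expect the only real work to lie in making the irreducibility argument airtight.
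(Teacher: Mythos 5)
Your proposal is correct and follows essentially the same route as the paper's own proof: verify that $\vc{\lambda}_T$ is stationary for $\vc{Q}_T$ (your identity $\vc{\lambda}_T(\vc{I}-\vc{P}_T)=\theta_T\vc{\mu}_T$ is just a reorganization of the paper's resolvent computation), establish irreducibility and aperiodicity by splitting $T$ into nodes with and without positive exit probability (your set $A$ is exactly the paper's $T_1$), and prove part (ii) by the same induction identifying the recursion with $\vc{\xi}_T(\vc{Q}_T)^n$. If anything, your reachability-closure argument for why every node reaches $A$, and your observation that self-loops are guaranteed only at nodes of $A$ (yet suffice for aperiodicity of the irreducible chain), make that step slightly more airtight than the paper's terse appeal to non-closedness and ``positive diagonal entries.''
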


\begin{proof}
See Appendix~\ref{proof:thm:LPR-class_T}.
\end{proof}

\medskip

\begin{algorithm}[htb]
\caption{Computation of the PureRank vector.}\label{algo:PureRank}
\hfill\\
\textbf{Input}: Network $G = (V, \vc{W})$ \\
\textbf{Output}: PureRank vector $\vc{\pi} =(\vc{\pi}_S)_{S \in \calC}$  ($\calC$: Node classification)

\begin{algorithmic}[1]
\STATE \textbf{Step 1: Node Classification}
\STATE \quad Obtain $\calC = \{R_1, R_2, \dots, R_K, T, D\}$ by Algorithm~\ref{algo:node_classification}, where $K \in \bbZ_+$.
\STATE \quad If $K=0$, the family $\{R_k\}_{k=1}^K$ is empty.

\STATE \textbf{Step 2: Computation of Local Importance Vectors $\{\vc{\lambda}_S\}_{S\in\calC}$}
\STATE \quad Compute the normalized matrix $\vc{P}$ by (\ref{defn:P}).
\STATE \quad Partition $\vc{P}$ as in (\ref{partition_P}).
\STATE \quad \textit{(a):} If $D\neq\varnothing$, compute $\vc{\lambda}_D$ using (\ref{defn:lambda_D}).
\STATE \quad \textit{(b):} \textbf{For} $k=1,2,\dots,K$ \textbf{do}
\STATE \qquad\qquad\quad Compute $\vc{\lambda}_{R_k}$ as the stationary distribution vector of $\vc{P}_{R_k}$.
\STATE \quad\qquad \textbf{EndFor}
\STATE \quad \textit{(c):} If $T\neq\varnothing$, compute $\vc{\lambda}_T$ by iterating the recursion in (\ref{lim:lambda_T-02}).

\STATE \textbf{Step 3: Aggregation into PureRank Vector $\vc{\pi} =(\vc{\pi}_S)_{S \in \calC}$}
\STATE \quad \textbf{For} each nonempty $S \in \calC$ \textbf{do}
\STATE \qquad\quad Compute $\vc{\pi}_S$ using (\ref{eqn:PureRank_vectors}).
\STATE \quad \textbf{EndFor}
\end{algorithmic}
\end{algorithm}

We summarize the procedure for computing PureRank in Algorithm~\ref{algo:PureRank} and illustrate its modular design in Fig.~\ref{fig:PureRank_framework}. Step~1 performs a global preprocessing stage through node classification. Step~2 computes the local importance vectors $\{\vc{\lambda}_S\}_{S\in\calC}$ classwise. These classwise computations are independent and can therefore be carried out in parallel. Step~3 aggregates the local importance vectors into the PureRank vector $\vc{\pi}$ through (\ref{eqn:PureRank_vectors}). This step can also be parallelized across classes once the local importance vectors and the required submatrices of $\vc{P}$ are available. After a local update of the network, recomputation can be restricted to the affected classes, provided that the node classification remains unchanged.

We discuss the computational cost of Algorithm~\ref{algo:PureRank} on large-scale networks. Step~1 essentially performs an SCC decomposition in time linear in the number of nodes and links, and its runtime is often smaller than those of Steps~2 and 3. Step~3 consists of a non-iterative aggregation, whereas Step~2 involves power iterations to compute the local importance vectors for the recurrent and transient classes. Thus, the dominant cost of computing $\vc{\pi}$ typically arises from the iterative computation of the local importance vector for the largest class among the recurrent and transient classes.

\begin{figure}[tbh]
\centering
\includegraphics[bb=0 0 523 228, scale=0.65]{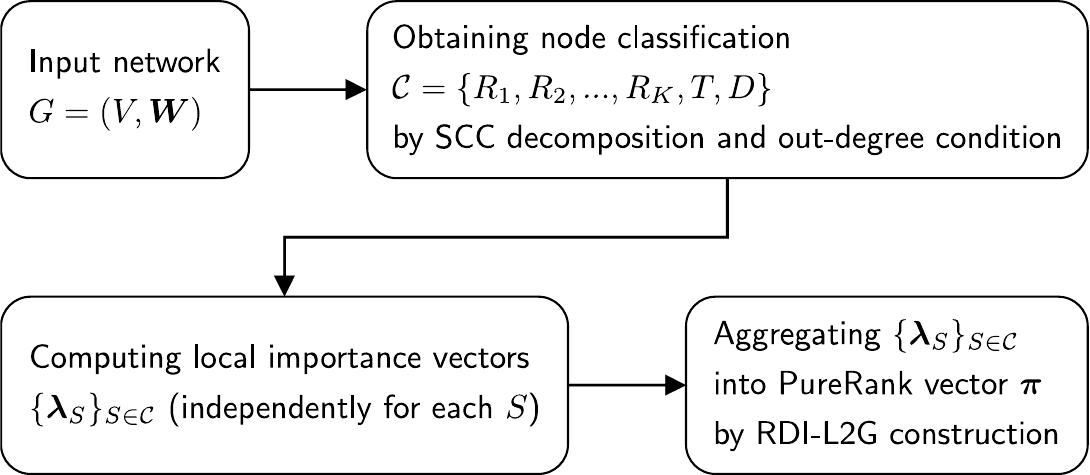}
\caption{Modular procedure for computing the PureRank vector $\vc{\pi}$ for $G=(V,\vc{W})$.}
\label{fig:PureRank_framework}
\end{figure}

\section{Interpretation: the random-surfer model of PureRank}\label{sec:random_surfer}

This section develops a random-surfer interpretation of the PureRank vector $\vc{\pi}=(\pi_j)_{j\in V}$. We first introduce a Markov transition matrix $\vc{M}$ and establish a key relation between $\vc{M}$ and $\vc{\pi}$. We then formulate a Markov chain on an extended state space, driven by $\vc{M}$, and use its long-run behavior to characterize PureRank probabilistically. This probabilistic perspective clarifies that PureRank is a parameter-free recursive importance measure determined solely by the network structure.

To define the Markov transition matrix $\vc{M}$, we first introduce auxiliary notation. When $T\neq\varnothing$, let $j'$ denote the copy of each $j \in R \sqcup D$, and let
\begin{align*}
D' &= \{j'; j \in D\},
\\
R' &= \bigsqcup_{k=1}^K R_k',\quad R_k' = \{j'; j \in R_k\},\quad k \in \bbK,
\end{align*}
where $D'$, $R'$, and $R_k'$ denote the copies of $D$, $R$, and $R_k$, respectively. When $T=\varnothing$, we set $D'=\varnothing$ and $R_k'=\varnothing$ for all $k \in \bbK$, so that $R'=\varnothing$. We also introduce the sets $\widehat{T}$ and $\widehat{V}$ by
\begin{align*}
\widehat{T} &=
\left\{
\begin{array}{ll}
T \sqcup R' \sqcup D', & \mbox{if $T\neq\varnothing$},
\\
\varnothing, & \mbox{if $T=\varnothing$},
\end{array}
\right.
\\
\widehat{V} &= R \sqcup \widehat{T} \sqcup D,
\end{align*}
and refer to $\widehat{T}$ and $\widehat{V}$ as the {\it extended transient class} and {\it extended node set}, respectively. When $T=\varnothing$, the sets $R'$, $D'$, and $\widehat{T}$ are empty; accordingly, by the convention introduced in Section~\ref{subsec:node_classification}, the blocks, block rows, and block columns labeled by $T$, $\widehat{T}$, $R'$, and $D'$ are understood to be absent.

We now define the Markov transition matrix $\vc{M}$ as the $|\widehat{V}| \times |\widehat{V}|$ stochastic matrix
\begin{align}
\vc{M}
&=
\begin{blockarray}{cccccc}
 & R & R'& T & D' & D \\ 
\begin{block}{c(c|ccc|c)}
R        & 
\vc{P}_R &
\vc{O}   &  
\vc{O}   & 
\vc{O}   &  
\vc{O}
\\  
\BAhhline{~-----}
R'        & 
\vc{O}   &  
\vc{O}   &  
\vc{e}_R\vc{\mu}_T   & 
\vc{O}   &  
\vc{O}
\\  
T        & 
\vc{O}   & 
\vc{P}_{T,R}   &  
\vc{P}_T   & 
\vc{P}_{T,D}   &  
\vc{O}
\\  
D'        & 
\vc{O}   &  
\vc{O}   &  
\vc{e}_D\vc{\mu}_T   & 
\vc{O}   &  
\vc{O}
\\ 
\BAhhline{~-----}
D        & 
\vc{O}   &  
\vc{O}   & 
\vc{O}   &  
\vc{O}   &
\vc{I}_D
\\
 \end{block}
\end{blockarray}~
=
\begin{blockarray}{c@{~~~}ccc}
 & R& \widehat{T} & D  \\ 
\begin{block}{c@{~~~}(ccc)}
R~        	& 
\vc{P}_R	& 
\vc{O}   	&   
\vc{O}   	 
\\  
\widehat{T}        & 
\vc{O}   &  
\vc{M}_{\widehat{T}}   & 
\vc{O}   
\\  
D~        	&  
\vc{O}   	&  
\vc{O}   	& 
\vc{I}_D   	
\\  
 \end{block}
\end{blockarray}
~,
\label{defn:matrix_M}
\end{align}
where $\vc{P}_R$ and $\vc{P}_{T,R}$ are given in (\ref{defn:P_R}) and (\ref{defn:P_{T,R}}), respectively, and $\vc{M}_{\widehat{T}}$ denotes the principal submatrix of $\vc{M}$ indexed by $\widehat{T}$, namely,
\begin{align}
\vc{M}_{\widehat{T}}
=
\begin{blockarray}{c@{~~}ccc}
 & R'& T & D'  \\ 
\begin{block}{c@{~~}(ccc)}
R'~        & 
\vc{O}   &  
\vc{e}_R\vc{\mu}_T   & 
\vc{O}    
\\  
T~        & 
\vc{P}_{T,R}   &  
\vc{P}_T   & 
\vc{P}_{T,D}    
\\  
D'~        &  
\vc{O}   &  
\vc{e}_D\vc{\mu}_T   & 
\vc{O}   
\\  
 \end{block}
\end{blockarray}
~.
\label{defn:M_T^{(e)}}
\end{align}
Eq.~(\ref{defn:matrix_M}) shows that $\vc{M}_{\widehat{T}}$ is stochastic whenever $T\neq\varnothing$.

The following lemma establishes the connection between the Markov transition matrix $\vc{M}$ and the PureRank vector $\vc{\pi}$.
\begin{lem}\label{lem:random_surfer-01}
\hfill
\begin{enumerate}
\item If $T \neq \varnothing$, then the stochastic matrix $\vc{M}_{\widehat{T}}$ has the unique stationary probability vector $\vc{\lambda}_{\widehat{T}}$, where
\begin{align}
\vc{\lambda}_{\widehat{T}} 
&= {1 \over 1 + \theta_T}
\begin{blockarray}{ccc}
R' & T  & D' 
\\ 
\begin{block}{(ccc)}
\vc{\lambda}_T\vc{P}_{T,R} &
\vc{\lambda}_T   &  
\vc{\lambda}_T\vc{P}_{T,D}   
\\  
 \end{block}
\end{blockarray}
\,.
\label{defn:lambda_{T^{(e)}}}
\end{align}
\item The Markov transition matrix $\vc{M}$ satisfies
\begin{align}
\lim_{n\to\infty}{1 \over n} \sum_{\nu=0}^{n-1} \vc{M}^{\nu}
=
\begin{blockarray}{cccccc}
		&	
R_1    	& 
\cdots 	& 
R_K 	& 
\widehat{T}	& 
D   
\\ 
\begin{block}{c(ccc|c|c)}
R_1          				&
\vc{e}_{R_1}\vc{\lambda}_{R_1} 	&
			 				&
\mbox{\large $\vc{O}$}		&
\vc{O}		 &
\vc{O}		 
\\	
\vdots       &
			 &
\ddots		 &
			 &
\vdots		 &
\vdots		 		 
\\
R_K          &
\mbox{\large $\vc{O}$}		 &
			 &
\vc{e}_{R_K}\vc{\lambda}_{R_K} &
\vc{O}		 &
\vc{O}		 
\\	
\BAhhline{~-----}
\widehat{T}              &
\vc{O}		 &
\cdots		 &
\vc{O}		 &
\vc{e}_{\widehat{T}}\vc{\lambda}_{\widehat{T}}		 &
\vc{O}		 
\\
\BAhhline{~-----}		
D			&
\vc{O}		&
\cdots		 &
\vc{O}		&
\vc{O}		&
\vc{I}_D		 
\\			 
\end{block}
\end{blockarray}
~.
\label{eqn:cesaro_limit-01}
\end{align}
\item The PureRank vector $\vc{\pi}=(\pi_j)_{j \in V}$ is given by
\begin{align}
\vc{\pi}
= 
\lim_{n\to\infty}{1 \over n}\sum_{\nu=0}^{n-1}\vc{\varpi} \vc{M}^{\nu} \vc{F},
\label{eqn:vector_pi}
\end{align}
where
\begin{align}
\vc{\varpi} 
&={1 \over N}
\begin{blockarray}{ccccc}
R & R' & T  & D' & D \\ 
\begin{block}{(ccccc)}
\vc{e}_R^{\top} & 
\vc{0}^{\top} &
\vc{e}_T^{\top} & 
\vc{0}^{\top} &
\vc{e}_D^{\top}
\\  
\end{block}
\end{blockarray}
,
\label{defn:eta} 
\\
\vc{F}
&=
\begin{blockarray}{cccc}
 & R & T & D \\ 
\begin{block}{c(c|c|c)}
R        & 
\vc{I}_R &
\vc{O}   &    
\vc{O}
\\  
R'        & 
\vc{I}_R   &  
\vc{O}   &  
\vc{O}
\\  
\BAhhline{~---}
T        & 
\vc{O}   & 
\vc{I}_T   & 
\vc{O}   
\\  
\BAhhline{~---}
D'        & 
\vc{O}   &  
\vc{O}   &  
\vc{I}_D   
\\ 
D        & 
\vc{O}   &  
\vc{O}   &  
\vc{I}_D 
\\
 \end{block}
\end{blockarray}
~.
\label{defn:E}
\end{align}
\end{enumerate}
\end{lem}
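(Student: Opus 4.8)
The plan is to handle the three parts in order, placing essentially all of the genuine work in Part~(i); Parts~(ii) and~(iii) then reduce to the block structure of $\vc{M}$ and routine matrix bookkeeping. I expect the main obstacle to be the \emph{uniqueness} claim in Part~(i): although $\vc{M}_{\hat{T}}$ is built to mimic the irreducible, aperiodic matrix $\vc{Q}_T$ of (\ref{defn:Q_T}), it need not itself be irreducible, because a recurrent class $R_k$ or a dangling node receiving no link from $T$ leaves its copy in $R'$ or $D'$ unreachable. I would sidestep this by solving the stationary equation directly rather than appealing to irreducibility.

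For Part~(i), I would first confirm that $\vc{M}_{\hat{T}}$ in (\ref{defn:M_T^{(e)}}) is stochastic: the $R'$- and $D'$-rows sum to one since $\vc{\mu}_T\vc{e}=1$, and the $T$-rows sum to one by (\ref{sum_rows_in_T}). Writing a left eigenvector as $\vc{x}=(\vc{a},\vc{b},\vc{c})$ over the blocks $(R',T,D')$, the equation $\vc{x}\,\vc{M}_{\hat{T}}=\vc{x}$ forces $\vc{a}=\vc{b}\,\vc{P}_{T,R}$ and $\vc{c}=\vc{b}\,\vc{P}_{T,D}$ from its $R'$- and $D'$-components; substituting into the $T$-component and using $\vc{P}_{T,R}\vc{e}+\vc{P}_{T,D}\vc{e}=\vc{e}-\vc{P}_T\vc{e}$ (a consequence of (\ref{sum_rows_in_T})) collapses it to $\vc{b}\,\vc{P}_T+[\vc{b}\vc{e}-\vc{b}\,\vc{P}_T\vc{e}]\vc{\mu}_T=\vc{b}$, i.e.\ $\vc{b}\,\vc{Q}_T=\vc{b}$. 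Since this relation is homogeneous and $\vc{Q}_T$ is irreducible (Theorem~\ref{thm:LPR-class_T}), its solution space is one-dimensional, so $\vc{b}=t\,\vc{\lambda}_T$; hence the solution space for $\vc{x}$ is one-dimensional, which gives uniqueness of the stationary probability vector. Imposing $\vc{a}\vc{e}+\vc{b}\vc{e}+\vc{c}\vc{e}=1$ and computing $\vc{a}\vc{e}+\vc{c}\vc{e}=\vc{b}(\vc{e}-\vc{P}_T\vc{e})=t\,\theta_T$ via (\ref{eqn:theta_T}) yields $t(1+\theta_T)=1$, reproducing (\ref{defn:lambda_{T^{(e)}}}).

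For Part~(ii), the decisive remark is that $\vc{M}$ in (\ref{defn:matrix_M}) is block-diagonal in the coarse partition $(R,\hat{T},D)$, with diagonal blocks $\vc{P}_R$, $\vc{M}_{\hat{T}}$, and $\vc{I}_D$; consequently every power $\vc{M}^{\nu}$, and thus every Ces\`aro average, is block-diagonal as well. I would then invoke the standard fact that a finite stochastic matrix with a unique stationary distribution has Ces\`aro limit equal to $\vc{e}$ times that distribution. Applied to each irreducible block $\vc{P}_{R_k}$ of $\vc{P}_R$ (cf.\ (\ref{defn:P_R})) this gives $\vc{e}\,\vc{\lambda}_{R_k}$, and applied to $\vc{M}_{\hat{T}}$ it gives $\vc{e}\,\vc{\lambda}_{\hat{T}}$ by Part~(i), while the absorbing block $\vc{I}_D$ is trivially its own Ces\`aro limit; reassembling in the fine partition produces exactly (\ref{eqn:cesaro_limit-01}).

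For Part~(iii), since the Ces\`aro limit of $\{\vc{M}^{\nu}\}$ exists, the right-hand side of (\ref{eqn:vector_pi}) equals $\vc{\varpi}\,\vc{\Pi}\,\vc{F}$, where $\vc{\Pi}$ denotes the matrix in (\ref{eqn:cesaro_limit-01}). Using (\ref{defn:eta}), I would compute $\vc{\varpi}\,\vc{\Pi}$ blockwise: the $R_k$-block is $\frac{1}{N}\vc{e}_{R_k}^{\top}\vc{e}_{R_k}\vc{\lambda}_{R_k}=\frac{|R_k|}{N}\vc{\lambda}_{R_k}$, the $\hat{T}$-block is $\frac{|T|}{N}\vc{\lambda}_{\hat{T}}$, and the $D$-block is $\frac{1}{N}\vc{e}_D^{\top}$. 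Right-multiplication by $\vc{F}$ in (\ref{defn:E}) then folds the copies $R'$ and $D'$ back onto $R$ and $D$: expanding $\vc{\lambda}_{\hat{T}}$ via (\ref{defn:lambda_{T^{(e)}}}), the $R_k$-entry becomes $\frac{1}{N}\big(|R_k|\vc{\lambda}_{R_k}+\frac{|T|}{1+\theta_T}\vc{\lambda}_T\vc{P}_{T,R_k}\big)$, the $T$-entry becomes $\frac{1}{N}\frac{|T|}{1+\theta_T}\vc{\lambda}_T$, and the $D$-entry becomes $\frac{1}{N}\big(\vc{e}_D^{\top}+\frac{|T|}{1+\theta_T}\vc{\lambda}_T\vc{P}_{T,D}\big)$. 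These coincide with (\ref{formula_pi_R_k}), (\ref{formula_pi_T}), and (\ref{formula_pi_D}) (the last after $|D|\vc{\mu}_D=\vc{e}_D^{\top}$), identifying the limit with $\vc{\pi}$ and completing the proof.
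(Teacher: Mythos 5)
Your proof is correct and follows essentially the same route as the paper's: Part~(i) reduces the blockwise stationary equations of $\vc{M}_{\hat{T}}$ to $\vc{\varphi}_T = \vc{\varphi}_T\vc{Q}_T$ and invokes the uniqueness from Theorem~\ref{thm:LPR-class_T}, Part~(ii) combines the block-diagonal structure of $\vc{M}$ with the Ces\`aro limit theorem for stochastic matrices, and Part~(iii) is the same blockwise evaluation of $\vc{\varpi}$ times the limit matrix times $\vc{F}$. Your preliminary observation that $\vc{M}_{\hat{T}}$ need not be irreducible, so uniqueness must be obtained by solving the stationary equations directly rather than by appealing to irreducibility of $\vc{M}_{\hat{T}}$ itself, is precisely the tactic the paper uses.
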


\begin{proof}
See Appendix~\ref{proof:lem_random_surfer-01}.
\end{proof}

\medskip

Using Lemma~\ref{lem:random_surfer-01}, we state the following theorem, which specifies the Markov chain on the extended node set $\widehat{V}$ and provides the foundation for interpreting the chain as the random-surfer model of PureRank.
\begin{thm}\label{thm:random_surfer}
Let $\{\widehat{X}_n:n\in\bbZ_+\}$ be a Markov chain on the state space $\widehat{V}$ with initial distribution $\vc{\varpi}$ and Markov transition matrix $\vc{M}$. The following statements hold:
\begin{enumerate}
\item Let $\one(A)$ denote the indicator of an event $A$; that is, $\one(A)=1$ if $A$ occurs and $\one(A)=0$ otherwise. We then have
\begin{align}
\pi_j
&=
\left\{
\begin{alignedat}{1}
\lim_{n\to \infty}
{1 \over n}
\sum_{i \in \widehat{V}} [\vc{\varpi}]_i
\EE\left[ \left. \sum_{\nu=0}^{n-1} \one(\widehat{X}_{\nu} = j) \right| 
\widehat{X}_0 = i\right], & ~~~ j \in T,
\\
\lim_{n\to \infty}
{1 \over n}
\sum_{i \in \widehat{V}} [\vc{\varpi}]_i
\EE\left[\left. 
\sum_{\nu=0}^{n-1} \one(\widehat{X}_{\nu} \in \{j,j'\}) \right| 
\widehat{X}_0 = i \right], & ~~~ j \in R \sqcup D.
\end{alignedat}
\right.
\label{eqn:pi_j}
\end{align}
If $T=\varnothing$, then the first case in (\ref{eqn:pi_j}) does not arise, and in the second case $\{j,j'\}$ is to be read as $\{j\}$.
\item Suppose that $T\neq\varnothing$. Let $\{\tau_T(m): m \in \bbZ_+\}$ be the sequence of random variables defined by
\begin{align*}
\tau_T(0) &= \inf\{n \in \bbZ_+: \widehat{X}_n \in T\},
\\
\tau_T(m) &= \inf\{n > \tau_T(m-1): \widehat{X}_n \in T, \widehat{X}_{n-1} \in R' \sqcup D'\},
\qquad m \in \bbN.
\end{align*}
For each $m \in \bbZ_+$, the quantity $\tau_T(m+1) - 1$ is the first hitting time of $R' \sqcup D'$ after time $\tau_T(m)$. Moreover,
\begin{align*}
\{\tau_T(m), \tau_T(m)+1, \dots, \tau_T(m+1) - 2\}
\end{align*}
is the $m$th sojourn period in Class $T$, and its length is
\begin{align*}
C_T(m):=\tau_T(m+1)-\tau_T(m)-1.
\end{align*}
Under these definitions, the sequence $\{C_T(m): m \in \bbZ_+\}$ is independent and identically distributed (i.i.d.) given $\widehat{X}_0 \in T$, and
\begin{subequations}\label{eqns_theta_T}
\begin{align}
\EE[C_T(m)\mid \widehat{X}_0 \in T]
&= {1 \over \theta_T}
\label{defn:theta_T-a}
\\
&= {1 \over |T| \beta_T^*}
\label{defn:theta_T-b}
\\
&= \vc{\mu}_T(\vc{I} - \vc{P}_T)^{-1}\vc{e}_T.
\label{defn:theta_T-c}
\end{align}
\end{subequations}
\end{enumerate}
\end{thm}

\begin{proof}
See Appendix~\ref{proof:thm:random_surfer}.
\end{proof}

\medskip

Theorem~\ref{thm:random_surfer} yields a random-surfer interpretation of PureRank through the Markov chain $\{\widehat{X}_n\}$. We consider independent surfers whose initial distribution is $\vc{\varpi}$ in (\ref{defn:eta}). Under this initial distribution, a surfer starts in $R$, $T$, or $D$ with probabilities $|R|/N$, $|T|/N$, and $|D|/N$, respectively, and the initial node is sampled uniformly from the selected class. In particular, for each $k \in \bbK$, a surfer starts in $R_k$ with probability $|R_k|/N$. If any of Classes $R$, $T$, and $D$ is empty, the corresponding case below is omitted. Each surfer evolves on the extended state space $\widehat{V}$ according to the following rules.
\begin{enumerate}
\item If the initial node lies in $R_k$ for some $k \in \bbK$, then the surfer evolves according to the transition matrix $\vc{P}_{R_k}$.
\item If the initial node lies in $T$, then the surfer evolves on $\widehat{T}=R' \sqcup T \sqcup D'$ according to the transition matrix $\vc{M}_{\widehat{T}}$ in (\ref{defn:matrix_M}).
\item If the initial node lies in $D$, then the surfer remains at that node indefinitely.
\end{enumerate}

\begin{figure}[tbh]
\centering
\includegraphics[bb=0 0 706 376, scale=0.4]{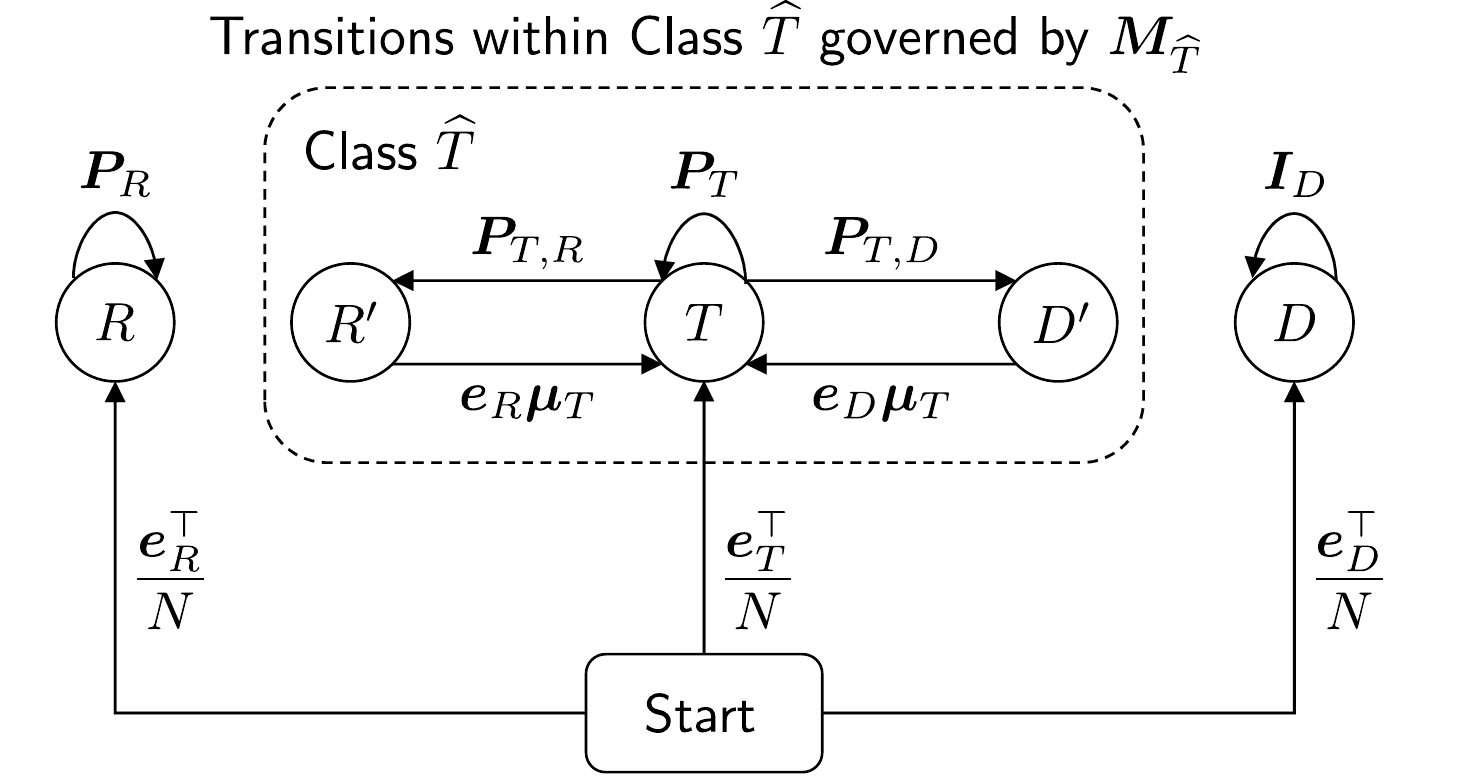}
\caption{Transition dynamics of the random-surfer model $\{\widehat{X}_n\}$.}
\label{fig:random_surfer}
\end{figure}

The random-surfer model explains how the RDI-L2G construction aggregates the local importance vectors into the PureRank vector. For each nonempty recurrent class $R_k$, a surfer starting in $R_k$ follows $\vc{P}_{R_k}$, and the long-run visit frequencies yield $\vc{\lambda}_{R_k}$. If Class $D$ is nonempty, a surfer starting in $D$ remains at the initial node, and the long-run visit frequencies yield $\vc{\lambda}_D=\vc{\mu}_D$. If Class $T$ is nonempty, a surfer starting in $T$ evolves on $\widehat{T}$ according to $\vc{M}_{\widehat{T}}$, and the normalized $T$-part of the long-run visit frequencies yields $\vc{\lambda}_T$. Each exit from $T$ to $R'$ or $D'$ is followed by a one-step return to $T$ and a uniform restart on $T$. Under the RDI principle, the resulting $R'$-part of the long-run visit frequencies on $\widehat{T}$ is aggregated with the local importance vectors $\{\vc{\lambda}_{R_k}\}_{k \in \bbK}$ for the recurrent classes, and the resulting $D'$-part is aggregated with $\vc{\lambda}_D$.

The random-surfer model also clarifies the probabilistic meanings of the PureRank vector $\vc{\pi}$ and the parameter $\theta_T$. Each surfer's trajectory is a sample path of $\{\widehat{X}_n\}$; see Fig.~\ref{fig:random_surfer}. For independent surfers with initial distribution $\vc{\varpi}$, the long-run visit frequency of node $j$, counting visits to its copy when a copy exists, converges to $\pi_j$. If Class $T$ is nonempty, then $\theta_T$ in (\ref{eqns_theta_T}) is the reciprocal of the expected sojourn time in Class $T$ before the surfer reaches $R'$ or $D'$. Equivalently, $\theta_T$ is the long-run frequency of exits from Class $T$.

\section{Experiments: comparison with PageRank}\label{sec:numerical_experiments}
This section investigates PureRank and PageRank on three large-scale real-world networks of different types: the \texttt{twitter\_combined} network~\cite{snap_twitter}, the \texttt{cit-HepPh} network~\cite{snap_cit_hepph}, and the \texttt{ca-AstroPh} network~\cite{snap_ca_astroph}, drawn from the Stanford Large Network Dataset Collection~\cite{snap_datasets}. We first summarize, for each network, the basic network statistics, the numbers of nodes in Classes $R$, $T$, and $D$, together with the size of the largest class in the node classification $\calC$. We next compare the computational cost of PureRank and PageRank, emphasizing that the dominant factors differ between the two importance measures: for PageRank, the runtime depends strongly on the damping factor $d$, whereas for PureRank it is typically determined by the size of the largest class in $\calC \setminus \{D\}$, that is, among the recurrent classes and the transient one. We finally vary the damping factor $d$ to study how the similarity between PureRank and PageRank, in terms of both rankings and scores, depends on whether the largest class in $\calC \setminus \{D\}$ is recurrent or transient.

We used the power method to compute the PageRank vector and the local importance vectors for PureRank in all numerical experiments. We terminated the iteration when the $L_1$ norm of the difference between successive iterates fell below the error tolerance $10^{-10}$.

We conducted all numerical experiments on a machine running Windows 11 Pro (version 25H2). The machine was equipped with an Intel Core i7-1270P CPU (2.20 GHz) and 32 GB of RAM. We implemented the algorithms in Python 3.12.7 using NumPy 1.26.4, SciPy 1.13.1, and NetworkX 3.3. NumPy was linked against Intel MKL (BLAS version 2023.1).

\subsection{Class structure of the target networks}

The node classifications of the target networks differ substantially. The numbers of nodes in Classes $R$, $T$, and $D$, the size of the largest class in $\calC$, and the distribution of recurrent class sizes in each network can be summarized as follows (see Tables~\ref{tab:network_stats_all} and \ref{tab:recurrent_class_distribution_all}):
\begin{itemize}
\item \texttt{twitter\_combined}: A network dominated by Class $T$ (approximately 85\% of all nodes), with a small recurrent population (approximately 0.77\% of all nodes) distributed across many small recurrent classes, most of which have size at most three.
\item \texttt{cit-HepPh}: A network heavily dominated by Class $T$ (approximately 93\% of all nodes), with a negligible recurrent population (approximately 0.02\% of all nodes) consisting only of singletons and pairs.
\item \texttt{ca-AstroPh}: A fully recurrent network (which has no transient or dangling nodes), whose recurrent structure is dominated by a single large recurrent class containing approximately 95\% of all nodes.
\end{itemize}

In Sections~\ref{subsec:comparison_cost} and \ref{subsec:similarity_ranking_scoring}, we examine how differences in node classification affect the computational cost of PureRank and PageRank, especially the power iterations used to compute the PageRank vector and the local importance vectors in PureRank, as well as the similarity between PureRank and PageRank in rankings and scores.

\subsection{Comparison of computational cost}\label{subsec:comparison_cost}

This subsection compares the computational cost of PageRank and PureRank by using the iteration counts and median wall-clock times reported in Panel~B of Table~\ref{tab:comparison_page_pure_all}. We first explain why the PageRank runtime increases with the damping factor $d$. We next compare the runtimes of PureRank with the corresponding PageRank runtimes for different values of $d$. We finally analyze the PureRank runtimes by examining the power iteration for computing the local importance vector of the largest class in $\calC \setminus \{D\}$ and explain the cross-network differences in these runtimes by using the largest-class statistics in Table~\ref{tab:network_stats_all}.

The computational cost of the PageRank vector $\vc{\gamma}^{(d)}$ increases with the damping factor $d$ because the power iteration converges more slowly. The PageRank vector $\vc{\gamma}^{(d)}$ is computed by generating the sequence $\{\vc{\gamma}_n^{(d)}\}_{n\in\bbZ_+}$ from the initial vector $\vc{\gamma}_0^{(d)}=\vc{\mu}_V$:
\begin{align}
\vc{\gamma}_{n+1}^{(d)}
&=\vc{\gamma}_n^{(d)}d\vc{Q}_V+(1-d)\vc{\mu}_V,\qquad n=0,1,\dots.
\label{recursion:gamma_n^{(d)}}
\end{align}
This iteration follows from (\ref{defn:PageRank-01}). In general, the iterative computation dominates the preprocessing cost of constructing $\vc{Q}_V$. Using (\ref{defn:PageRank-01}), (\ref{recursion:gamma_n^{(d)}}), and $\vc{Q}_V\vc{e}_V=\vc{e}_V$, we bound the error $\vc{\gamma}^{(d)}-\vc{\gamma}_{n+1}^{(d)}$ in the $L_1$ norm as follows:
\begin{align*}
\|\vc{\gamma}^{(d)}-\vc{\gamma}_{n+1}^{(d)}\|_1
&=d\|(\vc{\gamma}^{(d)}-\vc{\gamma}_n^{(d)})\vc{Q}_V\|_1
\\
&=d\sum_{j\in V}\left|
\sum_{i\in V}\left\{(\vc{\gamma}^{(d)})_i-(\vc{\gamma}_n^{(d)})_i\right\}
(\vc{Q}_V)_{i,j}
\right|
\\
&\le d\sum_{j\in V}\sum_{i\in V}
\left|(\vc{\gamma}^{(d)})_i-(\vc{\gamma}_n^{(d)})_i\right|
(\vc{Q}_V)_{i,j}
\\
&=d\sum_{i\in V}
\left|(\vc{\gamma}^{(d)})_i-(\vc{\gamma}_n^{(d)})_i\right|
\sum_{j\in V}(\vc{Q}_V)_{i,j}
\\
&=d\sum_{i\in V}
\left|(\vc{\gamma}^{(d)})_i-(\vc{\gamma}_n^{(d)})_i\right|
=d\|\vc{\gamma}^{(d)}-\vc{\gamma}_n^{(d)}\|_1.
\end{align*}
Thus, as is well known (see, e.g., \cite{Lang06}), a larger value of $d$ yields slower convergence of $\{\vc{\gamma}_n^{(d)}\}_{n\in\bbZ_+}$ to $\vc{\gamma}^{(d)}$. Panel~B of Table~\ref{tab:comparison_page_pure_all} is consistent with this contraction bound: in all three networks, larger $d$ leads to more iterations and a longer total runtime.

PageRank has a shorter runtime than PureRank except when $d$ is very close to one. Panel~B of Table~\ref{tab:comparison_page_pure_all} shows that this exception occurs only at $d=0.999$ in all three networks. Moreover, the runtime gap between PureRank and PageRank arises because PureRank does not use teleportation, whereas teleportation accelerates the iterative computation of the PageRank vector. Indeed, it is standard (see, e.g., \cite{Lang06}) that more frequent teleportation, that is, smaller $d$, accelerates the iterative computation of the PageRank vector.

The computation of the local importance vector for the largest class in $\calC \setminus \{D\}$ accounts for a substantial share of the PureRank runtime in the \texttt{twitter\_combined} and \texttt{ca-AstroPh} networks. As noted at the end of Section~\ref{sec:PureRank} (after Algorithm~\ref{algo:PureRank}), the computation of the local importance vector for the largest class in $\calC \setminus \{D\}$ can be a major component of the total PureRank runtime, particularly when the largest class in $\calC \setminus \{D\}$ contains most of the nodes, as in the three networks considered here. Combining Panel~B of Table~\ref{tab:comparison_page_pure_all} with Table~\ref{tab:network_stats_all}, we find that the median time for computing the local importance vector of the largest class is 50.8\% of the median total runtime of PureRank in the \texttt{twitter\_combined} network (9.506244/18.713799), 2.2\% in the \texttt{cit-HepPh} network (0.042390/1.944775), and 40.2\% in the \texttt{ca-AstroPh} network (0.836499/2.079275). These percentages show that the computation of the local importance vector for the largest class can be a major runtime component in the \texttt{twitter\_combined} and \texttt{ca-AstroPh} networks but not in the \texttt{cit-HepPh} network, even though the largest class contains most of the nodes in all three networks.

For the transient-dominated \texttt{twitter\_combined} and \texttt{cit-HepPh} networks, the iteration count for computing the local importance vector of the largest class is consistent with the leakage rate from the largest class. In both networks, the largest class is $T$, and Theorem~\ref{thm:LPR-class_T} shows that $\vc{\lambda}_T$ is obtained from the recursion (\ref{lim:lambda_T-02b}), that is, by power iteration on $\vc{Q}_T$ in (\ref{defn:Q_T}). The second term on the right-hand side of (\ref{defn:Q_T}) redistributes the probability mass that leaves Class~$T$, and its stationary mean is the leakage rate $\theta_T$. Panel~B of Table~\ref{tab:network_stats_all} reports 2{,}816 iterations for the \texttt{twitter\_combined} network but only 45 for the \texttt{cit-HepPh} network. This contrast is consistent with the larger leakage rate in the \texttt{cit-HepPh} network ($\theta_T\approx 0.294$) than in the \texttt{twitter\_combined} network ($\theta_T\approx 0.0357$), and it suggests that a larger leakage rate is associated with faster convergence of the power iteration for $\vc{\lambda}_T$.

Finally, we compare the largest-class computation runtimes of the \texttt{ca-AstroPh} and \break \texttt{twitter\_combined} networks, because in the \texttt{cit-HepPh} network the largest-class computation accounts for only 2.2\% of the total PureRank runtime. In the \texttt{ca-AstroPh} network, the largest class is recurrent rather than transient. Panel~B of Table~\ref{tab:network_stats_all} shows that this class has 17{,}903 nodes, compared with 69{,}473 for the largest class in the \texttt{twitter\_combined} network, and that its iteration count is also smaller (1{,}877 versus 2{,}816). The median time for computing the local importance vector of the largest class is therefore 0.836499 in the \texttt{ca-AstroPh} network, which is about 8.8\% of the corresponding value 9.506244 in the \texttt{twitter\_combined} network. Taken together, these comparisons indicate that the leakage mechanism in transient classes and the size and iteration count of the largest class both contribute to the cross-network differences in the PureRank runtimes.

\subsection{Similarity in ranking and scoring}\label{subsec:similarity_ranking_scoring}

This subsection examines how the similarity between PureRank and PageRank varies with the damping factor $d$, with separate attention to ranking similarity and score similarity. We first introduce the three similarity measures used below: Top-100 Overlap, Kendall's $\tau_b$, and the Pearson correlation coefficient (PCC). We next summarize the behavior of these measures in the two transient-dominated networks, the \texttt{twitter\_combined} and \texttt{cit-HepPh} networks. For these networks, we explain why Kendall's $\tau_b$ is high when $d$ is close to one by using Theorem~\ref{thm:classwise_PR_limit}. We also consider why Top-100 Overlap and PCC can behave differently from Kendall's $\tau_b$ as $d$ varies in the transient-dominated networks. We finally use the fully recurrent \texttt{ca-AstroPh} network as a control case.

We specify how Top-100 Overlap, Kendall's $\tau_b$, and PCC are computed. For Top-100 Overlap, we sort nodes in descending order of score, break ties by ascending node ID, and compute the percentage of common nodes in the resulting top-100 sets. For Kendall's $\tau_b$, we convert each score vector into dense ranks, assign the same rank to tied scores, and apply SciPy's \texttt{kendalltau} function to the resulting rank vectors. We compute PCC directly from the score vectors.

We summarize the empirical behavior of the three similarity measures in the transient-dominated \texttt{twitter\_combined} and \texttt{cit-HepPh} networks, where Class $T$ contains most nodes; see Table~\ref{tab:network_stats_all}. Panel~A of Table~\ref{tab:comparison_page_pure_all} shows that Kendall's $\tau_b$ between the PureRank and PageRank rankings increases with $d$ in both networks, whereas Top-100 Overlap and PCC are nonmonotone and can decline when $d$ is close to one. At $d=0.999$, Kendall's $\tau_b$ reaches its largest value, namely, $0.709$ in the \texttt{twitter\_combined} network and $0.951$ in the \texttt{cit-HepPh} network, whereas the all-node PCC reaches its smallest value, namely, $0.174$ and $0.093$, respectively. At the same value $d=0.999$, Top-100 Overlap also reaches its smallest value, namely, $3\%$, in the \texttt{twitter\_combined} network, whereas in the \texttt{cit-HepPh} network it decreases to $93\%$ from its peak value, $97\%$, at $d=0.95$. These results show that, in the transient-dominated networks, stronger agreement in global ordering does not imply closer agreement in scores or in the identities of the highest-ranked nodes.

We first explain why the all-node Kendall's $\tau_b$ between PureRank and PageRank is high when $d$ is close to one in the transient-dominated \texttt{twitter\_combined} and \texttt{cit-HepPh} networks. Theorem~\ref{thm:classwise_PR_limit} together with Definition~\ref{defn:PureRank} implies that
\begin{align}
\lim_{d \uparrow 1}\widetilde{\vc{\gamma}}_T^{(d)} 
&= \vc{\lambda}_T = {\vc{\pi}_T \over \vc{\pi}_T\vc{e}_T},
\nonumber
\\
\lim_{d \uparrow 1}\widetilde{\vc{\gamma}}_{R_k}^{(d)} 
&= \vc{\lambda}_{R_k},
\label{eq:limit_gamma_Rk}
\\
\lim_{d \uparrow 1}\widetilde{\vc{\gamma}}_D^{(d)} 
&=
{
|D| \vc{\mu}_D + \dm{|T| \over \theta_T} \vc{\lambda}_T\vc{P}_{T,D} 
\over 
|D| + \dm{|T| \over \theta_T} \vc{\lambda}_T\vc{P}_{T,D}\vc{e}_D
}.
\label{eq:limit_gamma_D}
\end{align}
The limits in (\ref{eq:limit_gamma_Rk}) and (\ref{eq:limit_gamma_D}) do not, in general, coincide with the corresponding normalized PureRank vectors. However, when the contribution of $|D|\vc{\mu}_D$ is small relative to the leakage term from Class $T$, (\ref{formula_pi_D}) and (\ref{eq:limit_gamma_D}) yield
\begin{align*}
\widetilde{\vc{\gamma}}_D^{(d)}  
\approx 
{
\vc{\lambda}_T\vc{P}_{T,D} 
\over 
\vc{\lambda}_T\vc{P}_{T,D}\vc{e}_D
}
\approx {\vc{\pi}_D \over \vc{\pi}_D\vc{e}_D},
\qquad d\uparrow 1.
\end{align*}
Thus, when $d$ is close to one, the PageRank and PureRank scores are approximately linearly related within each of Classes $T$ and $D$. Table~\ref{tab:classwise_tau_pcc} supports this interpretation: at $d=0.999$, the within-class Kendall's $\tau_b$ and PCC in Classes $T$ and $D$ are close to one in both networks. Because Table~\ref{tab:network_stats_all} shows that Classes $T$ and $D$ together contain almost all nodes in both networks, these classes dominate the pairwise comparisons underlying the all-node Kendall's $\tau_b$. This observation helps explain why the all-node Kendall's $\tau_b$ is high when $d$ is close to one.

We next consider why Top-100 Overlap can behave differently from Kendall's $\tau_b$ in the \texttt{twitter\_combined} and \texttt{cit-HepPh} networks by examining the class composition of the top-ranked nodes; see Table~\ref{tab:comparison_page_pure_top100}. By definition, teleportation from every node in PageRank becomes less frequent as the damping factor $d$ increases. As a result, PageRank increasingly favors nodes in Class $R$ over nodes in Class $T$, because the leakage of importance from Class~$T$ is compensated less often. Consistent with this shift, the relative advantage of Class $R$ over Class $T$ increases markedly as $d$ increases. In the \texttt{twitter\_combined} network at $d=0.999$, PageRank places $98\%$ of the top-100 nodes in Class $R$, whereas PureRank places $95\%$ of the top-100 nodes in Class $T$. The two top-100 sets therefore differ sharply, and Top-100 Overlap is small. Kendall's $\tau_b$, however, is based on the ordering over all nodes rather than only the top 100 and can therefore remain relatively high even when the highest-ranked nodes differ substantially. By contrast, in the \texttt{cit-HepPh} network, Class $R$ contains only seven nodes; see Table~\ref{tab:network_stats_all}. The top-100 set therefore remains concentrated in Classes $T$ and $D$, which is consistent with a high Top-100 Overlap.

We also consider why the all-node PCC can be low in the \texttt{twitter\_combined} and \texttt{cit-HepPh} networks when $d$ is close to one even though the all-node Kendall's $\tau_b$ is high. At $d=0.999$, the within-class PCCs in Classes $T$ and $D$, which together contain almost all nodes, equal $1.000$ in both networks; see Table~\ref{tab:classwise_tau_pcc}. Thus, the PageRank and PureRank scores are approximately linearly related within Classes $T$ and $D$. However, as $d$ increases, PageRank and PureRank place different relative emphasis on Classes $R$, $T$, and $D$. At $d=0.999$, the classwise ratios of mean PageRank to mean PureRank, in the order of Classes $R$, $T$, and $D$, are $(72.4,0.210,0.040)$ in the \texttt{twitter\_combined} network and $(531,0.777,0.632)$ in the \texttt{cit-HepPh} network; see Table~\ref{tab:comparison_page_pure_top100}. The classwise linear relation may therefore vary across classes, and the PageRank--PureRank scatter plot over all nodes need not be close to a single common line. This difference may explain why the all-node PCC can be very small even when the all-node Kendall's $\tau_b$ is high.

We close with the fully recurrent \texttt{ca-AstroPh} network as a control case. Table~\ref{tab:network_stats_all} shows that the \texttt{ca-AstroPh} network contains no transient or dangling nodes, and Panel~A of Table~\ref{tab:comparison_page_pure_all} shows that all three similarity measures increase with $d$ and indicate near coincidence at $d=0.999$. Remark~\ref{rem:PureRank_in_fully_recurrent_NW} clarifies this behavior: under uniform teleportation, PageRank assigns each closed class $R_k$ the fixed mass $|R_k|/N$ for all $d$. Combined with Theorem~\ref{thm:classwise_PR_limit}(i), this identity yields $\vc{\gamma}^{(d)}\to\vc{\pi}$ as $d\uparrow1$. The monotonic increase of similarity is therefore consistent with the convergence of PageRank to PureRank in the fully recurrent \texttt{ca-AstroPh} network.

\section{Extension: splitting-network construction for multi-attribute networks}\label{sec:multi_attribute_networks}

This section extends PureRank to directed networks with multiple link attributes through the {\it splitting-network construction}. This construction converts such a network into a single-attribute network with nonnegative link weights, called the {\it splitting network}, by splitting each node into copies indexed by link attributes so that each copy receives the in-links of the corresponding attribute. Applying Algorithm~\ref{algo:PureRank} to the splitting network yields, for each original node, an importance vector whose entries are indexed by the link attributes. We refer to this vector as the \textit{multi-attribute PureRank vector}.

We formally represent a multi-attribute network as $G:=(V,(\vc{W}^{(a)})_{a\in\mathcal{A}})$, where $\mathcal{A}=\{1,2,\dots,m\}$ denotes the index set of attributes and $m:=|\mathcal{A}|$. For each $a\in\mathcal{A}$, let $(i,j;a)$ denote a directed link from node $i$ to node $j$ with attribute $a$, and let $\vc{W}^{(a)}:=(w_{i,j}^{(a)})_{i,j\in V}$ denote the $|V|\times |V|$ matrix with
\begin{align*}
w_{i,j}^{(a)}=
\left\{
\begin{array}{ll}
\mbox{the weight of the link $(i,j;a)$}, & \mbox{if such a link exists}, \\
0, & \mbox{otherwise}.
\end{array}
\right.
\end{align*}
We assume without loss of generality that $\vc{W}^{(a)}$ is nonnegative. Indeed, if a link with attribute $a\in\mathcal{A}$ has a negative weight, then its sign can be reversed and the link can be reassigned the new attribute $(a,-)$. Likewise, if a link with attribute $a\in\mathcal{A}$ has a positive weight, then it can be reassigned the new attribute $(a,+)$. Thus, the index set $\mathcal{A}$ is extended to include $(a,+)$ and $(a,-)$ for each original attribute $a\in\{1,2,\dots,m\}$.

We now define the splitting-network construction, which converts the multi-attribute network $G:=(V,(\vc{W}^{(a)})_{a\in\mathcal{A}})$ into a single-attribute network with nonnegative link weights. Each node $i\in V$ is split into $m \in \bbN$ copies, $\{i^{(1)}, i^{(2)}, \dots, i^{(m)}\}$, indexed by the attributes in $\mathcal{A}$. The resulting network is denoted by $G^{\mathcal{A}}:=(V^{\mathcal{A}},\vc{W}^{\mathcal{A}})$, where $V^{\mathcal{A}}$ and $\vc{W}^{\mathcal{A}}$ are the node set and the weight matrix, respectively:
\begin{align*}
V^{\mathcal{A}}&=\{i^{(a)}:i\in V,a\in\mathcal{A}\},
\\
\vc{W}^{\mathcal{A}}&=\begin{blockarray}{ccccc}
 & V^{(1)} & V^{(2)} & \cdots & V^{(m)} \\
\begin{block}{c(cccc)@{~~}}
V^{(1)}    &
\vc{W}^{(1)}   &
\vc{W}^{(2)}   &
\cdots         &
\vc{W}^{(m)}
\\
V^{(2)}    &
\vc{W}^{(1)}   &
\vc{W}^{(2)}   &
\cdots         &
\vc{W}^{(m)}
\\
\vdots & \vdots & \vdots & \ddots & \vdots \\
V^{(m)}    &
\vc{W}^{(1)}   &
\vc{W}^{(2)}   &
\cdots         &
\vc{W}^{(m)}
\\
 \end{block}
\end{blockarray}
~,
\end{align*}
where $V^{(a)}=\{i^{(a)}:i\in V\}$ for each $a\in\mathcal{A}$ and $V^{\mathcal{A}}=\bigsqcup_{a\in\mathcal{A}}V^{(a)}$. Note that the weights of $G^{\mathcal{A}}$ satisfy $w_{i^{(a)},j^{(b)}}^{\mathcal{A}}=w_{i,j}^{(b)}$ for all $i,j\in V$ and $a,b\in\mathcal{A}$. In this construction, the attribute of each link is encoded by the copy of its destination node; hence, the resulting network $G^{\mathcal{A}}=(V^{\mathcal{A}},\vc{W}^{\mathcal{A}})$ can be regarded as a single-attribute network.

We define the multi-attribute PureRank vector for each original node in the multi-attribute network $G$. Since the splitting network $G^{\mathcal{A}}$ is a directed network with nonnegative link weights, the PureRank vector on $G^{\mathcal{A}}$ is well-defined according to Definition~\ref{defn:PureRank} and is computed by Algorithm~\ref{algo:PureRank}. For each $j\in V$ and $a\in\mathcal{A}$, the PureRank score $\pi_{j^{(a)}}$ of the copy $j^{(a)}$ represents the importance score of node $j$ associated with attribute $a$. Thus, the scores $\pi_{j^{(1)}},\pi_{j^{(2)}},\dots,\pi_{j^{(m)}}$ form the multi-attribute PureRank vector of node $j$:
\begin{align*}
\vc{\pi}_j^{\mathcal{A}}:=(\pi_{j^{(1)}},\pi_{j^{(2)}},\dots,\pi_{j^{(m)}}),
\end{align*}
which provides an attribute-wise importance profile for node $j$.

\begin{rem}
PureRank on multi-attribute networks extends to networks with both link and node attributes when each node has a single attribute. Let $\mathcal{N}$ denote the set of node attributes, and let $\mathscr{A}:V\to\mathcal{N}$ be a node-attribute map. For each link $(i,j)$ with attribute $a\in\mathcal{A}$, we use the ordered triple $(a,\mathscr{A}(i),\mathscr{A}(j))\in\mathcal{A}\times\mathcal{N}\times\mathcal{N}$ as the composite attribute of the link. Node attributes are thereby fully encoded into link attributes. Therefore, no further modification is needed to apply the splitting-network construction and Algorithm~\ref{algo:PureRank} to the resulting network. Under this encoding, for each original node $j\in V$, the multi-attribute PureRank vector is indexed by the pairs $(a,u)\in\mathcal{A}\times\mathcal{N}$, and its $(a,u)$th entry is the PureRank score corresponding to the composite attribute $(a,u,\mathscr{A}(j))$.
\end{rem}

We discuss how to aggregate the multi-attribute PureRank vector into a single summary score. In some applications, a single summary score is preferred, for example, a score obtained by a weighted sum. As a simple illustration, we consider a two-signed network in which the ``positive'' ($+$) and ``negative'' ($-$) attributes are indexed by $1$ and $2$, respectively. For node $j$, the multi-attribute PureRank vector is then $(\pi_{j^{(1)}},\pi_{j^{(2)}})$. A natural summary score is the net score, defined by the difference
\begin{align*}
\pi_j^{\pm}:=\pi_{j^{(1)}}-\pi_{j^{(2)}}.
\end{align*}
If $\pi_j^{\pm}>0$, the positive evaluation of node $j$ dominates the negative one. If $\pi_j^{\pm}<0$, the negative evaluation dominates the positive one. If $\pi_j^{\pm}=0$, the two evaluations are balanced.

This paper focuses exclusively on the theoretical development of PureRank for multi-attribute networks and does not report numerical experiments. In multi-attribute networks, node importance depends not only on overall connectivity, including link weights, but also on the distribution of link attributes, which makes the interpretation and systematic evaluation of ranking results more difficult. A meaningful empirical assessment would therefore need to examine how network structure and link attributes jointly affect the ranking results and to compare those results with standard baseline methods. We leave a comprehensive empirical study of PureRank on multi-attribute networks for future work.

\section{Discussion: implications, limitations, and future directions}\label{sec:concluding_remarks}

This section discusses the implications of PureRank as a parameter-free, RDI-based importance (centrality) measure for directed networks and outlines its limitations and future directions. The following subsections address these topics separately.

\subsection{Discussion}

PureRank is a new parameter-free recursive importance measure for arbitrary directed networks. PureRank answers the question of whether an RDI-based importance measure can be constructed without user-specified parameters. The name ``PureRank'' emphasizes that PureRank yields a unique importance score vector that faithfully reflects the intrinsic network structure without empirical or heuristic tuning. PureRank can therefore serve as a neutral reference for comparison with application-specific, parameter-dependent importance measures. The construction of PureRank consists of three steps. First, SCC decomposition together with the out-degree condition determines the node classification $\calC$ in (\ref{eq:defn_C}) into recurrent, transient, and dangling classes. Second, for each class, the local importance vector is obtained by solving the Katz equation on the class-restricted subnetwork, where the Katz parameters are chosen according to the RDI principle. Third, the resulting local importance vectors are aggregated through the {\it RDI-based local-to-global (RDI-L2G) construction} to form the {\it PureRank vector}. This modular design supports both parallel and incremental computation and thereby makes PureRank scalable to large or dynamically evolving networks. The concept of splitting networks further extends PureRank naturally to multi-attribute networks.

Our analysis has focused on the theoretical formulation, the computational procedure, and the random-surfer interpretation of PureRank. As a preliminary empirical assessment, we examined three real-world unsigned networks and compared PureRank with PageRank. The numerical results show that the relationship between PureRank and PageRank depends strongly on the node classification $\calC$ of each network. In the two transient-dominated networks, \texttt{twitter\_combined} and \texttt{cit-HepPh}, the three similarity measures between PureRank and PageRank---Kendall's $\tau_b$, Top-100 Overlap, and PCC---do not follow a common trend as the damping factor $d$ increases. Kendall's $\tau_b$ increases monotonically with $d$, whereas Top-100 Overlap and PCC do not exhibit the same monotonic behavior; see Panel~A of Table~\ref{tab:comparison_page_pure_all}. By contrast, in the fully recurrent \texttt{ca-AstroPh} network, all three similarity measures increase steadily with $d$, which suggests that the two rankings nearly coincide as $d$ approaches one; see also Remark~\ref{rem:PureRank_in_fully_recurrent_NW}. These findings indicate that the relationship between PureRank and PageRank is not universal but depends on the node classification $\calC$ of the underlying network.

\subsection{Limitations}

PureRank has several practical limitations. First, PureRank scores depend fundamentally on the node classification $\calC$ into recurrent, transient, and dangling classes. Accordingly, changes in the SCC decomposition or in node out-degrees can modify $\calC$ and hence alter the resulting scores. This dependence arises because $\calC$ determines both the class-restricted subnetworks on which the local importance vectors are computed and the {\it RDI-based local-to-global (RDI-L2G) construction} that aggregates them into the PureRank vector. A quantitative sensitivity analysis with respect to changes in $\calC$ is beyond the scope of the present paper. Second, for very large networks, computing local importance vectors, that is, stationary distributions or eigenvectors, within large recurrent or transient classes can be computationally demanding. This cost arises because these vectors are typically computed by the power method, without teleportation-based acceleration as in PageRank. Third, PureRank is intentionally designed as a parameter-free recursive importance measure rather than as an application-specific algorithm. PureRank therefore does not incorporate domain-dependent tuning, which may limit its performance in highly specialized tasks, although it still provides a neutral reference against which more elaborate methods can be compared.

\subsection{Future directions}

A broader empirical investigation of PureRank on diverse real-world networks remains an important direction for future work. Section~\ref{sec:multi_attribute_networks} develops the theoretical extension of PureRank to multi-attribute networks, whereas large-scale empirical validation in such networks has yet to be undertaken. Empirical study is particularly important in multi-attribute settings, because the interaction between network connectivity and attribute patterns complicates both the interpretation of the resulting multi-attribute PureRank vectors and their aggregation into summary scores. Systematic comparisons in this setting, for example, between PureRank and a PageRank variant adapted to multi-attribute networks, would help clarify the distinctive properties of PureRank relative to PageRank.

Personalization of PureRank is another important direction for future work. The present PureRank formulation is nonpersonalized, because the random-surfer interpretation uses the specific distribution vector $\vc{\varpi}$ in (\ref{defn:eta}), which reflects the absence of prior information on node importance. A personalized extension would therefore require a new formulation that incorporates prior information through an exogenous preference distribution. Any such extension should be introduced with care, because excessive personalization may conflict with the design philosophy of PureRank.

\section*{Declaration of generative AI and AI-assisted technologies in the\\ manuscript preparation process}
During the preparation of this work, the author used Gemini and ChatGPT to edit and refine the manuscript and Python code. After using these tools, the author reviewed and edited the content as needed and takes full responsibility for the content of the published article.

\section*{Declaration of competing interest}

The author declares the following potential competing interest: Hiroyuki Masuyama is an inventor on Japanese Patent Application No.~2024--103831, which is currently pending. A non-author collaborator is also a co-inventor on the application.

\section*{Acknowledgments}
The author is grateful for valuable discussions with Professor Hiroshige Dan, whose comments and encouragement have contributed significantly to this study. This work was supported in part by JSPS KAKENHI Grant Number JP25K15006.


\appendix
\renewcommand{\thetable}{A.\arabic{table}}
\setcounter{equation}{0}
\renewcommand{\theequation}{A.\arabic{equation}}

\titleformat{\section}{\large\bfseries}{Appendix~\thesection.}{0.5em}{}
\section{Proofs}

\subsection{Proof of Theorem~\ref{thm:Katz-Centrality}}\label{proof:thm:Katz-Centrality}

\subsubsection{Case $S=R_k$ ($k\in\bbK$)}

The vector $\vc{\lambda}_{R_k}$ is the unique vector satisfying (\ref{defn:lambda_R_k}). Hence, the triple $(\vc{x}_{R_k}^{\top},\delta_{R_k},\beta_{R_k})=(\vc{\lambda}_{R_k},0,0)$ satisfies (\ref{cond_01})--(\ref{cond_04}) and simultaneously attains the lower bounds for both $\delta_{R_k}$ and $\beta_{R_k}$. Thus, this triple is optimal. Moreover, (\ref{cond_01}), (\ref{cond_02}), and $\vc{P}_{R_k}\vc{e}_{R_k}=\vc{e}_{R_k}$ together imply that $\delta_{R_k}=\beta_{R_k}|R_k|$. Consequently, the unique optimal pair is $(\delta_{R_k},\beta_{R_k})=(0,0)$, independent of the minimization order.

\subsubsection{Case $S=T$}

Setting $S=T$ in (\ref{cond_01}), and solving the resulting equation for $\vc{x}_T^{\top}$, we obtain
\begin{align}
\vc{x}_T^{\top}
=\beta_T \vc{e}_T^{\top}[\vc{I}-(1-\delta_T)\vc{P}_T]^{-1},
\label{eqn:231126-01}
\end{align}
where the inverse matrix exists for any $\delta_T\in[0,1]$ because $\rho(\vc{P}_T)<1$ (see Remark~\ref{rem:P_T}). From (\ref{eqn:231126-01}) and (\ref{cond_02}), we have
\begin{align}
\beta_T
={1 \over \vc{e}_T^{\top}[\vc{I}-(1-\delta_T)\vc{P}_T]^{-1}\vc{e}_T}
={1 \over \vc{e}_T^{\top}\dm\sum_{n=0}^{\infty}(1-\delta_T)^n(\vc{P}_T)^n\vc{e}_T}.
\label{eq:beta_T}
\end{align}
The denominator of (\ref{eq:beta_T}) is positive and nonincreasing in $\delta_T\in[0,1]$. Hence, $\beta_T$ is nondecreasing in $\delta_T$ and attains its minimum at $\delta_T=0$; setting $\delta_T=0$ in (\ref{eq:beta_T}) yields the minimum value $\beta_T^*$ defined in (\ref{defn:beta_S}) with $S=T$. Therefore, minimizing either $\delta_T$ or $\beta_T$ first yields the same pair $(\delta_T,\beta_T)=(0,\beta_T^*)$. Furthermore, substituting $(\delta_T,\beta_T)=(0,\beta_T^*)$ into (\ref{eqn:231126-01}) and applying (\ref{defn:beta_S}) and (\ref{defn:lambda_T}), we obtain
\begin{align*}
\vc{x}_T^{\top}
={
\vc{e}_T^{\top}(\vc{I}-\vc{P}_T)^{-1}
\over
\vc{e}_T^{\top}(\vc{I}-\vc{P}_T)^{-1}\vc{e}_T
}
={
\vc{\mu}_T(\vc{I}-\vc{P}_T)^{-1}
\over
\vc{\mu}_T(\vc{I}-\vc{P}_T)^{-1}\vc{e}_T
}
=\vc{\lambda}_T.
\end{align*}
Thus, $(\vc{x}_T^{\top},\delta_T,\beta_T)=(\vc{\lambda}_T,0,\beta_T^*)$ is the unique optimal solution, independently of the minimization order.

\subsubsection{Case $S=D$}

Setting $S=D$ in (\ref{cond_01}) and (\ref{cond_02}), and using $\vc{P}_D=\vc{O}$, we obtain
\begin{align}
\vc{x}_D^{\top}=\beta_D\vc{e}_D^{\top},\quad \vc{x}_D^{\top}\vc{e}_D=1.
\label{eqn:231202-01}
\end{align}
Eq.~(\ref{eqn:231202-01}) together with (\ref{defn:lambda_D}) uniquely determines $\beta_D$ and $\vc{x}_D^{\top}$ as
\begin{align*}
\beta_D&={1 \over |D|},
\\
\vc{x}_D^{\top}&={1 \over |D|}\vc{e}_D^{\top}=\vc{\mu}_D=\vc{\lambda}_D>\vc{0}.
\end{align*}
Since $\beta_D$ is determined independently of $\delta_D$, $(\vc{x}_D^{\top},\delta_D,\beta_D)=(\vc{\lambda}_D,0,1/|D|)$ is the unique optimal solution; moreover, the unique optimal pair $(\delta_D,\beta_D)=(0,1/|D|)$ is obtained independently of the minimization order.

\subsection{Proof of Theorem~\ref{thm:Z=N}}\label{proof:thm:GPR-01}

Substituting (\ref{defn:pi_T})--(\ref{defn:pi_D}) into (\ref{eqn:sum_lambda_S}) yields
\begin{align}
1
& =
{1 \over Z}
\left[
\sum_{k=1}^K |R_k| \vc{\lambda}_{R_k}\vc{e}_{R_k} + |D| \vc{\lambda}_D\vc{e}_D
\right.
\nonumber
\\
&\qquad\qquad
\left. {}
+ 
{ |T|  \over 1 + \theta_T}
\left( 
\sum_{k=1}^K \vc{\lambda}_T\vc{P}_{T,R_k}\vc{e}_{R_k} + \vc{\lambda}_T\vc{P}_{T,D}\vc{e}_D 
+ \vc{\lambda}_T\vc{e}_T
\right) 
\right]
\nonumber
\\
& =
{1 \over Z}
\left[
\sum_{k=1}^K |R_k| + |D|
+ 
{ |T|  \over 1 + \theta_T}
\left\{\vc{\lambda}_T
\left( 
\sum_{k=1}^K \vc{P}_{T,R_k}\vc{e}_{R_k} +\vc{P}_{T,D}\vc{e}_D 
\right) 
+ 1
\right\} 
\right],
\label{eq:2604040-01}
\end{align}
where the second equality follows from $\vc{\lambda}_S\vc{e}_S=1$ for all $S \in \calC$. Furthermore, (\ref{sum_rows_in_T}) and (\ref{eqn:theta_T}) imply that
\begin{align*}
\vc{\lambda}_T
\left( 
\sum_{k=1}^K \vc{P}_{T,R_k}\vc{e}_{R_k} +\vc{P}_{T,D}\vc{e}_D 
\right) 
&= \vc{\lambda}_T (\vc{e}_T - \vc{P}_T\vc{e}_T)
= 1 - \vc{\lambda}_T\vc{P}_T\vc{e}_T = \theta_T.
\end{align*}
Combining this equation with (\ref{eq:2604040-01}) leads to
\begin{align*}
Z 
&= \sum_{k=1}^K |R_k| + |D|
+ 
{ |T|  \over 1 + \theta_T}
(\theta_T + 1)
=\sum_{k=1}^K |R_k| + |D| + |T| = N,
\end{align*}
which completes the proof.

\subsection{Proof of Theorem~\ref{thm:classwise_PR_limit}}\label{proof:thm:classwise_PR_limit}

We derive classwise equations for the PageRank vector by rewriting (\ref{defn:PageRank-01}) according to the block-partitioned form in (\ref{partition_P}). Combining (\ref{defn:PageRank-01}), (\ref{eq:olP}), and (\ref{partition_P}), we obtain
\begin{align}
\vc{\gamma}^{(d)}
&= d \vc{\gamma}^{(d)}
\left\{  \vc{P} + 
\begin{pmatrix}
\vc{0}\\
\vc{e}_D
\end{pmatrix}\vc{\mu}_V
\right\}
+ (1-d)\vc{\mu}_V
\nonumber
\\
&= d \vc{\gamma}^{(d)} \vc{P} 
+ \left\{  d\vc{\gamma}_D^{(d)}\vc{e}_D + (1-d) \right\} \vc{\mu}_V.
\label{eq:PR_vector_form}
\end{align}
Using (\ref{partition_P}), we decompose (\ref{eq:PR_vector_form}) into classwise subvectors corresponding to the classes in $\calC$. For $k\in\bbK$,
\begin{align}
\vc{\gamma}_{R_k}^{(d)}
&= d \left\{
\vc{\gamma}_{R_k}^{(d)} \vc{P}_{R_k} + \vc{\gamma}_T^{(d)}\vc{P}_{T,R_k} \right\}
+ \left\{  d\vc{\gamma}_D^{(d)}\vc{e}_D + (1-d) \right\} 
{\vc{e}_{R_k}^{\top} \over N},
\label{eq:PR_Rk}
\end{align}
and
\begin{align}
\vc{\gamma}_T^{(d)}
&= d\vc{\gamma}_T^{(d)}\vc{P}_T
+ \left\{d\vc{\gamma}_D^{(d)}\vc{e}_D+(1-d)\right\}
{\vc{e}_T^{\top} \over N},
\label{eq:PR_T}
\\
\vc{\gamma}_D^{(d)}
&=d\vc{\gamma}_T^{(d)}\vc{P}_{T,D}
+ \left\{d\vc{\gamma}_D^{(d)}\vc{e}_D+(1-d)\right\}
{\vc{e}_D^{\top} \over N}.
\label{eq:PR_D}
\end{align}

\subsubsection{Proof of Statement (i)}
Fix $k \in \bbK$. Solving (\ref{eq:PR_Rk}) for $\vc{\gamma}_{R_k}^{(d)}$, we have
\begin{align}
\vc{\gamma}_{R_k}^{(d)}
&= \Big[ d\vc{\gamma}_T^{(d)}\vc{P}_{T,R_k}
+  \left\{ d\vc{\gamma}_D^{(d)}\vc{e}_D + (1-d) \right\}
{\vc{e}_{R_k}^{\top} \over N}
\Big]
(\vc{I}-d\vc{P}_{R_k})^{-1}
\nonumber
\\
&= \vc{\psi}_k^{(d)}(\vc{I}-d\vc{P}_{R_k})^{-1},
\label{eq:widetilde{gamma}_{R_k}^{(d)}}
\end{align}
where
\begin{align*}
\vc{\psi}_k^{(d)}
=
\Big[ d\vc{\gamma}_T^{(d)}\vc{P}_{T,R_k}
+  \left\{  d\vc{\gamma}_D^{(d)}\vc{e}_D + (1-d) \right\}
{\vc{e}_{R_k}^{\top} \over N}
\Big] > \vc{0}^{\top}.
\end{align*}
Since $\vc{P}_{R_k}\vc{e}_{R_k}=\vc{e}_{R_k}$, it follows that $(\vc{I}-d\vc{P}_{R_k})^{-1}\vc{e}_{R_k}=(1-d)^{-1}\vc{e}_{R_k}$. Hence, (\ref{eq:widetilde{gamma}_{R_k}^{(d)}}) implies that
\begin{align*}
\vc{\gamma}_{R_k}^{(d)}\vc{e}_{R_k}
= {\vc{\psi}_k^{(d)}\vc{e}_{R_k} \over 1 - d}.
\end{align*}
Substituting this equation and (\ref{eq:widetilde{gamma}_{R_k}^{(d)}}) into (\ref{eq:widetilde{gamma}_S^{(d)}}) with $S=R_k$, we have
\begin{align}
\widetilde{\vc{\gamma}}_{R_k}^{(d)}
={\vc{\psi}_k^{(d)} \over \vc{\psi}_k^{(d)}\vc{e}_{R_k}}
(1-d)(\vc{I}-d\vc{P}_{R_k})^{-1}.
\label{eq:gamma_tilde_Rk_def}
\end{align}
For the irreducible stochastic matrix $\vc{P}_{R_k}$, the Abel limit (see \cite[Theorem~3.3]{Scho12}) yields
\begin{align}
\lim_{d \uparrow 1}(1-d)(\vc{I}-d\vc{P}_{R_k})^{-1}
=\vc{e}_{R_k}\vc{\lambda}_{R_k}.
\label{eq:Abel_limit_R_k}
\end{align}
Equivalently, there exists a matrix function $\vc{E}_{R_k}^{(d)}$ such that $\vc{E}_{R_k}^{(d)} \to \vc{O}$ as $d \uparrow 1$ and 
\begin{align}
(1-d)(\vc{I}-d\vc{P}_{R_k})^{-1}
=\vc{e}_{R_k}\vc{\lambda}_{R_k} + \vc{E}_{R_k}^{(d)}.
\label{eq:Abel_limit_Rk}
\end{align}
Moreover, $\vc{\psi}_k^{(d)}/(\vc{\psi}_k^{(d)}\vc{e}_{R_k})$ is a probability vector and hence is bounded for all $d \in (0,1)$. Therefore,
\begin{align*}
\lim_{d \uparrow 1}
{\vc{\psi}_k^{(d)} \over \vc{\psi}_k^{(d)}\vc{e}_{R_k}} \vc{E}_{R_k}^{(d)}
= \vc{0}^{\top}.
\end{align*}
Applying (\ref{eq:Abel_limit_Rk}) to (\ref{eq:gamma_tilde_Rk_def}) now yields
\begin{align*}
\lim_{d \uparrow 1}\widetilde{\vc{\gamma}}_{R_k}^{(d)}=\vc{\lambda}_{R_k},
\end{align*}
which shows (\ref{eq:lim_tilde_gamma_R_k}). Statement (i) follows.

\subsubsection{Proof of Statement (ii)}

Assume $T \neq \varnothing$. Solving (\ref{eq:PR_T}) for $\vc{\gamma}_T^{(d)}$, we obtain
\begin{align}
\vc{\gamma}_T^{(d)}
= \left\{ d \vc{\gamma}_D^{(d)}\vc{e}_D + (1-d) \right\}
{\vc{e}_T^{\top} \over N} (\vc{I}-d\vc{P}_T)^{-1}.
\label{eq:gamma_T_def}
\end{align}
Substituting (\ref{eq:gamma_T_def}) into (\ref{eq:widetilde{gamma}_S^{(d)}}) with $S=T$, we obtain
\begin{align}
\widetilde{\vc{\gamma}}_T^{(d)}
= {\vc{e}_T^{\top} (\vc{I}-d\vc{P}_T)^{-1} 
\over \vc{e}_T^{\top} (\vc{I}-d\vc{P}_T)^{-1}\vc{e}_T
},
\label{eq:gamma_tilde_T_def}
\end{align}
where the denominator is positive because $(\vc{I}-d\vc{P}_T)^{-1}\vc{e}_T \ge \vc{e}_T$. Since $\rho(\vc{P}_T)<1$ and $\vc{P}_T \ge \vc{O}$, applying the monotone convergence theorem entrywise to $(\vc{I}-d\vc{P}_T)^{-1}=\sum_{n=0}^{\infty}(d\vc{P}_T)^n$ yields
\begin{align}
\lim_{d \uparrow 1}(\vc{I}-d\vc{P}_T)^{-1}
= \sum_{n=0}^{\infty} \lim_{d \uparrow 1} (d\vc{P}_T)^n
= \sum_{n=0}^{\infty} (\vc{P}_T)^n
= (\vc{I}-\vc{P}_T)^{-1}.
\label{eq:inv_limit_T}
\end{align}
Combining (\ref{eq:gamma_tilde_T_def}), (\ref{eq:inv_limit_T}), and (\ref{defn:lambda_T}) yields
\begin{align*}
\lim_{d \uparrow 1}
\widetilde{\vc{\gamma}}_T^{(d)}
= {\vc{e}_T^{\top} (\vc{I}-\vc{P}_T)^{-1} 
\over \vc{e}_T^{\top} (\vc{I}-\vc{P}_T)^{-1}\vc{e}_T
}
= \vc{\lambda}_T,
\end{align*}
which shows (\ref{eq:lim_tilde_gamma_T}). Statement (ii) follows.

\subsubsection{Proof of Statement (iii)} 

Assume $D \neq \varnothing$ and distinguish the two cases $T=\varnothing$ and $T\neq\varnothing$. We first consider the case $T=\varnothing$. Eq.~(\ref{eq:PR_D}) then reduces to
\begin{align*}
\vc{\gamma}_D^{(d)}
&= \left\{d\vc{\gamma}_D^{(d)}\vc{e}_D+(1-d)\right\} {\vc{e}_D^{\top} \over N}.
\end{align*}
Combining this equation with (\ref{eq:widetilde{gamma}_S^{(d)}}) for $S=D$, we obtain
\begin{align*}
\widetilde{\vc{\gamma}}_D^{(d)}
= {\vc{e}_D^{\top} \over \vc{e}_D^{\top}\vc{e}_D}
= {\vc{e}_D^{\top} \over |D|}
= \vc{\mu}_D,
\qquad
d \in (0,1),
\end{align*}
which shows that (\ref{eq:lim_tilde_gamma_D}) holds when $T=\varnothing$.

We next consider the case $T\neq\varnothing$. Substituting (\ref{eq:gamma_T_def}) into (\ref{eq:PR_D}) yields
\begin{align*}
\vc{\gamma}_D^{(d)}
&={1 \over N} 
\left\{ d \vc{\gamma}_D^{(d)}\vc{e}_D + (1-d) \right\}
\left\{
\vc{e}_D^{\top} + d\vc{e}_T^{\top} (\vc{I}-d\vc{P}_T)^{-1}\vc{P}_{T,D}
\right\}.
\end{align*}
Eq.~(\ref{eq:widetilde{gamma}_S^{(d)}}) with $S=D$ therefore implies that
\begin{align*}
\widetilde{\vc{\gamma}}_D^{(d)}
={
\vc{e}_D^{\top} + d\vc{e}_T^{\top} (\vc{I}-d\vc{P}_T)^{-1}\vc{P}_{T,D}
\over
\left\{
\vc{e}_D^{\top} + d\vc{e}_T^{\top} (\vc{I}-d\vc{P}_T)^{-1}\vc{P}_{T,D}
\right\}\vc{e}_D
}.
\end{align*}
Taking $d \uparrow 1$ in this equation, and using (\ref{eq:inv_limit_T}), we obtain
\begin{align}
\lim_{d \uparrow 1}\widetilde{\vc{\gamma}}_D^{(d)}
&=
{
\vc{e}_D^{\top} + \vc{e}_T^{\top} (\vc{I}-\vc{P}_T)^{-1}\vc{P}_{T,D}
\over
\left\{
\vc{e}_D^{\top} + \vc{e}_T^{\top} (\vc{I}-\vc{P}_T)^{-1}\vc{P}_{T,D}
\right\}\vc{e}_D
}
\nonumber
\\
&= {
|D|\vc{\mu}_D  + |T|\vc{\mu}_T (\vc{I}-\vc{P}_T)^{-1}\vc{P}_{T,D}
\over
|D| + |T|\vc{\mu}_T (\vc{I}-\vc{P}_T)^{-1}\vc{P}_{T,D}\vc{e}_D
},
\label{eq:260107-01}
\end{align}
where the second equality follows from $\vc{e}_D^{\top}=|D|\vc{\mu}_D$, $\vc{e}_T^{\top}=|T|\vc{\mu}_T$, and $\vc{e}_D^{\top}\vc{e}_D=|D|$. Moreover, (\ref{defn:lambda_T}) and (\ref{eqn:remark_theta_T}) imply that
\begin{align*}
\vc{\mu}_T (\vc{I}-\vc{P}_T)^{-1} 
= \vc{\lambda}_T \cdot \vc{\mu}_T (\vc{I}-\vc{P}_T)^{-1}\vc{e}_T
= {\vc{\lambda}_T \over \theta_T}.
\end{align*}
Applying this equation to (\ref{eq:260107-01}) yields (\ref{eq:lim_tilde_gamma_D}). Statement (iii) follows.

\subsection{Proof of Theorem~\ref{thm:LPR-class_T}}\label{proof:thm:LPR-class_T}

\subsubsection{Proof of Statement (i)}

We first show that $\vc{\lambda}_T$ is a stationary distribution vector of the stochastic matrix $\vc{Q}_T$. Eq.~(\ref{defn:lambda_T}) implies that $\vc{\lambda}_T$ is a probability vector. Substituting (\ref{eqn:theta_T}) into (\ref{eq:decomposition_theta_T}), and using $\vc{\lambda}_T\vc{e}_T=1$, we obtain
\begin{align*}
\vc{\lambda}_T
&= \vc{\lambda}_T\vc{P}_T + (1-\vc{\lambda}_T\vc{P}_T\vc{e}_T)\vc{\mu}_T
\\
&=  \vc{\lambda}_T\vc{P}_T + \vc{\lambda}_T (\vc{e}_T - \vc{P}_T\vc{e}_T)\vc{\mu}_T
\\
&= \vc{\lambda}_T \left[ \vc{P}_T + (\vc{e}_T - \vc{P}_T\vc{e}_T)\vc{\mu}_T \right]
= \vc{\lambda}_T\vc{Q}_T,
\end{align*}
where the last equality follows from (\ref{defn:Q_T}). Hence, $\vc{\lambda}_T$ is a stationary distribution vector of $\vc{Q}_T$.

We next show that $\vc{Q}_T$ is irreducible and aperiodic. Consider the network $\ol{G}=(V,\vc{P})$ with node set $V$ and weight matrix $\vc{P}$. The network $\ol{G}$ has the same reachability relation as the original network $G=(V,\vc{W})$. Let
\begin{align*}
T_{out} &= \{i \in T: 1 - [\vc{P}_T\vc{e}_T]_i>0\},
\end{align*}
and hence
\begin{align}
[\vc{e}_T - \vc{P}_T\vc{e}_T]_i>0
\quad \mbox{for all $i \in T_{out}$.}
\label{eq:ineq:T_{out}}
\end{align}
Eq.~(\ref{sum_rows_in_T}) and Definition~\ref{defn:class_T} imply that $T_{out}$ is nonempty and that every node in $T$ can reach some node in $T_{out}$ while remaining in Class $T$. Therefore, for each $i \in T$, there exists some $j^{(i)} \in T_{out}$ such that
\begin{align}
\left[\sum_{n=0}^{|T|-1} (\vc{P}_T)^n \right]_{i,j^{(i)}} > 0.
\label{eq:reachability_T}
\end{align}
Eqs.~(\ref{eq:ineq:T_{out}}) and (\ref{eq:reachability_T}) yield
\begin{align}
\left[\sum_{n=0}^{|T|-1} 
(\vc{P}_T)^n (\vc{e}_T - \vc{P}_T\vc{e}_T) \right]_i > 0,
\qquad i \in T.
\label{eqn:250601-02}
\end{align}
Using (\ref{defn:Q_T}) and (\ref{eqn:250601-02}), we obtain
\begin{align*}
\left[\sum_{n=1}^{|T|} (\vc{Q}_T)^n \right]_{i,j} 
\ge \left[\sum_{n=0}^{|T|-1} (\vc{P}_T)^n (\vc{e}_T - \vc{P}_T\vc{e}_T)\vc{\mu}_T \right]_{i,j} > 0,\qquad i,j \in T,
\end{align*}
which implies that $\vc{Q}_T$ is irreducible. Moreover, from (\ref{defn:Q_T}) and (\ref{eq:ineq:T_{out}}), we have
\begin{align*}
[\vc{Q}_T]_{i,j} &\ge [\vc{e}_T - \vc{P}_T\vc{e}_T]_i \cdot {1 \over |T|}> 0
\quad \mbox{for all $i \in T_{out}$ and $j \in T$},
\end{align*}
and hence $[\vc{Q}_T]_{i,i}>0$ for all $i \in T_{out}$. Because the irreducible stochastic matrix $\vc{Q}_T$ has a positive diagonal entry, $\vc{Q}_T$ is aperiodic. Statement (i) follows.

\subsubsection{Proof of Statement (ii)}

It is easily confirmed by induction that $\vc{\lambda}_T(n)$ is a probability vector and hence $\vc{\lambda}_T(n)\vc{e}_T =1$ for all $n=0,1,\dots$. Using $\vc{\lambda}_T(n)\vc{e}_T =1$ and (\ref{defn:Q_T}), we rewrite (\ref{lim:lambda_T-02b}) as follows: For $n=0,1,\dots$,
\begin{align*}
\vc{\lambda}_T(n+1)
&= \vc{\lambda}_T(n)\vc{P}_T 
+ \vc{\lambda}_T(n) \left(\vc{e}_T - \vc{P}_T\vc{e}_T \right) \vc{\mu}_T,\nonumber
\\
&= \vc{\lambda}_T(n)
\left[\vc{P}_T + \left(\vc{e}_T - \vc{P}_T\vc{e}_T \right) \vc{\mu}_T \right]
\nonumber
\\
&= \vc{\lambda}_T(n) \vc{Q}_T.
\end{align*}
This recursion together with (\ref{lim:lambda_T-02a}) implies that
\begin{align}
\vc{\lambda}_T(n) = \vc{\xi}_T(\vc{Q}_T)^n,
\qquad n=0,1,\dots.
\label{induction_lambda_T(n)}
\end{align}
Applying (\ref{eqn:lim_(Q_T)^n}) to (\ref{induction_lambda_T(n)}), and using $\vc{\xi}_T \vc{e}_T=1$, we obtain
\begin{align*}
\lim_{n\to \infty} \vc{\lambda}_T(n) 
= \vc{\xi}_T \lim_{n\to \infty} (\vc{Q}_T)^n 
= \vc{\xi}_T \vc{e}_T\vc{\lambda}_T
= \vc{\lambda}_T.
\end{align*}
Therefore, Statement (ii) follows.

\subsection{Proof of Lemma~\ref{lem:random_surfer-01}}\label{proof:lem_random_surfer-01}

For ease of exposition, we prove Statements (ii) and (iii) under the assumption $T\neq\varnothing$. When $T=\varnothing$, the same reasoning applies after deleting from the block expressions below all blocks, block rows, and block columns labeled by $T$, $\widehat{T}$, $R'$, and $D'$ according to the convention introduced in Section~\ref{subsec:node_classification}.

\subsubsection{Proof of Statement (i)}

Because $\vc{M}_{\widehat{T}}$ is a finite stochastic matrix, it has at least one stationary distribution vector. Let $\vc{\varphi}_{\widehat{T}} \ge \vc{0}^{\top}$ denote an arbitrary stationary distribution vector of $\vc{M}_{\widehat{T}}$, that is, a probability vector satisfying
\begin{align}
\vc{\varphi}_{\widehat{T}} = \vc{\varphi}_{\widehat{T}}\vc{M}_{\widehat{T}}.
\label{eqn:varphi-01}
\end{align}
We partition the probability vector $\vc{\varphi}_{\widehat{T}}$ as
\begin{align}
\vc{\varphi}_{\widehat{T}} =
\begin{blockarray}{c@{\,}ccc@{\,~}c}
 & R' & T & D'  \\ 
\begin{block}{c@{\,}(ccc)@{\,~}c}  
		& \vc{\varphi}_{R'} & \vc{\varphi}_{T} & \vc{\varphi}_{D'} \\
 \end{block}
\end{blockarray}
.
\label{partition_varphi}
\end{align}
Eqs.~(\ref{defn:M_T^{(e)}}) and (\ref{eqn:varphi-01}) imply that
\begin{align}
\vc{\varphi}_{R'} &= \vc{\varphi}_T\vc{P}_{T,R},
\label{eqn:xi_R'} \\
\vc{\varphi}_{D'} & = \vc{\varphi}_T\vc{P}_{T,D},
\label{eqn:xi_D'} \\
\vc{\varphi}_T
&= \vc{\varphi}_{R'}\vc{e}_R \vc{\mu}_T  + \vc{\varphi}_T \vc{P}_T
+ \vc{\varphi}_{D'}\vc{e}_D\vc{\mu}_T.
\label{eqn:xi_T}
\end{align}
Substituting (\ref{eqn:xi_R'}) and (\ref{eqn:xi_D'}) into (\ref{eqn:xi_T}) yields
\begin{align}
\vc{\varphi}_T
&=
  \vc{\varphi}_T \vc{P}_{T,R}\vc{e}_R \vc{\mu}_T
+ \vc{\varphi}_T \vc{P}_T
+ \vc{\varphi}_T \vc{P}_{T,D}\vc{e}_D \vc{\mu}_T
\nonumber \\
&= \vc{\varphi}_T
\left[
\vc{P}_T + (\vc{P}_{T,R}\vc{e}_R + \vc{P}_{T,D}\vc{e}_D)\vc{\mu}_T
\right]
\nonumber \\
&= \vc{\varphi}_T
\left[
\vc{P}_T + (\vc{e}_T - \vc{P}_T \vc{e}_T)\vc{\mu}_T
\right]
\nonumber \\
&= \vc{\varphi}_T \vc{Q}_T,
\label{eqn:23_0126-01}
\end{align}
where the third equality follows from (\ref{sum_rows_in_T}), and the last equality follows from (\ref{defn:Q_T}). If $\vc{\varphi}_T=\vc{0}^{\top}$, then (\ref{eqn:xi_R'}) and (\ref{eqn:xi_D'}) yield $\vc{\varphi}_{R'}=\vc{0}^{\top}$ and $\vc{\varphi}_{D'}=\vc{0}^{\top}$, which contradicts the fact that $\vc{\varphi}_{\widehat{T}}$ is a probability vector. Therefore, $\vc{\varphi}_T\ne \vc{0}^{\top}$. Recall that $\vc{Q}_T$ has the unique stationary distribution vector $\vc{\lambda}_T$; see Theorem~\ref{thm:LPR-class_T}. Hence, (\ref{eqn:23_0126-01}) implies that there exists $c > 0$ such that
\begin{align}
\vc{\varphi}_T = c\,\vc{\lambda}_T.
\label{eqn:xi_T-01}
\end{align}
Moreover, (\ref{eqn:xi_R'}) and (\ref{eqn:xi_D'}) yield
\begin{align}
\vc{\varphi}_{R'} &= c\,\vc{\lambda}_T\vc{P}_{T,R},
\label{eqn:xi_R'-01} \\
\vc{\varphi}_{D'} &= c\,\vc{\lambda}_T\vc{P}_{T,D},
\label{eqn:xi_D'-01}
\end{align}
respectively. Using (\ref{partition_varphi}) and (\ref{eqn:xi_T-01})--(\ref{eqn:xi_D'-01}), we obtain
\begin{align}
1
&= \vc{\varphi}_{\widehat{T}}\vc{e}_{\widehat{T}}
= \vc{\varphi}_{R'}\vc{e}_R + \vc{\varphi}_T\vc{e}_T + \vc{\varphi}_{D'}\vc{e}_D
\nonumber\\
&= c \left[
\vc{\lambda}_T \vc{e}_T + \vc{\lambda}_T(\vc{P}_{T,R}\vc{e}_R  + \vc{P}_{T,D}\vc{e}_D )
\right] \nonumber\\
&= c\left[
1 + \vc{\lambda}_T(\vc{e}_T  -  \vc{P}_T\vc{e}_T)
\right] \nonumber\\
&= c\left[
1 + (1  -  \vc{\lambda}_T\vc{P}_T\vc{e}_T)
\right] \nonumber\\
&= c(1 + \theta_T),
\label{eq:constant_c}
\end{align}
where the second last equality follows from $\vc{\lambda}_T\vc{e}_T=1$, and the last equality follows from (\ref{eqn:theta_T}). Eq.~(\ref{eq:constant_c}) shows that
\begin{align}
c = {1 \over 1 + \theta_T}.
\label{eq:constant_c_02}
\end{align}
Substituting (\ref{eqn:xi_T-01})--(\ref{eqn:xi_D'-01}) into (\ref{partition_varphi}), and using (\ref{eq:constant_c_02}) and (\ref{defn:lambda_{T^{(e)}}}), we have
\begin{align*}
\vc{\varphi}_{\widehat{T}}
=
{1 \over 1 + \theta_T}
\begin{blockarray}{ccc}
R' & T  & D'
\\
\begin{block}{(ccc)}
\vc{\lambda}_T\vc{P}_{T,R} &
\vc{\lambda}_T   &
\vc{\lambda}_T\vc{P}_{T,D}
\\
 \end{block}
\end{blockarray}
= \vc{\lambda}_{\widehat{T}},
\end{align*}
which implies that $\vc{\lambda}_{\widehat{T}}$ is the unique stationary distribution vector of $\vc{M}_{\widehat{T}}$. Therefore, Statement (i) follows.

\subsubsection{Proof of Statement (ii)}

Eqs.~(\ref{defn:P_R}) and (\ref{defn:matrix_M}) imply
\begin{align}
\vc{M}^{\nu}
=
\begin{blockarray}{c ccc cc}
       	& 
R_1    	& 
\cdots 	& 
R_K & 
\widehat{T}	& 
D   
\\ 
\begin{block}{c(ccc|c|c)}
R_1          	&
(\vc{P}_{R_1})^{\nu} 	&
			 	&
\mbox{\large $\vc{O}$}		 &
\vc{O}		 &
\vc{O}		 
\\	
\vdots       &
			 &
\ddots		 &
			 &
\vdots		 &
\vdots		 		 
\\
R_K          &
\mbox{\large $\vc{O}$}		 &
			 &
(\vc{P}_{R_K})^{\nu} &
\vc{O}		 &
\vc{O}		 
\\	
\BAhhline{~-----}
\widehat{T}               &
\vc{O}		 &
\cdots		 &
\vc{O}		 &
\rule{0mm}{5mm}(\vc{M}_{\widehat{T}})^{\nu}		 &
\vc{O}		 
\\
\BAhhline{~-----}		
D			&
\vc{O}		&
\cdots		 &
\vc{O}		&
\vc{O}		&
\vc{I}_D		 
\\			 
\end{block}
\end{blockarray}
~.
\label{eqn:23_0123-03}
\end{align}
Statement~(i) shows that the stochastic matrix $\vc{M}_{\widehat{T}}$ has the unique stationary distribution vector $\vc{\lambda}_{\widehat{T}}$. For each $k \in \bbK$, the irreducible stochastic matrix $\vc{P}_{R_k}$ has the unique stationary distribution vector $\vc{\lambda}_{R_k}$. Therefore, \cite[Chapter~V, Section~2, Theorem~2.1, p.~175]{Doob53} implies that
\begin{align*}
\lim_{n\to\infty}{1 \over n}\sum_{\nu=0}^{n-1} (\vc{M}_{\widehat{T}})^{\nu}
&= \vc{e}_{\widehat{T}}\vc{\lambda}_{\widehat{T}},
\notag
\\
\lim_{n\to\infty}{1 \over n}\sum_{\nu=0}^{n-1} (\vc{P}_{R_k})^{\nu}
&= \vc{e}_{R_k}\vc{\lambda}_{R_k}, \qquad k \in \bbK.
\end{align*}
Combining these limits with (\ref{eqn:23_0123-03}) yields (\ref{eqn:cesaro_limit-01}). Therefore, Statement~(ii) follows.

\subsubsection{Proof of Statement (iii)}

Eqs.~(\ref{defn:eta}) and (\ref{defn:lambda_D}) imply that $\vc{\varpi}$ can be expressed as
\begin{align}
\vc{\varpi}
=
{1 \over N}
\begin{blockarray}{ccc ccc c}
R_1 & \cdots & R_K & R' & T & D' & D 
\\ 
\begin{block}{(ccc|ccc|c)} 
\vc{e}_{R_1}^{\top}   & 
\cdots &
\vc{e}_{R_K}^{\top}   &
\vc{0}   &  
\vc{e}_T^{\top}   & 
\vc{0}   &  
|D|\vc{\mu}_D   
\\  
 \end{block}
\end{blockarray}
.
\label{eqn:hat{mu}}
\end{align}
Combining (\ref{eqn:cesaro_limit-01}) with (\ref{eqn:hat{mu}}) yields
\begin{align}
\lim_{n\to\infty}{1 \over n} \sum_{\nu=0}^{n-1} \vc{\varpi}\vc{M}^{\nu}
&=
{1 \over N}
\begin{blockarray}{ccc ccc c}
R_1 & \cdots & R_K & R' & T & D' & D 
\\ 
\begin{block}{(ccc|ccc|c)} 
\vc{e}_{R_1}^{\top}   & 
\cdots &
\vc{e}_{R_K}^{\top}   &
\vc{0}   &  
\vc{e}_T^{\top}   & 
\vc{0}   &  
|D|\vc{\mu}_D  
\\  
 \end{block}
\end{blockarray}
\notag
\\
& \qquad \times
\begin{blockarray}{c ccc c c}
       & 
R_1    & 
\cdots & 
R_K & 
\widehat{T}    & 
D   
\\ 
\begin{block}{c(ccc|c|c)}
R_1          	&
\vc{e}_{R_1}\vc{\lambda}_{R_1} 	&
			 	&
\mbox{\large $\vc{O}$}		 &
\vc{O}		 &
\vc{O}		 
\\	
\vdots       &
			 &
\ddots		 &
			 &
\vdots		 &
\vdots		 		 
\\
R_K          &
\mbox{\large $\vc{O}$}		 &
			 &
\vc{e}_{R_K}\vc{\lambda}_{R_K} &
\vc{O}		 &
\vc{O}		 
\\	
\BAhhline{~------}
\widehat{T}               &
\vc{O}		 &
\cdots		 &
\vc{O}		 &
\vc{e}_{\widehat{T}}\vc{\lambda}_{\widehat{T}}		 &
\vc{O}		 
\\
\BAhhline{~------}		
D			&
\vc{O}		&
\cdots		 &
\vc{O}		&
\vc{O}		&
\vc{I}_D		 
\\			 
\end{block}
\end{blockarray}
\nonumber
\\
&= {1 \over N}
\begin{blockarray}{cccccc}
R_1 & \cdots & R_K & \widehat{T}  & D
\\ 
\begin{block}{(ccccc)@{~}c@{\,}} 
|R_1| \vc{\lambda}_{R_1} &
\dots				&
|R_K| \vc{\lambda}_{R_K} &
|T| \vc{\lambda}_{\widehat{T}} &
|D| \vc{\mu}_D     &
\\
 \end{block}
\end{blockarray}
.
\label{eqn:240120-01}
\end{align}
Using (\ref{defn:lambda_{T^{(e)}}}), the decomposition $R'=R_1'\sqcup R_2'\sqcup \cdots \sqcup R_K'$, and (\ref{defn:P_{T,R}}), we rewrite the $\widehat{T}$-block on the right-hand side of (\ref{eqn:240120-01}) as
\begin{align*}
{|T| \over N}\vc{\lambda}_{\widehat{T}}
&=
{|T| \over N} {1 \over 1+\theta_T}
\begin{blockarray}{ccc}
R' & T  & D' \\ 
\begin{block}{(ccc)}
\vc{\lambda}_T\vc{P}_{T,R} 	&
\vc{\lambda}_T   			&  
\vc{\lambda}_T\vc{P}_{T,D}   
\\  
 \end{block}
\end{blockarray}
\notag
\\
&=
{1 \over N}
\begin{blockarray}{ccc cc}
R_1' & \cdots & R_K' & T  & D' 
\\ 
\begin{block}{(ccccc)}
\dm{ |T|\vc{\lambda}_T\vc{P}_{T,R_1} \over 1+\theta_T} 	&
\cdots 							&
\dm{ |T|\vc{\lambda}_T\vc{P}_{T,R_K} \over 1+\theta_T}	&
\dm{ |T|\vc{\lambda}_T \over 1+\theta_T}  				&  
\dm{ |T|\vc{\lambda}_T\vc{P}_{T,D} \over 1+\theta_T}  
\\  
 \end{block}
\end{blockarray}
~.
\end{align*}
Right-multiplying both sides of (\ref{eqn:240120-01}) by $\vc{F}$ in (\ref{defn:E}) adds each $R_k'$-block to the corresponding $R_k$-block for $k \in \bbK$ and adds the $D'$-block to the $D$-block. Hence,
\begin{align*}
\Bigl(
\mbox{The $R_k$-block of~}
\lim_{n\to\infty}{1 \over n}\sum_{\nu=0}^{n-1}\vc{\varpi}\vc{M}^{\nu}\vc{F} \Bigr)
&= {1 \over N}\left(|R_k| \vc{\lambda}_{R_k} 
+ \dm{|T|\vc{\lambda}_T \over 1+\theta_T} \vc{P}_{T,R_k}\right)
= \vc{\pi}_{R_k},
\\
\Bigl(
\mbox{The $T$-block of~} \lim_{n\to\infty}{1 \over n}\sum_{\nu=0}^{n-1}\vc{\varpi}\vc{M}^{\nu}\vc{F} 
\Bigr)
&= {1 \over N}\dm{|T|\vc{\lambda}_T \over 1+\theta_T} = \vc{\pi}_T,
\\
\Bigl(
\mbox{The $D$-block of~}\lim_{n\to\infty}{1 \over n}\sum_{\nu=0}^{n-1}\vc{\varpi}\vc{M}^{\nu}\vc{F}
\Bigr)
&= {1 \over N}\left(|D| \vc{\mu}_D + \dm{|T|\vc{\lambda}_T \over 1+\theta_T} \vc{P}_{T,D}\right)
= \vc{\pi}_D,
\end{align*}
where the equalities on the right follow from (\ref{eqn:PureRank_vectors}). Therefore, (\ref{eqn:vector_pi}) holds, and Statement~(iii) follows.

\subsection{Proof of Theorem~\ref{thm:random_surfer}}\label{proof:thm:random_surfer}

\subsubsection{Proof of Statement (i)}

We prove only the case $T \neq \varnothing$, because the case $T=\varnothing$ is proved in the same way with appropriate modifications. Eqs.~(\ref{eqn:vector_pi}) and (\ref{defn:E}) show that $\pi_j$ equals the long-run visit frequency to $j$ plus, when $j'$ exists, that to its copy $j'$. More precisely,
\begin{align}
\pi_j
&=
\left\{
\begin{array}{ll}
\dm\lim_{n\to \infty}
{1 \over n}
\sum_{\nu=0}^{n-1} \left[ \vc{\varpi} \vc{M}^{\nu}\right]_j, & j \in T,
\\[1.5em]
\dm\lim_{n\to \infty}
{1 \over n}
\sum_{\nu=0}^{n-1} \left[ \vc{\varpi} \vc{M}^{\nu}\right]_j
+
\dm\lim_{n\to \infty}
{1 \over n}
\sum_{\nu=0}^{n-1} \left[ \vc{\varpi} \vc{M}^{\nu}\right]_{j'}, & j \in V \setminus T= R \sqcup D.
\end{array}
\right.
\label{connection_pi_wh{pi}}
\end{align}
The definition of the Markov chain $\{\widehat{X}_{\nu}\}$ implies that, for all $j \in \widehat{V}$,
\begin{align}
\lim_{n\to \infty}
{1 \over n}
\sum_{\nu=0}^{n-1} \left[ \vc{\varpi} \vc{M}^{\nu}\right]_j
&=
\lim_{n\to \infty}
{1 \over n}
\sum_{\nu=0}^{n-1} \sum_{i \in \widehat{V}} [\vc{\varpi}]_i
\PP \! \left(\widehat{X}_{\nu} = j \mid \widehat{X}_0 = i
\right)
\nonumber
\\
&=
\lim_{n\to \infty}
{1 \over n}
\sum_{i \in \widehat{V}} [\vc{\varpi}]_i
\EE \!
\left[
\sum_{\nu=0}^{n-1} \one(\widehat{X}_{\nu} = j) \mid \widehat{X}_0 = i
\right].
\label{eqn:occupation_identity}
\end{align}
Substituting (\ref{eqn:occupation_identity}) into (\ref{connection_pi_wh{pi}}) yields (\ref{eqn:pi_j}). Therefore, Statement~(i) follows.

\subsubsection{Proof of Statement (ii)}

Conditioning on $\widehat{X}_0 \in T$, the Markov chain $\{\widehat{X}_n\}$ starts in Class $T$ with the uniform distribution vector $\vc{\mu}_T$, evolves according to $\vc{P}_T$ until it exits to Class $R'$ or Class $D'$ (with the exits to Class $R'$ and Class $D'$ governed by $\vc{P}_{T,R}$ and $\vc{P}_{T,D}$, respectively), and then immediately returns to Class $T$ with the same distribution vector $\vc{\mu}_T$. Hence, for each $m \in \bbZ_+$, the evolution after time $\tau_T(m)$ restarts from $\vc{\mu}_T$ and is independent of the past by the strong Markov property. Therefore, the sojourn times $\{C_T(m): m \in \bbZ_+\}$ are i.i.d. Furthermore, since $\widehat{X}_{\tau_T(m)}$ has distribution $\vc{\mu}_T$ for all $m \in \bbZ_+$, an application of the strong Markov property at time $\tau_T(m)$ yields
\begin{align}
&
\EE[C_T(m) \mid \widehat{X}_0 \in T]
\nonumber
\\
&\quad =
\sum_{i \in T} [\vc{\mu}_T]_i
\EE[C_T(m) \mid \widehat{X}_{\tau_T(m)} = i]
\nonumber
\\
&\quad =
\sum_{i \in T} [\vc{\mu}_T]_i
\sum_{n=1}^{\infty}
\PP(C_T(m) \ge n \mid \widehat{X}_{\tau_T(m)} = i)
\notag
\\
&\quad =
\sum_{i \in T} [\vc{\mu}_T]_i
\sum_{n=1}^{\infty}
\PP(\widehat{X}_{\tau_T(m)+\nu} \in T,~ \forall \nu=0,1,\dots,n-1 \mid \widehat{X}_{\tau_T(m)} = i)
\notag
\\
&\quad =
\sum_{i \in T} [\vc{\mu}_T]_i
\sum_{n=1}^{\infty} [ (\vc{P}_T)^{n-1}\vc{e}_T]_i
= \vc{\mu}_T \sum_{n=1}^{\infty} (\vc{P}_T)^{n-1}\vc{e}_T
\notag
\\
&\quad =
\vc{\mu}_T  (\vc{I} - \vc{P}_T)^{-1}\vc{e}_T.
\label{eqn:240121-01}
\end{align}
Combining (\ref{eqn:240121-01}) with (\ref{eqn:remark_theta_T}) yields (\ref{defn:theta_T-a})--(\ref{defn:theta_T-c}). Therefore, Statement~(ii) follows.


\begin{table}[!p]
\centering
\caption{Summary of network statistics for the \texttt{twitter\_combined}, \texttt{cit-HepPh}, and \texttt{ca-AstroPh} networks. Panel~A reports basic network statistics and the numbers of nodes in Classes $R$, $T$, and $D$. Panel~B reports, for the largest class in the node classification $\calC$, its size, together with its percentage of the total number of nodes, the iteration count, and the median wall-clock runtime (sec) of the power iteration used to compute the local importance vector for that class.}
\label{tab:network_stats_all}

\textbf{Panel A: Network statistics.}

\setlength{\tabcolsep}{8pt}
\scalebox{0.8}{%
\begin{tabular}{l|rrr}
\hline
Statistic   & \textbf{\texttt{twitter\_combined}} & \textbf{\texttt{cit-HepPh}} & \textbf{\texttt{ca-AstroPh}} \\
\hline
Number of links      	& 1,768,149 & 421,578 & 396,160 \\
Number of nodes      	& 81,306    & 34,546  & 18,772  \\
Nodes in Class $R$     	& 624       & 7       & 18,772  \\
Nodes in Class $T$     	& 69,473    & 32,151  & 0       \\
Nodes in Class $D$     	& 11,209    & 2,388   & 0       \\
\hline
\end{tabular}
}
\vspace{0.6em}

\textbf{Panel B: Statistics for the largest class in the node classification}

\setlength{\tabcolsep}{10pt}
\scalebox{0.8}{%
\begin{tabular}{l|cccc}
\hline
Network & Class & Size (\%) & Iter. & Median runtime (sec)\\
\hline
\texttt{twitter\_combined} & Transient & $69{,}473(85.4\%)$ & 2{,}816 & 9.506244\\
\texttt{cit-HepPh}         & Transient & $32{,}151(93.1\%)$ &      45 & 0.042390\\
\texttt{ca-AstroPh}        & Recurrent & $17{,}903(95.4\%)$ & 1{,}877 & 0.836499\\
\hline
\end{tabular}
}
\begin{flushleft}
\footnotesize
\emph{Note.~} In Panel~A, Class $R$ denotes the union of all recurrent classes in the node classification $\calC$. In Panel~B, ``Recurrent'' indicates that the largest class in $\calC$ is a recurrent class, and the corresponding reported size is the size of the largest recurrent class rather than the total number of recurrent nodes. The runtime reported in Panel~B is the median of the five measured runs obtained after excluding one warm-up run.
\end{flushleft}
\end{table}

\begin{table}[!p]
\centering
\caption{Distribution of recurrent class sizes in the \texttt{twitter\_combined}, \texttt{cit-HepPh}, and \texttt{ca-AstroPh} networks. Each cell reports the number of recurrent classes of the indicated size.}
\label{tab:recurrent_class_distribution_all}
\setlength{\tabcolsep}{3.5pt} 

\scalebox{0.8}{%
\begin{tabular}{l|r|cccccccccccccc}
\hline
\multirow{2}{*}{Network}& & \multicolumn{14}{c}{Number of Recurrent Classes by Size} \\
\cline{2-16} 
 & Size & 1 & 2 & 3 & 4 & 5 & 6 & 7 & 8 & 9 & 10 & 12 & 14 & 18 & 17,903 \\
\hline
\texttt{twitter\_combined} & \multirow{3}{*}{Number}  & 2 & 222 & 28 & 9 & 5 & 2 & 1 & 0 & 0 & 0 & 0 & 1 & 0 & 0 \\
\texttt{cit-HepPh}         & & 5 & 1   & 0  & 0 & 0 & 0 & 0 & 0 & 0 & 0 & 0 & 0 & 0 & 0 \\
\texttt{ca-AstroPh}       &  & 1 & 140 & 84 & 36& 14& 3 & 3 & 3 & 1 & 2 & 1 & 0 & 1 & 1 \\
\hline
\end{tabular}
}
\end{table}

\begin{table}[!p] 
\centering
\caption{Comparison of PureRank and PageRank with damping factor $d$ for the \texttt{twitter\_combined}, \texttt{cit-HepPh}, and \texttt{ca-AstroPh} networks. Panel~A reports Top-100 Overlap (\%), Kendall's $\tau_b$, and Pearson correlation coefficient (PCC) between PureRank and PageRank with damping factor $d$. Panel~B reports computational cost: the iteration count and the median wall-clock runtime (sec) for PageRank, and the median wall-clock runtime (sec) for PureRank.}
\label{tab:comparison_page_pure_all}
\setlength{\tabcolsep}{4pt} 

{Panel A: Similarity metrics.}

\scalebox{0.8}{%
\begin{tabular}{l|rrr|rrr|rrr}
\hline
    & \multicolumn{3}{c|}{\texttt{twitter\_combined}} & \multicolumn{3}{c|}{\texttt{cit-HepPh}} & \multicolumn{3}{c}{\texttt{ca-AstroPh}} 
\\
\cline{2-4} \cline{5-7} \cline{8-10}
$d$ & \multicolumn{3}{c|}{vs. Pure} & \multicolumn{3}{c|}{vs. Pure} & \multicolumn{3}{c}{vs. Pure} 
\\
& Top-100 & $\tau_b$ & PCC & Top-100 & $\tau_b$ & PCC & Top-100 & $\tau_b$ & PCC 
\\
\hline
0.1   & 64 & 0.434 & 0.806 & 52 & 0.858 & 0.785 & 42 & 0.564 & 0.743 
\\
0.2   & 66 & 0.450 & 0.817 & 57 & 0.870 & 0.825 & 45 & 0.577 & 0.764 
\\
0.3   & 66 & 0.465 & 0.827 & 68 & 0.881 & 0.862 & 48 & 0.591 & 0.786 
\\
0.4   & 67 & 0.481 & 0.835 & 76 & 0.892 & 0.897 & 55 & 0.607 & 0.810 
\\
0.5   & 70 & 0.499 & 0.839 & 80 & 0.903 & 0.927 & 60 & 0.626 & 0.835 
\\
0.6   & 73 & 0.518 & 0.834 & 83 & 0.913 & 0.953 & 67 & 0.650 & 0.863 
\\
0.7   & 75 & 0.541 & 0.811 & 88 & 0.923 & 0.973 & 74 & 0.681 & 0.894 
\\
0.8   & 76 & 0.571 & 0.745 & 90 & 0.932 & 0.987 & 82 & 0.723 & 0.928 
\\
0.85  & 78 & 0.590 & 0.677 & 91 & 0.937 & 0.991 & 85 & 0.752 & 0.947 
\\
0.9   & 81 & 0.615 & 0.570 & 92 & 0.942 & 0.991 & 87 & 0.791 & 0.967 
\\
0.95  & 83 & 0.649 & 0.406 & 97 & 0.947 & 0.976 & 91 & 0.851 & 0.986 
\\
0.99  & 60 & 0.694 & 0.223 & 95 & 0.950 & 0.649 & 98 & 0.940 & 0.999 
\\
0.999 & 3  & 0.709 & 0.174 & 93 & 0.951 & 0.093 & 99 & 0.966 & 1.000 
\\
\hline
\end{tabular}
}

\vspace{0.6em}

{Panel B: Computational cost.}

\scalebox{0.8}{%
\begin{tabular}{l|rr|rr|rr}
\hline
    & \multicolumn{2}{c|}{\texttt{twitter\_combined}} & \multicolumn{2}{c|}{\texttt{cit-HepPh}} & \multicolumn{2}{c}{\texttt{ca-AstroPh}}
\\
\cline{2-3} \cline{4-5} \cline{6-7}
Measure & Iter. & Time (sec) & Iter. & Time (sec) & Iter. & Time (sec)
\\
\hline
PureRank & -- & 18.713799 & -- & 1.944775 & -- & 2.079275
\\
\hline
PageRank (0.1) & 9 & 1.915265 & 8 & 0.321054 & 9 & 0.256067
\\
PageRank (0.2) & 12 & 1.923997 & 11 & 0.322644 & 12 & 0.257042
\\
PageRank (0.3) & 16 & 1.935446 & 14 & 0.324699 & 16 & 0.258966
\\
PageRank (0.4) & 21 & 1.954718 & 18 & 0.327138 & 20 & 0.260982
\\
PageRank (0.5) & 27 & 1.973523 & 23 & 0.330480 & 26 & 0.263849
\\
PageRank (0.6) & 37 & 2.008050 & 31 & 0.335339 & 34 & 0.267984
\\
PageRank (0.7) & 52 & 2.059397 & 45 & 0.344180 & 48 & 0.273482
\\
PageRank (0.8) & 83 & 2.166773 & 71 & 0.363557 & 75 & 0.287322
\\
PageRank (0.85) & 114 & 2.258419 & 97 & 0.380400 & 102 & 0.300424
\\
PageRank (0.9) & 175 & 2.453739 & 150 & 0.414509 & 157 & 0.324843
\\
PageRank (0.95) & 356 & 3.018558 & 306 & 0.513157 & 321 & 0.400706
\\
PageRank (0.99) & 1,801 & 7.515992 & 1,517 & 1.315879 & 1,637 & 1.011284
\\
PageRank (0.999) & 18,091 & 60.488411 & 11,831 & 9.226185 & 16,436 & 8.676888
\\
\hline
\end{tabular}}
\begin{flushleft}
\footnotesize
\emph{Notes.~} For PageRank, the reported time for each $d$ is computed as the sum of (i) the one-time cost of constructing the weight matrix $\vc{W}$ and the normalized weight matrix $\vc{P}$, measured once per network; and (ii) the median wall-clock time of the power iteration for PageRank with damping factor $d$, computed from five measured runs after excluding one warm-up run. This convention is adopted to clarify the impact of $d$ on the iterative component of the computational cost. For PureRank, the reported time is the median of five measured total wall-clock times obtained after excluding one warm-up run, where each run includes node classification, computation of local importance vectors (including construction of $\vc{W}$ and $\vc{P}$), and RDI-L2G construction. No single iteration count is reported for PureRank because the local importance vectors are computed classwise, requiring multiple separate power iterations.
\end{flushleft}
\end{table}

\begin{table}[!p]
\centering
\caption{Top-100 node distribution (\%) and classwise average score per node, computed over all nodes in each class, for PureRank and PageRank ($d$) on the \texttt{twitter\_combined} and \texttt{cit-HepPh} networks. PageRank ($d$) denotes PageRank with damping factor $d$.}
\label{tab:comparison_page_pure_top100}

\par\smallskip
\noindent{Panel A: Results on the \texttt{twitter\_combined} network.}
\par\smallskip

\scalebox{0.8}{%
\begin{tabular}{lcccccc}
\hline
\multirow{2}{*}{Measure} & \multicolumn{3}{c}{Top-100 Node Distribution (\%)} & \multicolumn{3}{c}{Average Score per Node (by Class)} \\
\cline{2-7}
 & Class $R$ & Class $T$ & Class $D$ & Class $R$ & Class $T$ & Class $D$ \\
\hline
PureRank         & 1  & 95 & 4  & $1.81\times10^{-5}$ & $1.19\times10^{-5}$ & $1.46\times10^{-5}$ \\
PageRank (0.1)   & 2  & 92 & 6  & $1.37\times10^{-5}$ & $1.24\times10^{-5}$ & $1.16\times10^{-5}$ \\
PageRank (0.2)   & 2  & 91 & 7  & $1.53\times10^{-5}$ & $1.25\times10^{-5}$ & $1.09\times10^{-5}$ \\
PageRank (0.3)   & 2  & 91 & 7  & $1.72\times10^{-5}$ & $1.26\times10^{-5}$ & $1.01\times10^{-5}$ \\
PageRank (0.4)   & 3  & 90 & 7  & $1.97\times10^{-5}$ & $1.27\times10^{-5}$ & $9.34\times10^{-6}$ \\
PageRank (0.5)   & 3  & 90 & 7  & $2.29\times10^{-5}$ & $1.28\times10^{-5}$ & $8.51\times10^{-6}$ \\
PageRank (0.6)   & 3  & 92 & 5  & $2.74\times10^{-5}$ & $1.29\times10^{-5}$ & $7.61\times10^{-6}$ \\
PageRank (0.7)   & 4  & 91 & 5  & $3.44\times10^{-5}$ & $1.30\times10^{-5}$ & $6.65\times10^{-6}$ \\
PageRank (0.8)   & 11 & 84 & 5  & $4.75\times10^{-5}$ & $1.31\times10^{-5}$ & $5.60\times10^{-6}$ \\
PageRank (0.85)  & 11 & 84 & 5  & $5.99\times10^{-5}$ & $1.30\times10^{-5}$ & $5.02\times10^{-6}$ \\
PageRank (0.9)   & 11 & 85 & 4  & $8.36\times10^{-5}$ & $1.29\times10^{-5}$ & $4.38\times10^{-6}$ \\
PageRank (0.95)  & 13 & 83 & 4  & $1.49\times10^{-4}$ & $1.25\times10^{-5}$ & $3.61\times10^{-6}$ \\
PageRank (0.99)  & 41 & 56 & 3  & $5.11\times10^{-4}$ & $9.43\times10^{-6}$ & $2.30\times10^{-6}$ \\
PageRank (0.999) & 98 & 2  & 0  & $1.31\times10^{-3}$ & $2.50\times10^{-6}$ & $5.78\times10^{-7}$ \\
\hline
\end{tabular}
}

\medskip

\par\medskip
\noindent{Panel B: Results on the \texttt{cit-HepPh} network.}
\par\smallskip

\scalebox{0.8}{%
\begin{tabular}{lcccccc}
\hline
\multirow{2}{*}{Measure} & \multicolumn{3}{c}{Top-100 Node Distribution (\%)} & \multicolumn{3}{c}{Average Score per Node (by Class)} \\
\cline{2-7}
 & Class $R$ & Class $T$ & Class $D$ & Class $R$ & Class $T$ & Class $D$ \\
\hline
PureRank         & 0  & 43  & 57  & $7.06\times10^{-5}$ & $2.24\times10^{-5}$ & $1.17\times10^{-4}$ \\
PageRank (0.1)   & 0  & 73  & 27  & $3.28\times10^{-5}$ & $2.87\times10^{-5}$ & $3.29\times10^{-5}$ \\
PageRank (0.2)   & 0  & 69  & 31  & $3.74\times10^{-5}$ & $2.83\times10^{-5}$ & $3.74\times10^{-5}$ \\
PageRank (0.3)   & 0  & 63  & 37  & $4.32\times10^{-5}$ & $2.79\times10^{-5}$ & $4.26\times10^{-5}$ \\
PageRank (0.4)   & 0  & 58  & 42  & $5.09\times10^{-5}$ & $2.75\times10^{-5}$ & $4.85\times10^{-5}$ \\
PageRank (0.5)   & 0  & 53  & 47  & $6.22\times10^{-5}$ & $2.70\times10^{-5}$ & $5.52\times10^{-5}$ \\
PageRank (0.6)   & 0  & 51  & 49  & $8.01\times10^{-5}$ & $2.64\times10^{-5}$ & $6.26\times10^{-5}$ \\
PageRank (0.7)   & 0  & 48  & 52  & $1.13\times10^{-4}$ & $2.58\times10^{-5}$ & $7.09\times10^{-5}$ \\
PageRank (0.8)   & 0  & 49  & 51  & $1.85\times10^{-4}$ & $2.51\times10^{-5}$ & $7.99\times10^{-5}$ \\
PageRank (0.85)  & 1  & 48  & 51  & $2.63\times10^{-4}$ & $2.48\times10^{-5}$ & $8.47\times10^{-5}$ \\
PageRank (0.9)   & 2  & 47  & 51  & $4.24\times10^{-4}$ & $2.44\times10^{-5}$ & $8.96\times10^{-5}$ \\
PageRank (0.95)  & 2  & 44  & 54  & $9.20\times10^{-4}$ & $2.39\times10^{-5}$ & $9.44\times10^{-5}$ \\
PageRank (0.99)  & 5  & 42  & 53  & $4.83\times10^{-3}$ & $2.29\times10^{-5}$ & $9.59\times10^{-5}$ \\
PageRank (0.999) & 7  & 42  & 51  & $3.75\times10^{-2}$ & $1.74\times10^{-5}$ & $7.39\times10^{-5}$ \\
\hline
\end{tabular}
}
\end{table}


\begin{table}[!p]
\centering
\caption{Classwise Kendall's $\tau_b$ and Pearson correlation coefficient (PCC) between PureRank and PageRank ($d$) for the \texttt{twitter\_combined} and \texttt{cit-HepPh} networks.}
\label{tab:classwise_tau_pcc}

\par\smallskip
\noindent{Panel A: Results on the \texttt{twitter\_combined} network.}
\par\smallskip

\scalebox{0.8}{%
\begin{tabular}{rcccccc}
\hline
\multirow{2}{*}{$d$} & \multicolumn{3}{c}{Kendall's $\tau_b$} & \multicolumn{3}{c}{PCC} \\
\cline{2-7}
 & Class $R$ & Class $T$ & Class $D$ & Class $R$ & Class $T$ & Class $D$ \\
\hline
0.1 & 0.609 & 0.617 & 0.634 & 0.993 & 0.796 & 0.913 \\
0.2 & 0.617 & 0.638 & 0.666 & 0.979 & 0.811 & 0.920 \\
0.3 & 0.616 & 0.659 & 0.693 & 0.956 & 0.827 & 0.928 \\
0.4 & 0.618 & 0.681 & 0.717 & 0.927 & 0.844 & 0.936 \\
0.5 & 0.624 & 0.706 & 0.741 & 0.893 & 0.862 & 0.945 \\
0.6 & 0.632 & 0.733 & 0.767 & 0.856 & 0.881 & 0.955 \\
0.7 & 0.640 & 0.765 & 0.796 & 0.818 & 0.903 & 0.965 \\
0.8 & 0.649 & 0.806 & 0.833 & 0.780 & 0.929 & 0.977 \\
0.85 & 0.651 & 0.833 & 0.856 & 0.762 & 0.944 & 0.983 \\
0.9 & 0.658 & 0.867 & 0.885 & 0.743 & 0.961 & 0.989 \\
0.95 & 0.670 & 0.915 & 0.926 & 0.725 & 0.981 & 0.996 \\
0.99 & 0.675 & 0.976 & 0.979 & 0.711 & 0.997 & 1.000 \\
0.999 & 0.672 & 0.997 & 0.998 & 0.708 & 1.000 & 1.000 \\
\hline
\end{tabular}
}

\medskip

\par\medskip
\noindent{Panel B: Results on the \texttt{cit-HepPh} network.}
\par\smallskip

\scalebox{0.8}{%
\begin{tabular}{rcccccc}
\hline
\multirow{2}{*}{$d$} & \multicolumn{3}{c}{Kendall's $\tau_b$} & \multicolumn{3}{c}{PCC} \\
\cline{2-7}
 & Class $R$ & Class $T$ & Class $D$ & Class $R$ & Class $T$ & Class $D$ \\
\hline
0.1 & 0.619 & 0.901 & 0.799 & 0.132 & 0.830 & 0.829 \\
0.2 & 0.619 & 0.913 & 0.827 & 0.158 & 0.857 & 0.869 \\
0.3 & 0.619 & 0.925 & 0.853 & 0.201 & 0.884 & 0.902 \\
0.4 & 0.619 & 0.936 & 0.877 & 0.274 & 0.910 & 0.930 \\
0.5 & 0.619 & 0.947 & 0.900 & 0.390 & 0.934 & 0.953 \\
0.6 & 0.810 & 0.958 & 0.921 & 0.555 & 0.955 & 0.971 \\
0.7 & 0.810 & 0.969 & 0.942 & 0.747 & 0.974 & 0.984 \\
0.8 & 1.000 & 0.979 & 0.962 & 0.904 & 0.988 & 0.993 \\
0.85 & 1.000 & 0.985 & 0.972 & 0.953 & 0.993 & 0.996 \\
0.9 & 1.000 & 0.990 & 0.981 & 0.982 & 0.997 & 0.998 \\
0.95 & 1.000 & 0.995 & 0.991 & 0.996 & 0.999 & 1.000 \\
0.99 & 1.000 & 0.999 & 0.998 & 1.000 & 1.000 & 1.000 \\
0.999 & 1.000 & 1.000 & 1.000 & 1.000 & 1.000 & 1.000 \\
\hline
\end{tabular}
}
\end{table}
\FloatBarrier
\clearpage
\section*{Data availability}

The datasets analyzed in this study are publicly available from the Stanford Large Network Dataset Collection.


\end{document}

\endinput